\newcolumntype{M}{>{\hfil$\displaystyle}X<{$\hfil}} \newcolumntype{L}{>{\collectcell\AddLabel}r<{\endcollectcell}}
\newcommand\AddLabel[1]{\refstepcounter{equation}(\theequation)\label{#1}}
\newtheorem{remark}{Remark}[section]
\DeclareMathOperator*{\argmax}{argmax}
\newcommand{\notshow}[1]{{}}
\newcommand{\Xcomment}[1]{{}}
\newcommand{\xmark}{\ding{55}}
\newcommand{\LF}{\textsc{Left}}
\newcommand{\RT}{\textsc{Right}}
\newcommand{\alg}{\textsf{ALG}}
\newcommand{\opt}{\textsf{OPT}}
\newcommand{\indic}{\mathbbm{1}}
\newcommand{\E}{\mathop{{}\mathbb{E}}}
\newcommand{\I}{\mathcal{I}}
\newcommand{\OPTW}{\text{OPT-}\mathcal{W}}
\newcommand{\ALGW}{\mathcal{W}}
\newcommand{\Mecha}{\mathcal{M}}
\newcommand{\ALG}{\mathrm{ALG}}
\newcommand{\OPT}{\mathrm{OPT}} 
\newcommand{\pdf}{f_N^i}
\newcommand{\emF}{\tilde{F}}
\newcommand{\given}{\;\vert\;}
\newcommand{\UpperRatio}{0.72}
\newcommand{\LowerRatio}{0.7381}
\newcommand{\defeq}{\mathrel{\overset{\makebox[0pt]{\mbox{\normalfont\tiny\sffamily def}}}{=}}}
\newcounter{inlineequation}
\renewcommand{\theinlineequation}{(\Roman{inlineequation})}
\newcommand{\inlineeq}[1]{\refstepcounter{inlineequation}\theinlineequation\ \(#1\)}
\newtheorem{definition}{Definition}[section]
\newtheorem{theorem}{Theorem}[section]
\newtheorem{lemma}{Lemma}[section]
\title{On the Optimal Fixed-Price Mechanism in Bilateral Trade}
\author{ Yang Cai\thanks{Yang Cai is supported by a Sloan Foundation Research Fellowship and the National Science Foundation Award CCF-1942583 (CAREER). Part of this work was done while the author was visiting the Simons Institute for the Theory of Computing.} \\Yale University, USA\\yang.cai@yale.edu\\
  \and Jinzhao Wu\\ Yale University, USA\\ jinzhao.wu@yale.edu
}
\begin{document}

\maketitle

\begin{abstract}
 We study the problem of social welfare maximization in bilateral trade, where two agents, a buyer and a seller, trade an indivisible item. The seminal result of Myerson and Satterthwaite~\cite{myerson_efficient_1983} shows that no incentive compatible and \emph{budget balanced} (i.e., the mechanism does not run a deficit) mechanism can achieve the optimal social welfare in bilateral trade. Motivated by this impossibility result, we focus on approximating the optimal social welfare. We consider arguably the simplest form of mechanisms -- the fixed-price mechanisms, where the designer offers trade at a fixed price to the seller and buyer. Besides the simple form, fixed-price mechanisms are also the only \emph{dominant strategy incentive compatible}  and \emph{budget balanced} mechanisms in bilateral trade~\cite{HAGERTY198794}.

We obtain improved approximation ratios of fixed-price mechanisms in both (i) the setting where the designer has the \emph{full prior information}, that is, the value distributions of both the seller and buyer; and (ii) the setting where the designer only has access to limited information of the prior. In the full prior information setting, we show that the optimal fixed-price mechanism can achieve at least $\UpperRatio$ of the optimal welfare, and no fixed-price mechanism can achieve more than $\LowerRatio$ of the optimal welfare. Prior to our result the state of the art approximation ratio was $1 - {1\over e} + 0.0001\approx 0.632$~\cite{kang2022fixed}. Interestingly, we further show that the optimal approximation ratio achievable with full prior information is \emph{identical} to the optimal approximation ratio obtainable with only \emph{one-sided prior information}, i.e., the buyer's or the seller's value distribution. As a simple corollary, our upper and lower bounds in the full prior information setting also apply to the one-sided prior information setting. 

We further consider two limited information settings. In the first one, the designer is only given the mean of the buyer's value (or the mean of the seller's value). We show that with such minimal information, one can already design a fixed-price mechanism that achieves $2/3$ of the optimal social welfare, which surpasses the previous state of the art ratio even when the designer has access to the full prior information. Furthermore, $2/3$ is the optimal attainable ratio in this setting. In the second limited information setting, we assume that the designer has access to finitely many samples from the value distributions. 
Recent results show that one can already obtain a constant factor approximation to the optimal welfare using a single sample from the seller's distribution~\cite{babaioff_bulow-klemperer-style_2020,dutting_efficient_2021, kang2022fixed}. Our goal is to understand what approximation ratios are possible if the designer has more than one but still finitely many samples. This is usually a technically more challenging regime and requires tools different from the single-sample analysis. We propose a new family of sample-based fixed-price mechanisms that we refer to as the \emph{order statistic mechanisms} and provide a complete characterization of their approximation ratios for any fixed number of samples. Using the characterization, we provide the optimal approximation ratios obtainable by order statistic mechanism for small sample sizes (no more than $10$ samples) and observe that they significantly outperform the single sample mechanism.

\end{abstract}
\thispagestyle{empty}
\addtocounter{page}{-1}
\newpage
\section{Introduction}\label{sec:Intro}
We study a fundamental problem in mechanism design -- maximizing social welfare in bilateral trade, in which two agents, a seller and a buyer, trade an indivisible item. More specifically, we consider the Bayesian setting where the seller's private value $S$ for the item that is drawn from distribution $F_S$, and the buyer's private value $B$ for the item is drawn from distribution $F_B$. The social welfare is therefore defined as ${\E_{B,S}\left[S+(B-S)\cdot x(B,S)\right]}$, where $x(B,S)$ denotes the probability that the trade happens when the seller's value is $S$ and the buyer's value is $B$.

Surprisingly, exactly maximizing the social welfare in bilateral trade is impossible. The seminal result by Myerson and Satterthwaite~\cite{myerson_efficient_1983} shows that no mechanism can \emph{simultaneously} be (i) incentive compatible (to the buyer and the seller), (ii) \emph{budget balanced}, i.e., the mechanism does not run a deficit, and (iii) maximizes the social welfare. For example, the VCG mechanism is incentive compatible and maximizes the social welfare but is not budget balanced in general.  Motivated by this impossibility result, our goal is to design incentive compatible and budget balanced mechanisms to approximate the optimal welfare. We focus on the fixed-price mechanisms, in which the designer offers trade at a fixed price to the seller and buyer. It is also known  that fixed-price mechanisms are the only \emph{dominant strategy} incentive compatible and budge balance mechanisms in bilateral trade~\cite{HAGERTY198794}.

\subsection{Our Contributions}\label{sec:contribution}
We make progress on this problem on multiple fronts. We first consider the \emph{full prior information} setting, where the designer knows both $F_S$ and $F_B$. We show how to use a \emph{factor revealing min-max program} to improve the approximation ratio achievable by a fixed-price mechanism.

\medskip\noindent
\hspace{.2in}\begin{minipage}{0.93\textwidth}\textbf{Contribution 1:} For any $F_B,F_S$, there exists a fixed-price mechanism whose welfare is at least $0.72\cdot \opt$, where $\opt=\E_{B,S} [\max\{B,S\}]$ is the optimal welfare. Moreover, there exists a $F_B$ and $F_S$ such that no fixed-price mechanism can attain welfare more than $0.7381\cdot \opt$. The formal statement of our result can be found in \Cref{thm:fullinfo}.
\end{minipage}
\medskip

We also have a ``constant time'' algorithm for computing the fixed-price mechanism that achieves the welfare guarantee above. More specifically, we construct a collection of numbers $p_1,\cdots, p_n$, so that for any $F_S$, $F_B$, our algorithm chooses the best price in the set ${\left \{p_1\cdot \opt,\cdots,p_n\cdot \opt\right \}}$. Clearly, the approximation ratio will be better when we increase $n$. We show that when $n=16$, our algorithm already computes a fixed-price mechanism that has welfare at least $0.72\cdot \opt$. 

Our result significantly improves on the state-of-the-art approximation $1-{1\over e} + 0.0001\approx 0.6322$~\cite{kang2022fixed}. Our new hardness result also strengthens the previous best bound of $0.7385$~\cite{kang2018strategy}.  Our upper and lower bounds are obtained by considering two discretized variants of an infinite-dimensional min-max optimization problem defined in \Cref{sec:characterization fullinfo}. We show in \Cref{lem:optconvergence} that, in the limit when the discretization accuracy approaches $0$, the upper bound and lower bound obtainable by our method will converge to the optimal approximation ratio. Of course, the factor-revealing program becomes more expensive to solve with finer discretization. Our upper and lower bounds are derived using the finest discretization that we can computationally solve, but one could further close the gap with more computational resources.

We next consider the case where the designer only has access to either the buyer's or the seller's value distribution. Clearly, the performance of the fixed price mechanism cannot be better than the case when the designer knows the full prior. Surprisingly, we show that the optimal fixed-price mechanism's performance in terms of the worst-case approximation ratio remains unaffected despite the absence of information from one side of the market.

\medskip\noindent
\hspace{.2in}\begin{minipage}{0.93\textwidth}\textbf{Contribution 2:} Given only one-sided prior information, i.e., $F_B$ or $F_S$, the optimal approximation ratio obtainable by a fixed-price mechanism is identical to the optimal approximation ratio obtainable given access to both $F_B$ and $F_S$. As a simple corollary, our upper and lower bounds in \Cref{thm:fullinfo} also apply to the case when the designer only has access to one-sided prior information. The formal statement is in \Cref{thm:one-sided}. 
\end{minipage}
\medskip

\paragraph{Fixed-price mechanism based on only $\E[B]$ or $\E[S]$.} Our first two results require the designer to know either both $F_B$ and $F_S$ \footnote{Our first result uses $F_B$ and $F_S$ in two places: (1) to compute $\opt$ and (2) to identify the best price in the set.} or at least one of the two distributions. However, information about the underlying distributions of the agents' values is often scarce in practice, thus it is more desirable to design approximately optimal mechanisms using only limited prior information. Our third contribution concerns the case where the designer does not have the full information of the underlying distributions but only knows the mean of $F_S$ or $F_B$.

\medskip\noindent
\hspace{.2in}\begin{minipage}{0.93\textwidth}\textbf{Contribution 3:} We provide the \emph{max-min optimal} fixed-price mechanism, when the designer only has access to $\E[B]$ or $\E[S]$. More specifically, given only $\E[B]$ (or $\E[S]$), we provide a \emph{closed-form} randomized fixed-price mechanism $\Mecha_B$ (or $\Mecha_S$) whose welfare is at least $\frac23\cdot \opt$ for any buyer value distribution with mean $\E[B]$ (or any seller value distribution with mean $\E[S]$) and seller value distribution $F_S$ (or any buyer value distribution $F_B$). Additionally, $\frac 23$ is the optimal approximation ratio obtainable by any fixed-price mechanism that only uses $\E[B]$ (or $\E[S]$). See \Cref{def:optimal mechanism using only mean} for details of $\Mecha_B$ and $\Mecha_S$, and \Cref{thm:PartialMecha} for the formal statement of our result.
\end{minipage}
\medskip

We would like to highlight that the ratio of $\frac23$  exceeds the previous state-of-the-art approximation ratio achievable in the full prior information setting. \cite{blumrosen2021almost,kang2022fixed} consider the setting where only $F_S$ is known to the designer and show that a \emph{quantile mechanism} (Mechanism~\ref{alg:quantile mech}), i.e., a fixed-price mechanism that chooses the trading price according to a distribution of quantiles of the seller's distribution, can obtain at least $1-{1\over e}\approx 0.6321$ fraction of the optimal welfare. 
\cite{kang2022fixed} further shows that no quantile mechanism can obtain more than $1-{1\over e}$ fraction of the optimal welfare in the worst case. This result is sometimes interpreted as saying no mechanism can obtain an approximation ratio better than $1-{1\over e}$ with only information about the seller's value distribution. Our second contribution (\Cref{thm:one-sided}) shows that there is a strictly better way to use the information about seller's value distribution, as the ratio should be at least $\UpperRatio$. Furthermore, our third contribution (\Cref{thm:PartialMecha}) shows that, with minimal information about $F_S$, i.e., its mean $\E[S]$, one can design a fixed-price mechanism that strictly outperforms the optimal quantile mechanism that requires the full knowledge of $F_S$. Moreover, the quantile mechanism is asymmetric and only defined when we know the seller's value distribution. We show in Theorem~\ref{thm:NonBuyerInfo} that this is unavoidable, as no quantile mechanism over buyer's value distribution can guarantee a constant fraction of the optimal welfare.\footnote{This asymmetry is due to the asymmetry of the initial allocation -- the item is owned by the seller.} In contrast, our third result holds when the designer only knows the mean of the buyer's value distribution $F_B$.

\begin{algorithm}[H]
\SetAlgorithmName{Mechanism}
~~Input: A distribution $Q$ over the interval $[0,1]$\;
Randomly choose a quantile $x\in [0,1]$ according to  $Q$\;
Output the $x$-quantile of the seller's distribution as the price. Let $F_S$ be the seller's distribution and $F_S^{-1}(\cdot)$ be seller's quantile function mapping any quantile to its corresponding value. The quantile mechanism  outputs $F_S^{-1}(x)$ as the price.
\caption{{\sf \quad Quantile mechanism.}}\label{alg:quantile mech}
\end{algorithm}

\vspace{-.1in}
\paragraph{Fixed-price mechanism using finitely many samples.} Finally, we consider a different limited information model and initiate the study of approximating the optimal social welfare in using \emph{finitely many samples}. Namely, we are given a finite and limited number of samples, e.g., $3$ or $5$ samples, and the goal is to design the best mechanism possible using these samples. It is important to distinguish this setting from the more standard \emph{large sample} setting, where the goal is to determine the number of samples needed to design a $(1-\varepsilon)$-optimal mechanism (or optimal within a certain mechanism class) grows as a function of $\frac{1}{\varepsilon}$ and other parameters of the mechanism design environment. The sample complexity in large sample settings is usually stated using the big-O notation and ignores the accompanying constant. As a result, these bounds are often vacuous when apply to the \emph{small sample} regime, where there are only a small finite number of samples available.

\medskip\noindent
\hspace{.2in}\begin{minipage}{0.93\textwidth}\textbf{Contribution 4:} We introduce a new family of mechanisms -- \emph{order statistic mechanisms} (Mechanism~\ref{alg:order statistic mech}) and provide an exact characterization of the optimal order statistic mechanisms for any fixed number of samples (\Cref{thm:symoptorderstatistic} and \Cref{thm:asymoptorderstatistic}). Using our characterization, we can compute the optimal approximation ratio obtainable for any sample size. 
\end{minipage}
\medskip

 Recent results show that one can already obtain a constant factor approximation to the optimal welfare using a single sample from the seller's distribution~\cite{babaioff_bulow-klemperer-style_2020,dutting_efficient_2021, kang2022fixed}. However, techniques from these papers are tailored to the single sample setting and are difficult to generalize to even the case when two samples are available. We provide a rich family of mechanisms that is well-defined for any number of samples and characterize their performance. Using the characterization, we manage to optimize within this family of mechanisms for any fixed number of samples.

\begin{algorithm}[H]
\SetAlgorithmName{Mechanism}
~~Input: A distribution $P$ over $[N]$\;
 Randomly choose a number $i \in [N]$ according to the distribution $P$\;
Given $N$ samples from the seller, select the $i$-th smallest sample as the price, which is the $i$-th order statistic of these samples.
\caption{{\sf \quad Order statistic mechanism with $N$ samples.}}\label{alg:order statistic mech}
\end{algorithm}

By numerically computing the optimal approximation ratios of order statistic mechanisms, we observe that the optimal order statistic mechanism with a small number of samples is usually sufficient to significantly boost the approximation ratio. For example, in the symmetric setting, i.e., $F_S=F_B$, \emph{five samples} is sufficient to obtain an approximation ratio that is within $1\%$ of the optimal ratio achievable by any fixed-price mechanism; in the asymmetric setting, i.e., $F_S\neq F_B$, the approximation ratio improves from $1/2$ to $0.578$ when the sample size increases from one to three. Another natural mechanism is the empirical risk minimization (ERM) mechanism, where one selects a price to maximize the social welfare w.r.t. the empirical distribution. We compare the performance of the optimal order statistic mechanism with ERM for sample size $N=2,3,5,10$ in the symmetric setting. In all cases, the order statistic mechanism substantially outperforms the ERM. See Table~\ref{table:symmetric} and~\ref{table:asymmetric} for our computed ratios in the symmetric and asymmetric cases respectively.

Our analysis of the order statistic mechanisms builds on an interesting connection between the order statistic mechanisms and the quantile mechanisms, that is, any order statistic mechanism is also a quantile mechanism. Note that the $i$-th order statistic over $N$ samples drawn uniformly and independently from $[0, 1]$ has density $\pdf(x) = N \binom{N-1}{i-1} \cdot x ^{i-1}\cdot(1-x)^{N-i}$. Suppose we use the $i$-th order statistic as the price, then it is equivalent to the quantile mechanism who selects a quantile corresponding to the density function $\pdf(\cdot)$. More generally, if we choose the $i$-th order statistic with probability $P_i$, then the order statistic mechanism is equivalent to the quantile mechanism that chooses the quantile according to the density function $q(\cdot) = \sum_{i=1}^N P_i \pdf(\cdot)$. With this connection, we can focus on quantile mechanisms, and we characterize the approximation ratio of any quantile mechanism as the solution of a minimization problem (Lemma~\ref{lem:asymratio}).  By applying this characterization for quantile mechanisms to order statistic mechanisms, we show that for any fixed sample size $N$, the ratio of the optimal order statistic mechanism is exactly the solution of a max-min optimization problem. 
Although the optimization problem seems intractable in general, we manage to solve it with sufficient numerical accuracy for $N\leq 10$. We only study approximating social welfare in bilateral trade in this paper, but we believe this perspective of viewing sample-based mechanisms through the lens of quantile mechanisms is novel and has broader applications, especially in the small sample regime where the designer only has access to finitely many samples.

\subsection{Related Work}
\paragraph{Gains from Trade Maximization in Two-Sided Markets.} Another important objective in two-sided markets is the gains from trade (GFT), which measures the increment of the welfare after the trade. Note that~\cite{myerson_efficient_1983} also implies that optimal GFT is not achievable in bilateral trade. There has been increasing interest from the algorithmic mechanism design community to study the approximability of the optimal GFT~\citep{blumrosen_approximating_2016,brustle_approximating_2017,colini-baldeschi_fixed_2017,babaioff_best_2018,babaioff_bulow-klemperer-style_2020,cai_multi-dimensional_2021,deng2021approximately}. It will be interesting to study the optimal approximation ratio obtainable for GFT maximization in both the full information and the limited information settings.

\vspace{-.15in}
\paragraph{Sample-Based Mechanism Design.} Sample-based mechanism design has become a central topic in algorithmic mechanism design as it provides an alternative model that weakens the classical but sometimes unrealistic Bayesian assumption. The results in this direction can be roughly partition into two groups: (1) Large sample results, where the goal is to determine the number of samples needed to design a $1-\varepsilon$-optimal mechanism (or optimal in a certain mechanism class) as a function of $\frac{1}{\varepsilon}$ and other parameters of the mechanism design environment, e.g., ~\citep{elkind_designing_2007,cole_sample_2014,mohri_learning_2014, guo_settling_2019, syrgkanis_sample_2017,morgenstern_pseudo-dimension_2015,morgenstern_learning_2016,cai_learning_2017,brustle_multi-item_2020} or (2) Single sample results, where the goal is to determine the optimal approximation ratio obtainable using a single sample, e.g.,\citep{fu_randomization_2015,dhangwatnotai_revenue_2010,goldner_prior-independent_2016,kang2022fixed,gonczarowski_sample_2018,dutting_efficient_2021}. Our result does not fit in to either of the groups. In particular, we study the regime where the designer has a small fixed number of samples, as a result, the machinery developed for large number of samples or a single sample does not apply to our setting. A recent line of works focus on the same regime as ours but for the monopolist pricing problem~\citep{babaioff_are_2018,daskalakis2020more,allouah_revenue_2021}. Due to the different nature of the studied problems, their techniques also do not apply here. \section{Preliminaries}
\label{sec:Prelim}

\paragraph{Bilateral Trade.} We study the bilateral trade problem. In this setting, there are two agents, a buyer and a seller, trade a single indivisible item. The seller owns the item and values it at $S$ while the buyer values the item at $B$. Both $S$ and $B$ are non-negative and unknown to us but they are respectively drawn from distributions $F_S$ and $F_B$ independently. We assume that $F_S$ and $F_B$ are continous distributions. Actually, such assumption is w.l.o.g. and we discuss the reduction from distributions with point masses to continous ones in Appendix~\ref{appendix:tiebreaking}.
\vspace{-.15in}
\paragraph{Fixed-price Mechanism.} We consider fixed-price mechanisms, which offer a price $p$ to trade the item. The trade happens if and only if both the seller and the buyer accept the price, i.e., $B \ge p \ge S$. As shown by \cite{HAGERTY198794}, fixed-price mechanism is the only dominant-strategy incentive-compatibility mechanism. In this paper, we consider (possibly randomized) fixed-price mechanisms. We abuse notation and use $\Mecha(F_S, F_B)$ or $\Mecha(\I)$ where $\I = (F_S, F_B)$ to denote the distribution of prices $p$ selected by mechanism $\Mecha$ on instance $\I = (F_S, F_B)$.

\vspace{-.15in}
\paragraph{Welfare and Approximation Ratio.} We consider the objective of social welfare in this paper. For an instance $\I = (F_S, F_B)$, the optimal welfare is defined as:
\begin{align*}
\OPTW(\I) = \E_{S\sim F_S, B\sim F_B}\left[\max(S, B)\right]
\end{align*}

Similarly, for a fixed-price mechanism $\Mecha$, the expected welfare on instance $\I$ can be written as:
\begin{align*}
\ALGW(\Mecha, \I) = \E_{S\sim F_{S}, B\sim F_{B} \atop p\sim \Mecha(F_S, F_B)} \left[S + \indic[S \le p \le B]\cdot (B - S) \right]	
\end{align*}
Specifically, we use $\ALG\left(q, \I\right)$ to denote the expected welfare when using a fixed price $q$.

Our goal is to maximize the approximation ratio. That is, find some mechanism $\Mecha$ maximize the following ratio.
\begin{align*}
	\min_{\I = (F_S, F_B)} \frac{\ALGW(\Mecha, \I) }{\OPTW(\I)}
\end{align*}

\vspace{-.15in}
\paragraph{Quantile Function.} Suppose $F(\cdot)$ is the c.d.f. of a distribution, and we define $F^{-1}(\cdot)$ as the quantile function mapping the quantile to its corresponding value in this distribution. That is, ${F^{-1}(x) = \inf\{y\given F(y) = x\}}$. 

\section{A Near-Optimal Mechanism in the Full Prior Information Setting}
\label{sec:fullinfo}	

In this section, we show a near-optimal fixed-price mechanism when given the full prior information of the buyer and the seller.

\begin{theorem}
\label{thm:fullinfo}
There exists a DSIC, individually rational, budget balanced mechanism that achieves at least \UpperRatio~fraction of the optimal welfare for any instance $\I = (F_S, F_B)$. Moreover, no such mechanism has an approximation ratio better than \LowerRatio.
\end{theorem}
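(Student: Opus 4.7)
The plan is to cast the optimal approximation ratio as the value of an infinite-dimensional min-max problem and then bound it from both sides by two carefully chosen finite-dimensional discretizations whose values converge as the discretization is refined. Since we may scale so that $\OPTW(\I)=1$, the quantity of interest is
\[
\alpha^* \;=\; \sup_{Q}\;\inf_{\I:\OPTW(\I)=1}\;\E_{p\sim Q}\InBrackets{\ALG(p,\I)}.
\]
Exchanging the roles of $\sup$ and $\inf$ (and appealing to a minimax-type argument once both sides are compact) reduces everything to analyzing bilinear objects, so the whole theorem becomes a matter of producing matching numerical bounds on this saddle value.

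For the $\UpperRatio$ lower bound on $\alpha^*$, I would restrict the mechanism to a randomized choice among a finite grid of prices of the form $\{p_1\cdot\OPT,\dots,p_n\cdot\OPT\}$ (the paper's construction with $n=16$). Given such a grid, the residual program
\[
\inf_{\I:\OPTW(\I)=1}\;\max_{i\in[n]} \ALG(p_i\cdot\OPT,\I)
\]
only depends on how much mass $F_S$ and $F_B$ place in each interval $[p_i,p_{i+1}]$, so one can further discretize the instance side onto the same grid and rewrite $\ALG(p_i,\I)$ as a bilinear function of two finite probability vectors. The resulting finite min-max is solved numerically; plugging the computed price vector back into the original (un-discretized) ALG expression and taking clean analytic lower bounds on each term shows that no matter how $\I$ is chosen, some $p_i\cdot\OPT$ attains ratio at least $\UpperRatio$. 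I would carry out this verification analytically using the same grid, so that the $\UpperRatio$ guarantee is not contingent on numerical error.

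For the $\LowerRatio$ upper bound, I would exhibit an explicit pair $(F_S,F_B)$ on which every fixed price (deterministic or randomized) loses at least a $1-\LowerRatio$ fraction of the optimum. Dualizing the discrete min-max above produces a candidate worst-case instance in the form of two piecewise-uniform distributions on the grid; I would lift this to a continuous bad instance and verify the bound directly by computing $\sup_p \ALG(p,\I)/\OPTW(\I)$ in closed form, showing it equals at most $\LowerRatio$. Because $\sup_p \ALG(p,\I)$ already contains the optimum over all (randomized) fixed prices by linearity, this suffices.

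The main obstacle will be the convergence and rigor, not the construction: to certify that the finite-grid bounds actually certify the analytical theorem, I must (i) prove that the discretized instance is within an explicit $o(1)$ error of the true worst-case ratio over all continuous instances (so the numerically-derived $\UpperRatio$ really is a valid analytical lower bound, not merely a bound on grid instances), and (ii) guarantee that the optimal price among $\{p_i\cdot\OPT\}$ for the lifted continuous instance loses only a controlled amount relative to the optimal price in $\R_{\ge 0}$. I would handle (i) via a Lipschitz argument showing that $\ALG(p,\I)$ is uniformly Lipschitz in the CDFs of $F_S,F_B$ (in Kolmogorov distance), and rely on this Lipschitz continuity later (in Lemma~\ref{lem:optconvergence}) to argue that the two bounds sandwich the true optimum and converge to it as the grid is refined. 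Once these continuity arguments are in place, the rest of the proof is bookkeeping on a factor-revealing program of modest but nontrivial size.
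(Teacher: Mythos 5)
Your overall route is the same as the paper's: characterize the optimal ratio by an infinite-dimensional factor-revealing program, sandwich it by two finite discretized programs (one restricting prices to a grid $\{p_i\cdot\opt\}$, one producing an explicit hard discrete instance), solve both numerically, and add a separate convergence statement. The upper-bound half of your plan is essentially the paper's Lemma~\ref{lem:fullinfoupper} (the paper keeps the hard instance discrete and shifts the seller's support by $\varepsilon$ to control ties, rather than lifting to a continuous instance, but that is cosmetic).

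The genuine gap is in how you certify the $\UpperRatio$ lower bound at a \emph{fixed} finite grid. You propose to control the discretization by showing $\ALGW(p,\I)$ is uniformly Lipschitz in the Kolmogorov distance of the CDFs and to invoke this continuity to argue the grid bound transfers to all continuous instances. That step fails as stated: the values are supported on all of $\mathbb{R}_{\geq 0}$, and both $\ALGW$ and $\OPTW$ are expectations of unbounded random variables, so two instances can be arbitrarily close in Kolmogorov distance while their welfare and optimal welfare differ by an arbitrary amount (push a small mass to a huge value). Even after truncation, a two-sided ``$o(1)$-error'' argument only yields $\UpperRatio$ minus an unquantified loss, which must then be beaten back by refining the grid---but the claimed constant is obtained at $n=16$, where such slack would destroy the bound. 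What the paper actually does (Lemma~\ref{lem:discretization}, used in Lemma~\ref{lem:fullinfolower}) is a \emph{one-sided} rounding with no error term: the mass of each interval $[p_i,p_{i+1})$ is split between its two endpoints so as to preserve the conditional mean exactly (the $s_{i,\LF},s_{i+1,\RT}$ construction), and the unbounded tail above $p_n$ is converted to mass $\E[S\cdot\indic[S\geq p_n]]/p_n$ at $p_n$ (whence the $1+1/p_n$ constraint in $\mathsf{LowerOp}$). This guarantees, as exact inequalities, that the discretized $\ALGW$ at every grid price underestimates the true $\ALGW$ and the discretized $\OPTW$ overestimates the true $\OPTW$, so any instance yields a feasible point of $\mathsf{LowerOp}$ and its optimum $r^*$ is a valid lower bound with no additive slack. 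Without this one-sided construction (or an equivalent), your plan does not deliver the stated $\UpperRatio$; the Lipschitz/limit argument belongs only to the convergence statement (Lemma~\ref{lem:optconvergence}), not to the certification of the concrete constant.
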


To prove this, we first identify the best fixed-price mechanism when given the instance $\I = (F_S, F_B)$. Then, the approximation ratio is determined by the mechanism's performance on the worst-case instance. Such a worst-case instance could be characterized by an infinite dimensional quadratically constrained quadratic program (QCQP).
However, the infinite dimensional program is hard to solve directly. Instead, we use two finite programs that can be solved numerically to upper bound and lower bound the infinite dimensional program. Additionally, we show that the optimal solutions of these two programs converge to the optimal solution of the infinite dimensional program as the number of variables tends to infinity. 

\subsection{Characterizing the Optimal Mechanism}\label{sec:characterization fullinfo}

We first characterize the optimal fixed-price mechanism via an infinite dimensional QCQP. Given any instance $\I = (F_S, F_B)$, we could assume that $\OPTW(\I) = 1$ without loss of generality since we can always scale the instance so that this is true. The optimal fixed-price mechanism corresponds to choosing a price $p \in \arg\max_p \ALGW(\I, p)$. The following program captures the worst-case instance for fixed-price mechanisms.
\begin{center}
\begin{tabular}{|c|}
\hline 
The Optimization Problem $\mathsf{FullOp}$ \\
\hline
\parbox{15cm}{
\begin{align}
\min_{\mu,\nu, r}\quad r \nonumber \\
\textsf{s.t.} \quad  & \mu,\nu \text{ are probability measures defined on $\mathbb{R}_{\geq 0}$}\\
& \OPTW(\I) \defeq \int_{\mathbb{R}_{\geq 0}} \int_{\mathbb{R}_{\geq 0}} \max(x,y)\, \nu(d y) \, \mu (d x)  \geq 1& \nonumber    \\
& \ALGW(\I, t) \defeq \int_{\mathbb{R}_{\geq 0}} x\, \mu(dx)+  \int_{\mathbb{R}_{\geq 0}} \int_{\mathbb{R}_{\geq 0}} (y-x)\cdot \indic[x\leq t\leq y]\, \nu(dy)\, \mu(dx) \leq r & \forall t \geq 0 \nonumber
\end{align}
}\\
\hline 
\end{tabular}
\end{center}
\begin{lemma}
\label{lem:fullinfo}
The value of the optimal solution of $\mathsf{FullOp}$ is the \emph{tight} worst-case approximation ratio achievable by a fixed-price mechanism.
\end{lemma}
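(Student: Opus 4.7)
The plan is to show that $\mathsf{FullOp}$ is a literal re-encoding of the min--max game in which an adversary picks the instance $\I$ to minimize the approximation ratio of the best fixed-price mechanism under full prior information. First I would identify the decision variables of $\mathsf{FullOp}$ with the primitives of the bilateral-trade instance: $\mu$ plays the role of $F_S$ and $\nu$ plays the role of $F_B$, so the first integral constraint is literally $\OPTW(\I) \geq 1$ and the family of constraints indexed by $t \geq 0$ is literally $\ALG(t, \I) \leq r$. Since $r$ must dominate $\ALG(t, \I)$ for every $t \geq 0$, at the optimum $r = \sup_{t \geq 0} \ALG(t, \I)$. Any randomized fixed-price mechanism $\Mecha(\I)$ achieves welfare $\E_{p \sim \Mecha(\I)}[\ALG(p, \I)] \leq \sup_{t \geq 0} \ALG(t, \I)$ by linearity of expectation, and conversely this supremum is attained by some deterministic price, because under the paper's continuous-distribution assumption $\ALG(t, \I)$ is continuous in $t$ and tends to $\E[S]$ as $t \to \infty$, so the supremum is a maximum on a bounded interval. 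Hence $r$ is exactly the welfare of the designer's best fixed-price response on $\I$.

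Next I would invoke scale invariance: multiplying all seller and buyer values by a common factor $c > 0$ scales both $\OPTW(\I)$ and $\sup_{t \geq 0} \ALG(t, \I)$ by $c$, so their ratio is unchanged. Any feasible point of $\mathsf{FullOp}$ with $\OPTW(\I) > 1$ can therefore be rescaled to make the first constraint tight while proportionally decreasing the objective $r$, so the optimum of $\mathsf{FullOp}$ is achieved on the slice $\OPTW(\I) = 1$. Combined with the previous paragraph, the optimal value of $\mathsf{FullOp}$ equals
\[
\inf_{\I:\, \OPTW(\I) = 1}\; \sup_{t \geq 0} \ALG(t, \I) \;=\; \inf_{\I}\, \frac{\sup_{t \geq 0} \ALG(t, \I)}{\OPTW(\I)},
\]
which is precisely the tight worst-case approximation ratio of the optimal fixed-price mechanism with full prior information. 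Tightness goes in both directions: every feasible triple $(\mu, \nu, r)$ exhibits an instance on which every fixed-price mechanism has welfare at most $r$, and every instance $\I$ yields a feasible triple with $r$ equal to its own ratio, so the two extremal values coincide exactly.

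The main obstacle is handling the case where the extrema are not attained. The continuity assumption on $F_S, F_B$ stated in Section~\ref{sec:Prelim} makes $\ALG(t, \I)$ continuous in $t$, so the inner supremum is actually a maximum and the designer really has a best price; the outer infimum over $\I$ need not be attained, but this is immaterial for the lemma, which claims only equality of optimal values and not the existence of a worst-case instance. A minor technical point is to verify that the rescaling step preserves membership in the class of continuous distributions, which is immediate because scaling is a homeomorphism on $\mathbb{R}_{\geq 0}$.
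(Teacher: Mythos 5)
Your proposal is correct and follows essentially the same route as the paper's proof: both directions of the equality are obtained by identifying the optimal $r$ with the best fixed-price welfare $\sup_{t}\ALGW(\I,t)$ and using scale invariance to normalize $\OPTW(\I)=1$, so feasible triples of $\mathsf{FullOp}$ and bilateral-trade instances correspond with matching values. The only cosmetic difference is that you spell out the reduction from randomized to deterministic prices and the (not strictly needed) attainment of the supremum over prices, which the paper leaves implicit.
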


The proof of Lemma~\ref{lem:fullinfo} is postponed to Appendix~\ref{appendix:prooffullinfo}. Since it is difficult to directly solve an infinite dimensional program like $\mathsf{FullOp}$, we approximate $\mathsf{FullOp}$ from both above and below by constructing two families of finite programs which provide an upper bound and a lower bound respectively. 

\subsection{Factor Revealing Program for the Approximation Ratio under Full Prior Information}

We show that the approximation ratio of the optimal fixed price mechanism is at least $\UpperRatio$, which significantly improves the previous state of the art bound of $1-1/e+\varepsilon$ with $\varepsilon\approx 10^{-4}$.  Our approach is to find a fixed-price mechanism whose performance under the worst distribution is maximized. This is exactly captured by the optimization problem $\mathsf{FullOp}$. However, it is an infinite-dimensional program. In this section, we consider a discretized version of $\mathsf{FullOp}$. {More specifically, we assume that $\OPTW(\I) = 1$,  and we restrict the mechanism to only choose price from a finite set $P=\{p_1,p_2,\ldots, p_n\}$. What we manage to show is that the optimal value of the optimization problem $\mathsf{LowerOp}$ is indeed a lower bound on the maximum approximation ratio one can obtain using prices from $P$ for instance $\I$. We establish the following two crucial properties: (i) For any $\I=(F_S,F_B)$ satisfying $\OPTW(\I) = 1$, we can carefully round $F_S$ and $F_B$ to two discrete distributions supported on $P$, where $\{s_1,\ldots, s_n\}$ and $\{b_1,\ldots, b_n\}$ can be viewed as the corresponding ``probability mass function'' for the discretized distributions of the seller and the buyer.\footnote{For technical reasons, $\{s_1,\ldots, s_n\}$ and $\{b_1,\ldots, b_n\}$ do not exactly correspond to probability mass functions, but viewing them as the probability mass functions gives the right intuition.} Importantly, $\{s_1,\ldots, s_n\}$ and $\{b_1,\ldots, b_n\}$ satisfy inequalities~\eqref{eq:LowerOp1} -~\eqref{eq:LowerOp4}. (ii) For any price $p_t$, the welfare from the corresponding fixed-price mechanism under $\I$ is at least the welfare under the rounded distributions $\sum_{i=1}^n s_i p_i+\sum_{i=1}^{t - 1} \sum_{j=t+1}^{n} s_i b_j (p_j - p_i)$. Therefore, if we choose $r$ to be $\max_{t\in [n]} \ALGW(\I, p_t)$, $\{s_1,\ldots, s_n\}$,  $\{b_1,\ldots, b_n\}$, and $r$ form a feasible solution of $\mathsf{LowerOp}$, which implies that the optimal value of $\mathsf{LowerOp}$ is no greater than the constructed $r$. As the rounded distribution needs to satisfy a sequence of constraints (especially constraint~\eqref{eq:LowerOp4}), the procedure we use to round $F_S$ and $F_B$ is subtle and does not simply round things up or down. See Appendix~\ref{appendix:prooffullinfolower} for details.}

\begin{center}
\begin{tabular}{|c|}
\hline 
The Optimization Problem $\mathsf{LowerOp}$ \\
\hline
\parbox{15cm}{
\begin{align}
\min_{s_1,s_2\cdots, s_n\atop b_1,b_2,\cdots,b_n, r} \quad r \nonumber \\
\textsf{s.t.} \quad  & s_i, b_i \geq 0 & \forall i \in [n]\label{eq:LowerOp1}\\
& \sum_{i=1}^n s_i \geq 1 \quad \text{ {and} } \quad  \sum_{i=1}^n b_i \geq 1 \label{eq:LowerOp2}\\
& \sum_{i=1}^n s_i \leq 1 + \frac{1}{p_n} \quad \text{ {and} } \quad  \sum_{i=1}^n b_i \leq 1 + \frac{1}{p_n} & \label{eq:LowerOp3}    \\
& \sum_{i=1}^n \sum_{j=1}^n s_i b_j \max(p_i,p_j) \geq 1 & \label{eq:LowerOp4}    \\
& \sum_{i=1}^n s_i p_i+\sum_{i=1}^{t - 1} \sum_{j=t+1}^{n} s_i b_j (p_j - p_i) \leq r & \forall t \in [n] \label{eq:LowerOp5}
\end{align}
}\\
\hline 
\end{tabular}
\end{center}
\begin{lemma}
\label{lem:fullinfolower}
For any $0 = p_1 < p_2 < \cdots < p_n$, let $r^{*}$ be the optimal value of~$\mathsf{LowerOp}$. Suppose $M$ is the mechanism that chooses the best price from the set $\big\{p_1\cdot \E[\max(S, B)],p_2\cdot \E[\max(S,B)],\cdots, p_n\cdot \E[\max(S, B)]\big\}$ to maximize the welfare. The welfare obtained by $M$ is at least $r^{*}\cdot \opt$.
\end{lemma}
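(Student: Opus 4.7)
By homogeneity of $\mathsf{LowerOp}$ in $\opt$ and of the statement itself, I rescale so that $\E[\max(S,B)]=1$. Since $M$ selects the best price in $\{p_1,\ldots,p_n\}$, we have $\ALGW(M,\I)=\max_{t\in[n]}\ALGW(\I,p_t)$. So it suffices to exhibit, for every instance $\I$, a feasible point $(s_1,\ldots,s_n,b_1,\ldots,b_n,r)$ of $\mathsf{LowerOp}$ with objective $r=\max_t \ALGW(\I,p_t)$; this forces $r^*\leq \ALGW(M,\I)$, which is the claim.

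\textbf{Rounding scheme.} For a draw $S\sim F_S$: if $S\in[p_i,p_{i+1})$ with $i<n$, split a unit of weight between indices $i$ and $i+1$, with weight $(p_{i+1}-S)/(p_{i+1}-p_i)$ at $i$ and $(S-p_i)/(p_{i+1}-p_i)$ at $i+1$ (a mean-preserving randomization onto two adjacent grid points); if $S\geq p_n$, place a single super-unit atom of weight $S/p_n$ at index $n$. Set $s_i=\E_S[\text{weight placed at }i]$ and define $b_j$ analogously from $F_B$. This is the ``subtle'' rounding the excerpt alludes to: it is neither purely up nor purely down, and it deliberately allows total mass slightly above $1$ via the tail atom.

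\textbf{Verifying (\ref{eq:LowerOp1})--(\ref{eq:LowerOp4}).} Non-negativity is immediate. For the mass constraints, $\sum_i s_i=\Pr[S<p_n]+\E[(S/p_n)\indic[S\geq p_n]]$; the tail summand lies in $[\Pr[S\geq p_n],\,1/p_n]$ using $\E[S]\leq\E[\max(S,B)]=1$, which gives $1\leq\sum_i s_i\leq 1+1/p_n$ (and analogously for the $b_j$'s). For (\ref{eq:LowerOp4}), view $\sum_{i,j}s_ib_j\max(p_i,p_j)$ as $\E_{S,B}\E_{\tilde S,\tilde B}[\max(\tilde S,\tilde B)]$, where $\tilde S,\tilde B$ are drawn independently from the (possibly super-unit-mass) rounding kernels above. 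When $S,B<p_n$, mean preservation plus convexity of $\max$ in each coordinate gives $\E[\max(\tilde S,\tilde B)\mid S,B]\geq\max(S,B)$ by Jensen; when $S\geq p_n$, the factor $S/p_n\geq 1$ yields $(S/p_n)\max(p_n,\tilde B)\geq\max(S,\tilde B)$ pointwise, and symmetrically for $B\geq p_n$ (with $(S/p_n)(B/p_n)p_n\geq\max(S,B)$ in the doubly-tail case). Integrating recovers $\sum_{i,j}s_ib_j\max(p_i,p_j)\geq\E[\max(S,B)]=1$.

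\textbf{Main obstacle: (\ref{eq:LowerOp5}).} What remains is to prove $\sum_i s_ip_i+\sum_{i<t,\,j>t}s_ib_j(p_j-p_i)\leq\ALGW(\I,p_t)$ for every $t$. The first sum equals $\E[S]$ because mean preservation gives $\sum_i \rho_S(i)\,p_i = S$ in the interior and the tail contributes $(S/p_n)\cdot p_n=S$, so it suffices to bound the trade term by the true surplus $\E[(B-S)\indic[S\leq p_t\leq B]]$. The strict inequalities $i<t$ and $j>t$ are the critical feature of the construction: by the local structure of the rounding, $\tilde S\leq p_{t-1}$ forces $S\leq p_t$ and $\tilde B\geq p_{t+1}$ forces $B\geq p_t$, so every rounded trade is realized by a genuine trade in $\I$. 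Conditioning on $(S,B)$ with $S\leq p_t\leq B$ and splitting by whether $S\in[p_{t-1},p_t]$, whether $B\in[p_t,p_{t+1}]$, and whether either value exceeds $p_n$, each sub-case reduces to an elementary inequality; the tightest one is $uv(\alpha+\beta)\leq(u+v)\alpha\beta$ for $u\in[0,\alpha],v\in[0,\beta]$, which after the substitution $x=u/\alpha,y=v/\beta$ becomes $x\alpha(y-1)+y\beta(x-1)\leq 0$ and holds since $x,y\in[0,1]$. Coordinating this case analysis with the super-unit tail treatment and the strict index conditions is the main technical obstacle; once it is carried out, the constructed tuple is feasible with objective $\max_t\ALGW(\I,p_t)$, yielding $r^*\leq\max_t\ALGW(\I,p_t)=\ALGW(M,\I)$, as required.
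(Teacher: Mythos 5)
Your construction is the paper's own: after rescaling so that $\OPT=1$, you round each value by a mean-preserving split onto the two adjacent grid points and place tail mass $S/p_n$ (resp.\ $B/p_n$) at $p_n$, which is exactly the $\{s_{i,\LF},s_{i,\RT}\}$ construction of Lemma~\ref{lem:discretization}, and you then reduce the claim to exhibiting a feasible point of $\mathsf{LowerOp}$ with objective $\max_{t}\ALGW(\I,p_t)$, precisely as in Appendix~\ref{appendix:prooffullinfolower}. Your verification of the mass constraints and of \eqref{eq:LowerOp4} coincides in substance with the paper's (your coordinate-wise Jensen step is the paper's explicit two-point convexity inequality, and your tail factors $S/p_n,B/p_n\geq 1$ reproduce \eqref{eq:FullVeri3}). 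The genuine difference is how you handle the trade constraint \eqref{eq:LowerOp5}: you argue conditionally on $(S,B)$, using the correct observation that rounded weight at an index $i<t$ forces $S<p_{i+1}\leq p_t$ and weight at $j>t$ forces $B\geq p_{j-1}\geq p_t$, and then bound the conditional rounded surplus by $B-S$ through a case analysis, of which you only prove the tightest case ($S\in[p_{t-1},p_t)$, $B\in[p_t,p_{t+1})$). The paper instead proves \eqref{eq:discretization7} in aggregate: it writes $\E[(B-S)\indic[S<p_t\leq B]]$ as $\E[B\indic[B\geq p_t]]\Pr[S<p_t]-\E[S\indic[S<p_t]]\Pr[B\geq p_t]$, identifies these factors exactly with $\sum_{j>t}b_jp_j+b_{t,\LF}p_t$, $\sum_{i<t}s_i+s_{t,\RT}$, $\sum_{i<t}s_ip_i+s_{t,\RT}p_t$ (and bounds $\Pr[B\geq p_t]\leq\sum_{j>t}b_j+b_{t,\LF}$), then drops two non-negative cross terms; this sidesteps the case analysis you call the main obstacle. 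Your route does go through — I checked the sub-cases you leave implicit (e.g.\ $S\in[p_{t-1},p_t)$ with $B\geq p_{t+1}$, and the tail cases where the factor $B/p_n\geq 1$ only helps, while $S\geq p_n$ contributes nothing to indices $i<t$) and each reduces to an inequality no harder than the one you prove — but as written those cases are asserted rather than carried out, so to make the argument self-contained you should either enumerate them or switch to the paper's product manipulation, which finishes in three lines.
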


We defer the proof of the lemma to Appendix~\ref{appendix:prooffullinfolower}.

\subsection{Hardness Result under Full Prior Information}

In this section, our goal is to find a threshold and an instance such that no fixed-price mechanism has an approximation ratio better than the threshold on this instance. We focus on discrete distributions and consider an instance $\I = (F_S, F_B)$ where $F_S$ is a discrete distribution supported on $\{p_1+\varepsilon,p_2+\varepsilon,\cdots,p_n+\varepsilon\}$, and  $F_B$ is a discrete distributions supported on $P=\{p_1,p_2,\cdots, p_n\}$ where $\varepsilon > 0$ is a small enough constant. For such instance, the optimal price must also lie in the set $\{p_i + \varepsilon\}_{i\in [n]}$, as choosing a price $x$ where $p_{i} + \varepsilon \le x < p_{i+1} + \varepsilon$ is equivalent to choosing a price of $p_{i} + \varepsilon$ . Therefore, any valid solution for the optimization problem below corresponds to a hard instance.

\begin{lemma}\label{lem:fullinfoupper}
	For any valid solution $(s_1,s_2,\cdots, s_n, b_1,b_2,\cdots, b_n, r)$ of~$\mathsf{UpperOp}$ (defined in \Cref{appendix:prooffullinfoupper}) satisfying $r = \max_{t\in [n]} \sum_{i=1}^n s_i p_i + \sum_{i=1}^t\sum_{j=t+1}^n s_ib_j(p_j - p_i)$ and $\varepsilon > 0$, there exists an instance $\I = (F_S, F_B)$ such that no fixed-price mechanism can achieve more than $\left(r+\varepsilon\right)$-fraction of the optimal welfare on this instance.
\end{lemma}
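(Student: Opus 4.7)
The plan is to extract an explicit hard instance $\I=(F_S,F_B)$ directly from the given solution $(s_1,\ldots,s_n,b_1,\ldots,b_n,r)$ of $\mathsf{UpperOp}$, and then verify that no fixed price outperforms $(r+\varepsilon)\cdot\OPTW(\I)$ on it. First I would let $F_S$ place probability mass $s_i$ on the point $p_i+\varepsilon$ for each $i\in[n]$, and let $F_B$ place mass $b_j$ on $p_j$; the feasibility constraints of $\mathsf{UpperOp}$ (analogous to \eqref{eq:LowerOp1}--\eqref{eq:LowerOp4}) will give that $\{s_i\},\{b_j\}$ define valid probability distributions and that $\sum_{i,j}s_ib_j\max(p_i,p_j)\geq 1$, which immediately yields $\OPTW(\I)=\sum_{i,j}s_ib_j\max(p_i+\varepsilon,p_j)\geq 1$.

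The key combinatorial observation is that the offset $\varepsilon$ cleanly separates the seller's and buyer's supports, so the welfare of any fixed price $p$ is determined by the two index sets describing which types accept. Concretely, I would argue that for $p\in[p_t+\varepsilon,\,p_{t+1}+\varepsilon)$ the set of accepting sellers is exactly $\{i:i\leq t\}$, while the set of accepting buyers is a (possibly strict) subset of $\{j:j\geq t+1\}$, with equality precisely when $p\leq p_{t+1}$. Because each realized trade contributes $B-S$ independently of $p$, this shows that the welfare at any $p$ in this interval is at most the welfare at $p_t+\varepsilon$; hence the supremum of $\ALG(\cdot,\I)$ over \emph{all} prices is attained in the finite set $\{p_t+\varepsilon:t\in[n]\}\cup\{\text{no trade}\}$.

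It remains to upper bound $\ALG(p_t+\varepsilon,\I)$ by $r+\varepsilon$. Expanding,
\begin{align*}
\ALG(p_t+\varepsilon,\I)
&=\sum_{i=1}^n s_i(p_i+\varepsilon)+\sum_{i=1}^{t}\sum_{j=t+1}^n s_ib_j\bigl(p_j-(p_i+\varepsilon)\bigr)\\
&=\sum_{i=1}^n s_ip_i+\sum_{i=1}^{t}\sum_{j=t+1}^n s_ib_j(p_j-p_i)+\varepsilon\Bigl(\sum_{i=1}^n s_i-\sum_{i\leq t,\,j\geq t+1}s_ib_j\Bigr),
\end{align*}
and I would use the hypothesized identity $r=\max_{t\in[n]}\bigl[\sum_i s_ip_i+\sum_{i=1}^{t}\sum_{j=t+1}^n s_ib_j(p_j-p_i)\bigr]$ together with $\sum_i s_i\leq 1$ (from feasibility) to bound this by $r+\varepsilon$. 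The no-trade option gives $\E[S]\leq\sum_i s_ip_i+\varepsilon\leq r+\varepsilon$, which is handled identically. Combining with $\OPTW(\I)\geq 1$ yields that every deterministic fixed price, hence by linearity every randomized fixed-price mechanism, achieves at most $(r+\varepsilon)\cdot\OPTW(\I)$.

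The main obstacle I anticipate is the careful handling of the boundary subinterval $p\in[p_{t+1},\,p_{t+1}+\varepsilon)$, where the accepting buyer set drops to $\{j\geq t+2\}$: one must verify that this strictly shrinks the welfare relative to $p_t+\varepsilon$ and therefore need not be considered separately. The rest is elementary algebra, but aligning the constraints of $\mathsf{UpperOp}$ with the $+\varepsilon$ shift so that the cumulative slack remains exactly $\varepsilon$ (rather than some larger constant times $\varepsilon$) is the technical heart of the argument, and it is what ultimately justifies passing $\varepsilon\to 0$ in the companion hardness numerical computation.
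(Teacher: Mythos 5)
Your proposal is correct and follows essentially the same route as the paper's proof: the identical instance (seller on $\{p_i+\varepsilon\}$ with masses $s_i$, buyer on $\{p_j\}$ with masses $b_j$), the same reduction of the optimal price to the finite set $\{p_t+\varepsilon\}_{t\in[n]}$, the same expansion of $\ALGW(\I,p_t+\varepsilon)$ bounded by $r+\varepsilon$ using $\sum_i s_i=1$, and the same use of $\OPTW(\I)\geq 1$ to conclude. The boundary case you flag is handled in the paper by the same observation that rounding the price down to $p_t+\varepsilon$ never decreases welfare, so no additional argument is needed.
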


The proof of Lemma~\ref{lem:fullinfoupper} is deferred to Appendix~\ref{appendix:prooffullinfoupper}.

\begin{proof}[Proof of Theorem~\ref{thm:fullinfo}]
	
	With Lemma~\ref{lem:fullinfolower} and Lemma~\ref{lem:fullinfoupper}, we are now ready to prove Theorem~\ref{thm:fullinfo}. For the numerical results, our anonymous \href{https://github.com/BilateralTradeAnonymous/On-the-Optimal-Fixed-Price-Mechanism-in-Bilateral-Trade}{GitHub repository}(\texttt{https://github.com/BilateralTradeAnonymou\\s/On-the-Optimal-Fixed-Price-Mechanism-in-Bilateral-Trade}) provides all the certificates and codes and also carefully explains all the details.
		
	For the lower bound, we choose $n = 16$. Using Gurobi \cite{gurobi}, we obtain a lower bound of \UpperRatio~for the optimization problem $\mathsf{LowerOp}$ for a carefully chosen set of price $\{p_1,p_2,\cdots, p_n\}$.\footnote{We choose $\{p_1,p_2,\cdots, p_{16}\}$ to be $\{0.0, 0.1, 0.19, 0.27, 0.315, 0.355, 0.395, 0.44, 0.485, 0.535, 0.595, 0.665, 0.74, 0.875, 1.195, 1000.0\}$ to derive the $\UpperRatio$. These numbers are chosen heuristically to provide good coverage between $0.3$ to $0.5$, which is the region with concentration of probability mass in some bad instances we encounter.}  Therefore, by Lemma~\ref{lem:fullinfolower}, there exists a \UpperRatio-approximate fixed-price mechanism.
	
	Things become much easier for the upper bound since we only need to find a feasible solution instead of proving a lower bound of the optimal value. We choose $n = 100$ and numerically solve $\mathsf{UpperOp}$ with a specific support $\{p_1,p_2,\cdots, p_n\}$ and find a feasible solution that satisfies the constraints in Lemma~\ref{lem:fullinfoupper} where $r \leq \LowerRatio$. Together with Lemma~\ref{lem:fullinfoupper}, we then find a hard instance such that no fixed-price mechanism attains a \LowerRatio-approximation of the optimal welfare.  Please check our \href{https://github.com/BilateralTradeAnonymous/On-the-Optimal-Fixed-Price-Mechanism-in-Bilateral-Trade}{GitHub repository} for the detailed specification of the distributions.

\end{proof}

Finally, we would like to point out that the optimal value obtained by $\mathsf{LowerOp}$ and $\mathsf{UpperOp}$ will converge to the optimal value as the discretization accuracy tends to $0$. 

\begin{lemma}
\label{lem:optconvergence}
Let $r^*$ be the optimal value of $\mathsf{FullOp}$, i.e. the optimal approximation ratio. For any $\varepsilon>0$, there exists two sets numbers $0=p_1 < p_2 < \cdots < p_n$ and $\{p_1',p_2',\cdots, p_{n'}'\}$ such that the optimal value of $\mathsf{LowerOp}$ with respect to $\{p_1,p_2,\cdots, p_n\}$ is at least $r^{*} - \varepsilon$ and the optimal value of $\mathsf{UpperOp}$ w.r.t. $\{p_1',p_2',\cdots,p_{n'}'\}$ is at most $r^{*} + \varepsilon$.
\end{lemma}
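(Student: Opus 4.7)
The plan is to prove the two halves of the lemma separately by approximating continuous objects with discrete ones on increasingly fine grids. For the lower-bound side I want to show that for a sufficiently fine grid $P$ the optimal value of $\mathsf{LowerOp}$ is at least $r^{*}-\varepsilon$, and for the upper-bound side that for a sufficiently fine grid $P'$ the optimal value of $\mathsf{UpperOp}$ is at most $r^{*}+\varepsilon$. Both directions begin with a bounded-support reduction: since $\OPTW=1$ in the normalized $\mathsf{FullOp}$, for any $\eta>0$ there is a truncation level $K$ such that moving mass from $[K,\infty)$ to $K$ changes $\OPTW$ and the welfare of any fixed price by at most $\eta$, and one may therefore restrict attention to instances supported on $[0,K]$.

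For $\mathsf{LowerOp}$ I would take an \emph{arbitrary} feasible solution $(s_1,\ldots,s_n,b_1,\ldots,b_n,r)$ with respect to a grid $P=\{p_1,\ldots,p_n\}$ of mesh size at most $\delta$ and very large $p_n$, and interpret $s$ and $b$, after normalization, as a discrete instance $\I_{s,b}$ supported on $P$. By the definition of $r^{*}$ there is a price $p^{\star}\in\mathbb{R}_{\ge 0}$ whose true welfare on $\I_{s,b}$ is at least $r^{*}\cdot\OPTW(\I_{s,b})$. The crucial step is then to show that rounding $p^{\star}$ to the nearest grid point $p_t$ and replacing the true welfare by the $\mathsf{LowerOp}$ expression in~\eqref{eq:LowerOp5} loses at most $O(\delta+1/p_n)$. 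The two correction terms are (i) the normalization gap $\sum s_i,\sum b_i\in[1,1+1/p_n]$ controlled by~\eqref{eq:LowerOp2} and~\eqref{eq:LowerOp3}, and (ii) the boundary-tie contributions $s_t\sum_{j>t}b_j(p_j-p_t)+b_t\sum_{i<t}s_i(p_t-p_i)$ that distinguish the true welfare at $p_t$ from the $\mathsf{LowerOp}$ welfare, whose aggregate across $t$ is controlled by the bilinear quantity $\sum_{i,j} s_i b_j\max(p_i,p_j)$ appearing in~\eqref{eq:LowerOp4}. Taking $\delta$ and $1/p_n$ small enough then yields $r\ge r^{*}-\varepsilon$ uniformly in the chosen feasible solution, hence the optimal value of $\mathsf{LowerOp}$ is at least $r^{*}-\varepsilon$.

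For $\mathsf{UpperOp}$ I would instead start from an $(\varepsilon/2)$-approximately worst-case measure pair $(\mu^{*},\nu^{*})$ of $\mathsf{FullOp}$, truncate both measures to $[0,K]$ via the reduction above, and project onto a fine grid $P'$ by rounding seller values up and buyer values down so that at every fixed price the trade region only shrinks. I would then argue via a standard perturbation bound that the best-in-$\mathbb{R}_{\ge 0}$ welfare on the rounded instance differs from that on $(\mu^{*},\nu^{*})$ by at most $O(\delta+\eta)$, since the total displaced mass is $1$ and each point is displaced by at most $\delta$ over a support of diameter $K$. After a harmless renormalization to restore $\OPTW\ge 1$, the resulting discrete pair directly furnishes a feasible solution of $\mathsf{UpperOp}$ with $r\le r^{*}+\varepsilon$.

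The main obstacle is the lower-bound direction, because the argument must be uniform over \emph{every} feasible $(s,b,r)$ rather than a single constructed witness. The boundary-tie discrepancy between the $\mathsf{LowerOp}$ formula and the true welfare can in principle be large at any single index $t$, so the plan has to exploit that its total across $t$ is proportional to a bilinear mass already constrained by~\eqref{eq:LowerOp4}. Simultaneously, the normalization slack in~\eqref{eq:LowerOp3} must be controlled so that $\OPTW(\I_{s,b})\ge 1-O(1/p_n)$; harmonizing these two sources of error into a single clean bound that is uniform over feasible solutions is the main technical hurdle.
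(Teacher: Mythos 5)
Your $\mathsf{UpperOp}$ half follows essentially the paper's route (discretize a near-worst-case instance on a fine grid so that the constraint $\sum_{i,j}s_ib_j\max(p_i,p_j)\ge 1$ survives the rounding, and absorb the $O(\delta+\eta)$ loss into $\varepsilon$), and it is fine; if anything, working with an approximate minimizer is slightly more careful than the paper, which assumes the worst case is attained. The $\mathsf{LowerOp}$ half, however, has a genuine gap exactly at the step you flag. The inequality you need --- that at the index $t$ to which the best price of $\I_{s,b}$ rounds, the expression $\sum_i s_ip_i+\sum_{i<t}\sum_{j>t}s_ib_j(p_j-p_i)$ from \eqref{eq:LowerOp5} is within $O(\delta+1/p_n)$ of the true welfare of $\I_{s,b}$ at that price --- is false, and \eqref{eq:LowerOp4} cannot control the error: it is a \emph{lower} bound on the bilinear mass, so it caps nothing, and the relevant quantity is the discrepancy at one specific $t$, not an aggregate, and it does not shrink with the mesh. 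Concretely, take $a=2/3$, a grid containing $0,a,2a$, seller mass $1/2$ at $0$ and $1/2$ at $a$, buyer mass $1/2$ at $a$ and $1/2$ at $2a$. This satisfies \eqref{eq:LowerOp1}--\eqref{eq:LowerOp4} (the bilinear sum is $1.5a=1$), and $\I_{s,b}$ has best fixed-price ratio $1$ (price $a$ gives welfare $\E[B]=1=\OPTW$), yet $\max_t\bigl(\sum_i s_ip_i+\sum_{i<t}\sum_{j>t}s_ib_j(p_j-p_i)\bigr)=\tfrac{5a}{4}=\tfrac56$, a gap of $1/6$ no matter how fine the grid or how large $p_n$ is. The reason is structural: the $\mathsf{LowerOp}$ formula excludes both the seller atom and the buyer atom sitting at the price point itself, while a price placed at a heavy atom of $\I_{s,b}$ captures exactly that trade.

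What the lemma needs --- and what the paper supplies in Lemma~\ref{lem:generateinstance} --- is a comparison not with $\I_{s,b}$ but with a \emph{perturbed} instance built so that no fixed price beats the $\mathsf{LowerOp}$ formula by more than $O(\varepsilon)$: shift the seller's support up by $\varepsilon^4$ relative to the buyer's grid (so a price capturing the buyer atom at $p_t$ cannot simultaneously capture seller mass placed at $p_t+\varepsilon^4$), and smear each seller mass over $\lceil 4/\varepsilon\rceil$ adjacent grid cells so every seller atom has mass at most $\varepsilon/3$; the leftover boundary term $s'_t\sum_{j>t}b_j(p_j-p_t)$ is then at most $(\varepsilon/3)\sum_j b_jp_j$, i.e.\ $O(\varepsilon)$ times the bilinear OPT quantity. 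Applying the definition of $r^*$ to this perturbed instance (via contradiction with the optimal $\mathsf{LowerOp}$ solution) yields $r\ge r^*-\varepsilon$. Without some such smoothing-and-shifting device, your direct rounding argument cannot close what is a $\Theta(1)$ boundary discrepancy.
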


The proof of Lemma~\ref{lem:optconvergence} is deferred to Appendix~\ref{appendix:proofconvergence}

\section{One-Sided Prior Information}
\label{subsec:one-sided}
We discuss the setting where we only have access to either the seller's or buyer's distribution in this section. We show that the optimal approximation ratio achievable with full prior information is identical to the optimal approximation ratio obtainable when only one-sided prior information is known.

\begin{theorem}\label{thm:one-sided}
Let $r^{*}$ be the approximation ratio of the optimal fixed-price mechanism in the full prior information setting. The optimal mechanism with only access to the value distribution of the buyer (or the seller) has exactly the same approximation ratio $r^{*}$.
\end{theorem}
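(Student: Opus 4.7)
The inequality $r^*_{\text{one-sided}} \le r^*$ is immediate: any fixed-price mechanism that depends only on $F_S$ is a special case of one that sees both distributions. The substantive direction is $r^*_{\text{one-sided}} \ge r^*$, which I will establish by a minimax duality argument applied separately for each fixed seller distribution (a symmetric argument handles the buyer side).

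\textbf{Game-theoretic reformulation.} Fix an arbitrary seller distribution $F_S$. View the designer as choosing a probability measure $Q$ over prices and the adversary as choosing a buyer distribution $F_B$ subject to $\OPTW(F_S, F_B) = 1$ (normalization is without loss by scaling). The payoff is
\begin{equation*}
\Phi(Q, F_B) \defeq \E_{p \sim Q}\!\bigl[\ALGW(F_S, F_B, p)\bigr] = \int x\,dF_S(x) + \int\!\!\int\!\!\int (y-x)\,\indic[x \le p \le y]\,dF_B(y)\,dF_S(x)\,dQ(p),
\end{equation*}
which is bilinear (in fact affine in each argument). The feasible set of mixed mechanisms is convex, and the set $\{F_B : \OPTW(F_S, F_B) = 1\}$ is convex because $\OPTW(F_S, \cdot)$ is a linear functional of $F_B$. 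Moreover, for every fixed $F_B$ the linear map $Q \mapsto \Phi(Q, F_B)$ attains its supremum at the extreme points of the simplex, i.e., at point masses $\delta_p$, so $\sup_Q \Phi(Q, F_B) = \sup_p \ALGW(F_S, F_B, p)$. Hence the quantity $\inf_{F_B:\,\OPTW=1} \sup_Q \Phi(Q, F_B)$ is precisely the inner quantity defining $r^*$ for this choice of $F_S$.

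\textbf{Applying Sion's minimax.} Because $\Phi$ is linear (hence both quasi-concave and quasi-convex) in each variable, and both strategy sets are convex, Sion's minimax theorem gives
\begin{equation*}
\sup_Q \inf_{F_B:\,\OPTW=1} \Phi(Q, F_B) \;=\; \inf_{F_B:\,\OPTW=1} \sup_Q \Phi(Q, F_B),
\end{equation*}
once compactness of one side is established. The left-hand side is the value achievable by an optimal randomized one-sided mechanism given access to $F_S$, and the right-hand side equals the full-prior worst-case ratio for this $F_S$. Taking $\inf_{F_S}$ of both sides yields $r^*_{\text{one-sided}} = r^*$, and the dual-optimal $Q^*$ produced by the saddle analysis is the explicit randomized one-sided mechanism that certifies the lower bound.

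\textbf{Main obstacle.} The principal technical hurdle is verifying the compactness hypothesis of Sion's theorem in the infinite-dimensional space of probability measures on $\mathbb{R}_{\ge 0}$. My plan is first to prove the identity under the restriction that $F_S$, $F_B$, and $Q$ are supported on a common compact interval $[0, M]$, where the set of Borel probability measures is weak-$*$ compact and $\Phi$ is continuous in that topology, and then to pass to the general case by a truncation argument: for any $F_B$ on unbounded support, truncating at a sufficiently large threshold perturbs both $\OPTW(F_S, F_B)$ and $\ALGW(F_S, F_B, p)$ by at most $\varepsilon$ uniformly in $p$, so the approximation ratio is preserved in the limit $M \to \infty$. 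A secondary concern is that the dual optimizer $Q^*$ depends only on $F_S$; this is automatic because the supremum over $Q$ is taken with $F_S$ fixed and no knowledge of the adversary's $F_B$.
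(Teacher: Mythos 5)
Your high-level route---fix the distribution on the known side, view the choice of a price distribution versus the unknown side's distribution as a bilinear zero-sum game, and invoke a minimax theorem---is the same as the paper's. The gap is in how you set up the adversary's strategy set. You constrain the adversary to $\{F_B:\OPTW(F_S,F_B)=1\}$ and call this ``without loss by scaling.'' Once $F_S$ is fixed (as it must be, since your $Q$ may depend only on $F_S$), rescaling an instance would also rescale $F_S$, so that normalization is not available: the constraint genuinely restricts the adversary. Concretely, after Sion you obtain a $Q^*$ with $\E_{p\sim Q^*}\left[\ALGW(\I,p)\right]\ge r^*$ for every $F_B$ on the slice $\OPTW(\I)=1$, whereas the theorem requires $\E_{p\sim Q^*}\left[\ALGW(\I,p)\right]\ge r^*\cdot\OPTW(\I)$ for \emph{every} $F_B$. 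The map $F_B\mapsto\bigl(\E_{p\sim Q^*}[\ALGW(\I,p)],\,\OPTW(\I)\bigr)$ is affine, so with $F_S$ held fixed its range is convex but not a cone; the infimum of the ratio over all $F_B$ can be strictly smaller than the infimum of the numerator over the slice, and no scaling converts the slice guarantee into the ratio guarantee. The paper avoids exactly this by leaving the unknown side unconstrained (only the \emph{known} distribution is normalized, via $\E[B]=1$ or $\E[S]=1$) and making the payoff the Lagrangian-type bilinear quantity $\E_{q\sim \omega}\left[\ALG(q,\I)-r\cdot\OPT(\I)\right]$; non-negativity of its max-min value is then literally the desired ratio guarantee. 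You would need to recast your game in this form (or otherwise justify passing from the slice guarantee to the ratio guarantee) for the argument to close.

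Your treatment of the compactness issue is also too optimistic as stated. Truncating at a threshold $M$ perturbs $\OPTW$ and $\ALGW$ by $\varepsilon$ only for a \emph{fixed} $F_B$; since the infimum over $F_B$ sits inside, you need this uniformly over the adversary, and it fails---for any $M$ the adversary can place vanishing mass at values far above $M$ that still contributes a constant amount to $\OPTW$. In the paper this is not a routine limit argument: the problem is first discretized on both sides (Lemma~\ref{lem:discretization}) so that the finite-dimensional minimax theorem applies, an explicit cap is imposed on the mass at the top support point to make the adversary's set compact, and it is then shown separately (easily in the buyer-known case, and through a fairly delicate chain of estimates in the seller-known case) that the adversary never benefits from exceeding that cap. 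Some analogous control of the tail is needed in your approach before any minimax theorem can legitimately be applied.
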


\begin{remark} Note that \Cref{thm:one-sided} does not imply that the optimal mechanism in the full prior information setting is also an optimal mechanism in the setting where only one-sided prior information is known.  Indeed, 
the mechanism proposed in \Cref{thm:fullinfo} that achieves the ratio $\UpperRatio$ relies on $\OPT$, which can only be computed with full prior information. \Cref{thm:one-sided} simply states that the optimal approximation ratio is identical in the two settings, but we do not provide explicit approximately-optimal mechanisms that use only one-sided prior information. 
\end{remark}
To provide a high-level sketch of the proof, we consider the case in which we only have information of the buyer's distribution as an example. Suppose our goal is only to achieve an approximation ratio of at least $r$. This requires us to find a distribution of price ${\omega}$ which only depends on $F_B$ so that for any seller's distribution $F_S$,  the term $\E_{q\sim \omega}[\ALG(q, \I) - r\cdot \OPT(\I)]$ is always non-negative for $\I = (F_B, F_S)$. Therefore, the approximation ratio is at least $r$ if and only if the optimum of the following optimization problem is non-negative:
\[\min_{F_B}\ \max_{\omega}\ \min_{F_S}\E_{q\sim \omega}\left[\ALG(q, \I) - r\cdot \OPT(\I)\right].\]

Observe that, for any fixed $F_B$, $\E_{q\sim \omega}\left[\ALG(q, \I) - r\cdot \OPT(\I)\right]$ is bilinear with respect to the probability density function of $\omega$ and $F_S$. This allows us to apply minimax theorem to swap the order of $\max_\omega$ and $\min_{F_S}$:
\begin{align*}
    \min_{F_B}\ \max_{\omega}\ \min_{F_S}\E_{q\sim \omega}\left[\ALG(q, \I) - r\cdot \OPT(\I)\right] &=\min_{F_B}\min_{F_S} \max_{\omega} \E_{q\sim \omega}\left[\ALG(q, \I) - r\cdot \OPT(\I)\right]\\
    & = \min_{\I = (F_S,F_B)} \max_{\omega} \E_{q\sim \omega}\left[\ALG(q, \I) - r\cdot \OPT(\I)\right].
\end{align*} Note that the term $\min_{\I = (F_S,F_B)} \max_{\omega} \E_{q\sim \omega}\left[\ALG(q, \I) - r\cdot \OPT(\I)\right]$ is non-negative if and only if the optimal mechanism given the full prior information has an approximation ratio of at least $r$. This equality indicates that for any $r > 0$, there exists a fixed-price mechanism using only buyer's distribution that can attain an $r$ fraction of the optimal welfare if and only if the approximation ratio of the optimal fixed-price mechanism with full prior information is at least $r$. In other words, the full prior information setting and the one-sided prior information setting have the same approximation ratio. The formal proof is more complicated as the ``variables'' of our optimization problem are infinite-dimensional.
The full proof of Theorem~\ref{thm:one-sided} is postponed to Appendix~\ref{subsec:one-sided-appendix}. 
\section{Breaking $1 - 1/e$ with Only $\E[B]$ or $\E[S]$}
\label{sec:partial}

We consider a limited information setting in which only the mean of either the seller's value or the buyer's value is known. \cite{kang2022fixed} shows that any mechanism that only uses quantile information from the seller can not achieve a ratio better than $1 - 1/e$. However, we observe that with minimal information of $F_S$ (or $F_B$), i.e.,  $\E[S]$ (or similarly $\E[B]$), we can break the $1 - 1/e$ barrier.

\begin{definition}\label{def:optimal mechanism using only mean}
We define the following two fixed-price mechanisms using only $\E[S]$ or $\E[B]$.
\begin{itemize}
    \item[] $\Mecha_S:$ Given $\E[S]$, the mechanism randomly picks a number $x\sim U[0,3]$, and sets the price as $x\cdot \E[S]$.
    \item[] $\Mecha_B:$ Given $\E[B]$, the mechanism randomly picks a number $x\sim P_B$, and sets the price as $x\cdot \E[B]$, where $P_B$ is a distribution over the interval $[0, 2]$ with the following cdf $F_B(x)$:
\begin{align*}
\begin{split}
F_B(x)=\left\{
\begin{aligned}
&\frac{x}{3-3x}  \quad & 0 \leq x\leq \frac12&\\
&(4x - 1) / 3  \quad & \frac12 <  x\leq \frac23\\
&(x + 1) / 3  \quad & \frac23 < x\ \leq 2\\
\end{aligned}
\right.
\end{split}
\end{align*}
\end{itemize}
\end{definition}

Both mechanisms attain at least $2/3$ of the optimal social welfare for any possible distribution of the other side. Additionally, $2/3$ is the best possible approximation ratio when only $\E[B]$ (or $\E[S]$) is known, indicating the optimality of $\Mecha_S$ and $\Mecha_B$.

\begin{theorem}
\label{thm:PartialMecha}
Given only $\E[S]$ (or $\E[B])$, $\Mecha_S$ (or $\Mecha_B$) obtains $\frac23$ of the optimal welfare. Furthermore, $\frac23$ is the optimal approximation ratio achievable using only $\E[S]$ (or $\E[B])$. 
\end{theorem}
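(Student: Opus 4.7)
The plan is to prove the three parts of the theorem separately. For the $\tfrac{2}{3}$ guarantees of $\Mecha_S$ and $\Mecha_B$, I would reduce the problem to a one-dimensional dual calculation via a Caratheodory argument, then construct a tangent-line minorant that certifies the bound. For the impossibility, I would use a family of hard instances combined with an incompatibility argument.

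For $\Mecha_S$, I would normalize $\E[S]=1$ so that the price CDF is $T(p)=\min(p/3,1)$, write
\[
\ALG(\Mecha,\I)=\E[S]+\E_{S,B}\bigl[(B-S)^+(T(B)-T(S))\bigr],\qquad \OPT=\E[S]+\E[(B-S)^+],
\]
and rearrange $\ALG\ge\tfrac{2}{3}\OPT$ as $\E_{F_S}[D(\delta_S,F_B)]\ge 0$, where
\[
D(\delta_s,F_B):=\tfrac{s}{3}+\E_B\bigl[(B-s)^+(T(B)-T(s)-\tfrac{2}{3})\bigr].
\]
Since this quantity is linear in $F_S$, by LP duality the bound is equivalent to exhibiting, for each $F_B$, an affine minorant $y_1(F_B)+y_2(F_B)\cdot s$ of $s\mapsto D(\delta_s,F_B)$ with $y_1+y_2\ge 0$. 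The natural candidate is the tangent line at $s=1$, whose value at $s=1$ equals $D(\delta_1,F_B)$. I would first compute
\[
D(\delta_1,F_B)=\tfrac{1}{3}-\tfrac{1}{3}\,\E_B\bigl[(B-1)^+(3-B)^+\bigr]
\]
using $1-T(B)=(3-B)^+/3$, and then use the pointwise bound $(B-1)^+(3-B)^+\le 1$ (the parabola $(B-1)(3-B)$ peaks at $B=2$) to conclude $D(\delta_1,F_B)\ge 0$. To verify that the tangent is a global minorant (which is necessary since $D(\delta_\cdot,F_B)$ is not globally convex for all $F_B$), I would check region-by-region, with the regions determined by $s$ and the support of $B$ lying above or below $3$, that the tangent lies below $D$; the key algebraic inequality in the central regime is $(s-1)^2\ge 0$. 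The argument for $\Mecha_B$ is symmetric with buyer and seller swapped: the three-piece density $P_B$ in Definition~\ref{def:optimal mechanism using only mean} is engineered so that the analogous tangent-line minorant of $s\mapsto D(\delta_s,F_S)$ is tight at multiple values of $s$ corresponding to a three-atom worst-case seller distribution aligned with the break points $\{1/2,2/3,2\}$ of $F_B$'s CDF.

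For the matching impossibility, fix any mechanism $\phi$ using only $\E[S]=1$ with price CDF $T$. Two families of hard instances suffice to rule out $r>\tfrac{2}{3}$: the family $\I^{(1)}_M=(\delta_1,\delta_M)$ with $M\to\infty$ has limiting ratio $1-T(1)$ and forces $T(1)\le 1-r$; and the family $\I^{(2)}_\alpha=((1-\alpha)\delta_0+\alpha\delta_{1/\alpha},\delta_1)$ with $\alpha\to 0^+$ has limiting ratio $(1+T(1))/2$ and forces $T(1)\ge 2r-1$. For $r>\tfrac{2}{3}$ the inequality $2r-1>1-r$ is a contradiction. A symmetric construction (pairing $F_B$ with mean $1$ against adversarial $F_S$) handles the impossibility for $\Mecha_B$. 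The main technical obstacle I anticipate is verifying the tangent-line minorant for $\Mecha_B$: the three-piece density of $P_B$ encodes a KKT system with multiple simultaneously tight complementary slackness conditions, so the non-convexity of $D(\delta_\cdot,F_S)$ may require secant-based (rather than pure tangent-based) minorants, demanding careful bookkeeping across the intervals defining $P_B$.
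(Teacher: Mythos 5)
Your overall architecture (linearity in the unknown distribution, a mean-constrained LP dual certified by an affine minorant anchored at $s=1$, plus two hard instances for optimality) is the same shape as the paper's argument, but the specific certificate you propose for $\Mecha_S$ fails, and this is where the real content of the proof lives. The tangent to $s\mapsto D(\delta_s,F_B)$ at $s=1$ is \emph{not} a global minorant in general: take $F_B=\delta_{3/2}$. Then $D(\delta_s,\delta_{3/2})=\frac{1}{3}\left(s^2-\frac34\right)$ for $s\le \frac32$ and $=\frac{s}{3}$ for $s\ge\frac32$ (the slope drops from $1$ to $\frac13$ at $s=B$, so $D$ has a concave kink), and the tangent at $s=1$ has slope $\frac23$, giving value $\frac{17}{12}>1=D(\delta_3,\delta_{3/2})$ at $s=3$. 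Worse, for this $F_B$ the infimum of $\E_{F_S}[D(\delta_S,F_B)]$ over mean-one sellers is $0$, attained only in the limit by a two-point seller with one atom escaping to infinity (mass concentrating near $\frac12$, vanishing mass at $z\to\infty$ with $(1-p)z\to\frac12$ gives value $\frac{(c-1/2)^2}{3}\to 0$), so no tangent-based certificate with slack exists; you need secant-based (convex-envelope) minorants. Your one-line check $D(\delta_1,F_B)\ge 0$ via $(B-1)^+(3-B)^+\le 1$ only proves the bound for the seller $\delta_1$. This is exactly why the paper first reduces the buyer to a point mass, then proves a supporting-line lemma through $\left(1,r^*-\varepsilon\right)$ with slope $k-\varepsilon$, where $r^*$ is the infimum over \emph{two-point} mean-one sellers and $k=\sup_{y<1}\frac{r^*-h(y)}{1-y}$, and then does the case-by-case verification over two-point sellers (and the substantially longer analogous analysis for $\Mecha_B$). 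You flag the secant issue only for $\Mecha_B$, but it already bites for $\Mecha_S$, and the verification that constitutes the bulk of the proof is missing in both cases.

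The impossibility argument also has a (fixable) flaw: with the buyer's value exactly $1$ in your second instance, the two constraints you extract are $\Pr[p\ge 1]\ge r$ and $\Pr[p\le 1]\ge 2r-1$, whose events overlap at an atom at $p=1$; a mechanism placing all price mass at exactly $1$ has limiting ratio $1$ on \emph{both} of your instances, so no contradiction follows (your claim that the first instance's limit is $1-T(1)$ silently assumes no atom at $1$; the true limit is $\Pr[p\ge 1]$). The paper avoids this by perturbing: buyer value $1-\varepsilon$ against seller supported on $\{0,1/\varepsilon\}$ (seller-mean case), and seller value $1+\varepsilon$ (buyer-mean case), which makes the relevant events $\{p<1\}$ and $\{p\ge 1\}$ disjoint and yields $3r\le 2$.
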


Let us examine the setting where only $\E[S]$ is known to offer some insights. Since the mean of the seller is given, we can assume, without loss of generality, that $\E[S] = 1$ by applying appropriate scaling. To check  whether the approximation ratio of $\Mecha_S$ is $2/3$, it suffices to verify that \begin{equation}\label{eq:UB using seller mean}
    \min_{\I = (F_S,F_B)\atop \E[S] = 1}\E_{q\sim \Mecha_{S}}\left[\ALG(q, \I) - {2\over 3}\cdot \OPT(\I)\right]
\end{equation} is non-negative. Similar to the one-sided prior information setting, this term is bilinear w.r.t. the probability density function of $F_S$ and $F_B$. As a result,  we can argue that \emph{one of} the ``worst-case'' instances must have the following simple form: (i) the buyer's distribution is concentrated at a single point; (ii) the seller's distribution is supported on only two distinct points. Equipped with this observation, we can simplify the minimization problem in \Cref{eq:UB using seller mean} and certify the non-negativity of its minimum.

To demonstrate that $2/3$ is indeed the optimal ratio, we construct two instances with the same $\E[S]$ (or $\E[B]$). As only $\E[S]$ (or $\E[B]$) is known, the price for these two instances must be chosen from an identical distribution. We then show that, for any distribution of prices, the worse of the two instances must have an approximation ratio no greater than $2/3$. The complete proof of Theorem~\ref{thm:PartialMecha} is in Appendix~\ref{appendix:partial}.

\section{Fixed-Price Mechanism with Different Numbers of Samples}
\label{sec:Sample}

In this section, we consider the limited information setting where we only have sample access to the distributions. We focus on order statistic mechanisms which is defined in \Cref{sec:contribution} and our results cover different number of samples for both symmetric and general instances. In the small sample regime, we are able to characterize the optimal order statistic mechanism with any fixed number of samples. When the number of samples goes to infinity, we show that the optimal quantile mechanism can be approximated by order statistics mechanism as closely as desired and also obtain an upper bound on the sample complexity. Finally, recall that we assume the distributions for the seller and the buyer are continuous. See~\Cref{appendix:tiebreaking} for details.

\subsection{Order Statistic Mechanisms}
\label{subsec:Mecha}

To start with, we briefly discuss these two families of mechanisms that is used in the sample setting and give high level ideas on how to design the order statistic mechanisms. Order statistic mechanisms will be used when we only have samples from the distribution and quantile mechanisms will help us analyze the performance of order statistic mechanisms. Actually, we will point out that quantile mechanisms and order statistic mechanisms are equivalent in some sense.

\subsubsection{Connection Between Two Mechanisms}
\label{subsec:Connection}

Next we aim to show the connection between these two mechanisms. Such observations give us insights on designing mechanisms with small or large number of samples.

\paragraph{The order statistic mechanism is a special kind of quantile mechanisms} First, we can see that the following two operations are equivalent:
\begin{itemize}
	\item Draw a sample from distribution $F$.
	\item Uniformly sample a quantile $x$ from $[0, 1]$, and use $F^{-1}(x)$ as the sample.
\end{itemize}

Now suppose $\pdf(x) = N \binom{N-1}{i-1} \cdot x ^{i-1}\cdot(1-x)^{N-i}$ be the p.d.f. of the $i$-th order statistic over $N$ samples drawn uniformly and independently from $[0, 1]$ and let $P_i$ to be $\Pr_{x\sim P}[x = i]$ for any distribution $P$ over $[N]$. Using similar ideas above, it can be proved that any order statistic mechanism $P$ is equivalent to a quantile mechanism $Q$ with probability density function \[q(x) = \sum_{i=1}^N P_i \pdf(x)\]

Therefore, we can analyze the approximation ratio of quantile mechanism $Q$ instead of  order statistic mechanism $P$. If we are able to compute the approximation ratio of any quantile mechanism $Q$, it follows that we can also characterize the optimal order statistic mechanism exactly. When the number of samples are small, we can have a fine-grained analysis of the order statistic mechanisms and use these limited samples carefully. Section~\ref{subsec:smallsample} actually follow such intuitions to characterize the best possible order statistic mechanism.

\paragraph{Quantile mechanisms can be approximated by order statistic mechanisms within any small error} Our goal is that for any quantile mechanism $Q$ with p.d.f. $q(x)$, we need to find some integer $N$ and a distribution $P$ over $[N]$, such that \[q(x) \approx \sum_{i=1}^N P_i f_N^i(x)\]

Since $\sum_{i=1}^N P_i f_N^i(x)$ is a polynomial of degree $N-1$, this could be done for any continuous $q(x)$ on $[0, 1]$ since the Weierstrass approximation theorem states that every continuous function defined on a closed interval can be uniformly approximated as closely as desired by a polynomial function. What's more interesting is that $\{\pdf(x)\}_{i = 1}^N$ are Bernstein basis polynomials and there are a series of work showing that (stochastic) Bernstein polynomials can efficiently and uniformly approximate to any continous function. Therefore, we can have an asymptotic analysis of the order statistic mechanism. What's more, such observation also shows that we have a block-box transformation from any quantile mechanism to mechanisms only using samples.  Section~\ref{subsec:largesample} uses such techniques and ideas.

\subsection{Small Sample Regime}
\label{subsec:smallsample}

In this section, we characterize the optimal order statistic mechanisms with any fixed number of samples for both symmetric and general instances. We first show that, in any setting, if we are able to give a tight analysis of the quantile mechanism, we could directly characterize the optimal order statistic mechanism with any fixed number of samples via an optimization problem. In the next, we show a tight analysis of the quantile mechanism on both symmetric and general instances, and thus we obtain the characterization of the optimal order statistic mechanism.

Recall that an order statistics mechanism with $N$ samples randomly choose a number $i\in  [N]$ according to a previously defined distribution $P$ and select the $i$-th smallest sample as the price, and a quantile mechanism randomly choose a quantile $x\in [0,1]$ from a determined distribution $Q$ and choose the $x$-quantile, i.e. $F^{-1}_S(x)$, as the price. Since every quantile mechanism and order statistic mechanism is determined by the previously defined distribution, we abuse the notation and use distribution $P$ over $[N]$ denote its corresponding order statistic mechanism and distribution $Q$ over $[0, 1]$ denote its corresponding quantile mechanism. 

\begin{lemma}\label{lem:smallsamplekey}
Suppose $\mathcal{C}:\Delta([0,1])\mapsto \mathbb{R}$ maps every quantile mechanism $P$ to its exact approximation ratio. {Let $\mathcal{P}(Q)$ be the corresponding quantile mechanism of the order statistic mechanism $Q$.} Fixing the number of samples $N$, the optimal order statistic mechanism with $N$ samples $Q_N^*$ is characterized by the following optimization problem:
\[Q^*_N = \arg\max_{Q\in \Delta_N} \mathcal{C}(\mathcal{P}(Q))\]
where $\Delta([0, 1])$ is the set of all distributions over $[0, 1]$, i.e. the set of all quantile mechanisms, and $\Delta_N$ is the set of all distributions over $[N]$, i.e. the set of all order statistic mechanisms with $N$ samples.
\end{lemma}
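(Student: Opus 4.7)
The plan is to formalize the correspondence between order statistic mechanisms and quantile mechanisms that is sketched in Section~\ref{subsec:Connection}, and then observe that optimizing the approximation ratio over order statistic mechanisms reduces tautologically to the stated optimization problem. Concretely, I would show that for every $Q\in\Delta_N$ the order statistic mechanism $Q$ and its associated quantile mechanism $\mathcal{P}(Q)$ produce exactly the same distribution over posted prices on every instance; since both welfare and $\OPT$ depend on the mechanism only through this price distribution, the exact approximation ratio of $Q$ equals $\mathcal{C}(\mathcal{P}(Q))$, from which the lemma follows by definition of $\arg\max$.

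First, I would check that the map $\mathcal{P}$ is well-defined. Given $Q\in\Delta_N$ with probability masses $Q_i=Q(\{i\})$, the candidate density is $q(x)=\sum_{i=1}^N Q_i f_N^i(x)$ with $f_N^i(x)=N\binom{N-1}{i-1}x^{i-1}(1-x)^{N-i}$. Each $f_N^i$ is the density on $[0,1]$ of the $i$-th order statistic of $N$ i.i.d.\ $U[0,1]$ random variables, hence nonnegative and integrating to one; since the $Q_i$'s form a probability distribution, so does $q$, yielding a valid quantile mechanism $\mathcal{P}(Q)\in\Delta([0,1])$.

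Second -- and this is the only step requiring actual argument -- I would prove the distributional identity
\[
\text{price under order statistic mechanism }Q\ \stackrel{d}{=}\ \text{price under quantile mechanism }\mathcal{P}(Q)
\]
on every instance $\I=(F_S,F_B)$. Let $U_1,\dots,U_N$ be i.i.d.\ $U[0,1]$ and $U^{(i)}$ their $i$-th order statistic. Because $F_S$ is continuous and $F_S^{-1}$ is nondecreasing, the variables $F_S^{-1}(U_1),\dots,F_S^{-1}(U_N)$ are i.i.d.\ $F_S$ samples and their $i$-th order statistic equals $F_S^{-1}(U^{(i)})$ almost surely. Since $U^{(i)}$ has density $f_N^i$, drawing $i\sim Q$ and returning the $i$-th order statistic of $N$ i.i.d.\ samples from $F_S$ yields the same law as drawing $x$ from the mixture density $q(\cdot)=\sum_i Q_i f_N^i(\cdot)$ and returning $F_S^{-1}(x)$. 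As $\ALGW(\Mecha,\I)$ is determined entirely by the distribution of the posted price, we conclude $\ALGW(Q,\I)=\ALGW(\mathcal{P}(Q),\I)$ for every $\I$, and trivially $\OPTW(\I)$ does not depend on the mechanism.

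Third, dividing by $\OPTW(\I)$ and taking the infimum over instances gives that the worst-case approximation ratio of the order statistic mechanism $Q$ is exactly $\mathcal{C}(\mathcal{P}(Q))$, by the definition of $\mathcal{C}$. Taking the maximum over $Q\in\Delta_N$ then yields $Q_N^\star=\arg\max_{Q\in\Delta_N}\mathcal{C}(\mathcal{P}(Q))$, completing the proof. The only mildly delicate point is the monotonicity identity $F_S^{-1}(U^{(i)})=(F_S^{-1}(U))^{(i)}$ almost surely, which uses continuity of $F_S$ (so the $U_j$'s have almost surely distinct images); everything else is an unpacking of definitions, so I do not anticipate a substantive obstacle.
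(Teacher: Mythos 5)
Your proposal is correct and follows essentially the same route as the paper: the paper's proof simply invokes the order-statistic/quantile equivalence sketched in Section~\ref{subsec:Connection} and notes that $\Delta_N$ enumerates all order statistic mechanisms, whereas you additionally spell out the distributional identity $F_S^{-1}(U^{(i)})=(F_S^{-1}(U))^{(i)}$ underlying that equivalence. Filling in that detail is a harmless (indeed slightly more complete) rendering of the same argument.
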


The proof of Lemma~\ref{lem:smallsamplekey} is quite straightforward and thus is postponed to Appendix~\ref{appendix:proofofsmallsamplekey}.

\subsubsection{Symmetric Instances}
\label{subsec:symsmallsample}

Now we study the case when the distributions are symmetric, i.e., $F_S = F_B$, which means that the seller's value $S$ and the buyer's value $B$ are drawn from the same distribution. For simplification, we will use $F$ to refer to their distributions in this setting.

In order to find out the optimal order statistic mechanism, we need to first give a tight analysis of the quantile mechanism.

\begin{lemma} \label{lem:symratio}
For any quantile mechanism for symmetric instance with distribution $Q$ over $[0,1]$, the approximation ratio is exactly
\[ \inf_{x\in [0,1)} \frac{\int_{[0,x]} t(1-x) \, d Q(t) + \int_{(x, 1]} (1-t)x\, dQ(t) + (1-x)}{1-x^2}\]
where $Q(t)$ is the cumulative distribution function of distribution $Q$.
\end{lemma}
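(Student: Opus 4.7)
My plan is to reduce the minimization over all continuous distributions $F = F_S = F_B$ to a one-parameter family of two-point extremal distributions, exploiting linearity in the quantile function. Write $\phi := F^{-1}$. Using $\max(\phi(u),\phi(v)) = \phi(\max(u,v))$ one obtains $\OPTW(F) = 2\int_0^1 m\,\phi(m)\,dm$, while by independence and continuity of $F$ the welfare of a fixed price at quantile $t$ simplifies to
\[
\ALG(F^{-1}(t);F) \;=\; A(1) + t\bar A(t) - (1-t)\,A(t),
\]
where $A(t) := \int_0^t \phi(u)\,du$ and $\bar A(t) := A(1)-A(t)$. Both $\OPTW(F)$ and $\ALGW(Q,F) := \int_0^1 \ALG(F^{-1}(t);F)\,dQ(t)$ are thus \emph{linear} functionals of $\phi$.

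The pivotal structural step is the mixture identity
\[
\phi(u) \;=\; \int_{(0,1]} \indic[u\ge x]\,d\phi(x),
\]
valid for any non-decreasing $\phi:[0,1]\to\R_{\ge 0}$ with $\phi(0)=0$. Writing $\phi_x(u) := \indic[u\ge x]$---the quantile function of the two-point distribution placing mass $x$ at $0$ and mass $1-x$ at $1$---this exhibits $\phi$ as a non-negative mixture of the $\phi_x$'s. Plugging $\phi_x$ into the formulas above will give $\OPTW(\phi_x) = 1-x^2$ and, after routine algebraic grouping,
\[
\ALGW(Q,\phi_x) \;=\; (1-x)\!\!\int_{[0,x]}\!\! t\,dQ(t) \,+\, x\!\!\int_{(x,1]}\!\!(1-t)\,dQ(t) \,+\, (1-x),
\]
so the ratio $R(Q,x) := \ALGW(Q,\phi_x)/\OPTW(\phi_x)$ matches exactly the infimand in the lemma.

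Linearity then closes the argument. For any continuous $F$ with quantile function $\phi$,
\[
\frac{\ALGW(Q,F)}{\OPTW(F)} \;=\; \frac{\int_{(0,1]} R(Q,x)\,(1-x^2)\,d\phi(x)}{\int_{(0,1]} (1-x^2)\,d\phi(x)} \;\ge\; \inf_{x\in[0,1)} R(Q,x),
\]
by the mediant inequality $\sum_i a_i/\sum_i b_i \ge \min_i a_i/b_i$ applied with positive weights $(1-x^2)\,d\phi(x)$; this gives one direction of the lemma. For the matching upper bound, for each $x\in[0,1)$ a smooth strictly-increasing quantile function approximating $\phi_x$ (mass $x$ near $0$, mass $1-x$ near $1$) yields a continuous $F$ with $\ALGW(Q,F)/\OPTW(F)\to R(Q,x)$ by dominated convergence, so the approximation ratio of $Q$ is at most $R(Q,x)$ for every $x$.

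The main obstacle I expect is executing the measure-theoretic bookkeeping rigorously: the mixture representation and the interchange of the $dQ$- and $d\phi(x)$-integrals must be justified carefully when either measure carries atoms, and the paper's continuity convention forces the extremal $\phi_x$ to be realized only as a limit of genuine continuous quantile functions. Both points are standard and handled by Fubini together with dominated convergence on the uniformly bounded integrands appearing in the closed form for $\ALG(F^{-1}(t);F)$ above.
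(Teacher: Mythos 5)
Your proposal is correct, but it proves the lemma by a genuinely different route than the paper. The paper's proof works on the value axis: it shows (its Lemma~\ref{lem:symcalc}) that it suffices to compare tail probabilities at every quantile, then computes $\Pr[\OPT\ge F^{-1}(x)]=1-x^2$ and $\Pr[\ALG\ge F^{-1}(x)]=\int_{[0,x]}t(1-x)\,dQ(t)+\int_{(x,1]}(1-t)x\,dQ(t)+(1-x)$ by a case analysis of when the realized welfare exceeds $F^{-1}(x)$, observes that both expressions are distribution-free, and integrates tails via $\E[X]=\int\Pr[X\ge v]\,dv$; tightness comes from an explicit instance with mass $x^*$ on a tiny interval near $0$ and mass $1-x^*$ near $1$. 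You instead exploit linearity of $\ALGW$ and $\OPTW$ in the quantile function $\phi=F^{-1}$, write $\phi$ as a nonnegative mixture of the step functions $\phi_x$ (two-point instances), evaluate the functionals on each $\phi_x$, and conclude by the mediant inequality, with the same two-point-approximation construction for the upper bound. The two arguments are morally dual: your per-$x$ numerator and denominator coincide exactly with the paper's per-quantile tail probabilities. The paper's tail comparison avoids almost all measure-theoretic overhead, while your extreme-point/mixture view makes the worst-case two-point structure explicit and would extend to any objective linear in $\phi$; its price is the Fubini/dominated-convergence bookkeeping you acknowledge, plus one small detail your stated mixture identity glosses over: if the values are bounded away from zero then $\phi(0^+)>0$ and the decomposition acquires a constant component (an atom of $d\phi$ at $0$), whose welfare-to-optimum ratio is $1$; since the infimand at $x=0$ equals $1\ge\inf_{x}R(Q,x)$, this extra term is harmless, but it should be included rather than assumed away by ``$\phi(0)=0$''.
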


Therefore, combining Lemma~\ref{lem:smallsamplekey} and Lemma~\ref{lem:symratio}, we could characterize the optimal order statistic mechanism via an optimization problem.

\begin{theorem}
\label{thm:symoptorderstatistic}

{The optimal order statistic mechanism with $N$ samples for symmetric instances is the solution to the following optimization problem:}
\[P_N^* = \max_{P_1,P_2,\cdots, P_N \geq 0\atop \sum_{i=1}^N P_i = 1} \inf_{x\in [0,1)}  \frac{\int_0^x p(t) t(1-x) \, d t + \int_x^1 p(t)(1-t)x\, dt + (1-x)}{1-x^2}\]
where $p(t) = \sum_{i=1}^N P_i\pdf(x)$ and  $\pdf(x) = N \binom{N-1}{i-1} \cdot x ^{i-1}\cdot(1-x)^{N-i}$ is the p.d.f. of the $i$-th order statistic over $N$ samples drawn uniformly and independently from $[0, 1]$.
\end{theorem}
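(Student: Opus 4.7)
\textbf{Proof plan for Theorem~\ref{thm:symoptorderstatistic}.}

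The plan is to obtain the claimed max--min characterization by directly chaining the two preceding results: the general ``quantile-mechanism lens'' on order-statistic mechanisms (Lemma~\ref{lem:smallsamplekey}) and the exact formula for the approximation ratio of any quantile mechanism in the symmetric setting (Lemma~\ref{lem:symratio}). Concretely, by Lemma~\ref{lem:smallsamplekey}, the optimal order-statistic mechanism with $N$ samples is
\[
P_N^* \;=\; \arg\max_{P \in \Delta_N} \mathcal{C}\bigl(\mathcal{P}(P)\bigr),
\]
so the proof reduces to writing out $\mathcal{C}(\mathcal{P}(P))$ explicitly in the symmetric case and checking the algebraic form claimed in the theorem.

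Next I would invoke the equivalence between order-statistic and quantile mechanisms established in Section~\ref{subsec:Connection}: the order-statistic mechanism $P$ over $[N]$ is the same (as a distribution over prices) as the quantile mechanism whose density on $[0,1]$ is $p(t) = \sum_{i=1}^N P_i \pdf(t)$, where $\pdf(t) = N\binom{N-1}{i-1} t^{i-1}(1-t)^{N-i}$ is the density of the $i$-th order statistic among $N$ i.i.d.\ uniforms. Because each $\pdf$ is a bona fide density and the $P_i$'s form a probability vector summing to one, $p(\cdot)$ is automatically a valid density on $[0,1]$, so $\mathcal{P}(P)$ is well defined. Then Lemma~\ref{lem:symratio} applies verbatim: its approximation ratio is
\[
\mathcal{C}\bigl(\mathcal{P}(P)\bigr) \;=\; \inf_{x\in[0,1)} \frac{\int_{[0,x]} t(1-x)\,dQ(t) + \int_{(x,1]}(1-t)x\,dQ(t) + (1-x)}{1-x^2},
\]
where $Q$ is the c.d.f.\ of the density $p$. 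Substituting $dQ(t) = p(t)\,dt$ and splitting the two integrals at $x$ yields exactly the expression in the theorem statement. Maximizing over $P \in \Delta_N$ (i.e.\ $P_i\geq 0$, $\sum_i P_i = 1$) then gives the claimed max--min program.

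The only subtleties I anticipate are bookkeeping rather than conceptual. First, I should verify that the integrand is measurable enough that interchanging the sum $\sum_{i=1}^N P_i$ with the integrals against $\pdf(t)\,dt$ is justified; this is immediate since $N$ is finite and each $\pdf$ is a bounded continuous function on $[0,1]$. Second, replacing the integral against $dQ(t)$ on $[0,x]$ by $\int_0^x p(t)\,dt$ (and similarly for $(x,1]$) requires that $Q$ has no atoms, which holds because $p$ is a polynomial density; thus the boundary point $t=x$ contributes zero mass and the inequalities ``$[0,x]$'' versus ``$(x,1]$'' can be replaced by open/closed endpoints interchangeably without affecting the value. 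Finally, I would note that a maximizer $P_N^*$ actually exists: the simplex $\Delta_N$ is compact, and for each fixed $P$ the inner infimum over $x\in[0,1)$ is upper semicontinuous as a function of $P$ (it is an infimum of continuous functions), so extremal value is attained. This last step is the only place any ``real work'' beyond direct substitution is required; the rest of the argument is a clean composition of Lemmas~\ref{lem:smallsamplekey} and~\ref{lem:symratio} with the order-statistic-to-quantile correspondence.
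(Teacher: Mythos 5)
Your proposal is correct and follows essentially the same route as the paper: it combines Lemma~\ref{lem:smallsamplekey} with Lemma~\ref{lem:symratio} via the order-statistic-to-quantile correspondence of Section~\ref{subsec:Connection}, substituting the density $p(t)=\sum_{i=1}^N P_i \pdf(t)$ and maximizing over the simplex. The extra bookkeeping you note (finite sums, atomless $Q$, attainment of the maximum) is harmless and only makes the argument slightly more careful than the paper's own short proof.
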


The proof of Lemma~\ref{lem:symratio} and Theorem~\ref{thm:symoptorderstatistic} is in Appendix~\ref{appendix:proofofsymratio} and \ref{appendix:proofsymoptorderstatistic}. 

It turns out the optimization above is computationally tractable when $N$ is not too large. We solve the optimization problem and find out the optimal order statistic mechanism numerally with different numbers of samples $N$.

To compare with the order statistic mechanisms, we also consider the most natural sample-based mechanism -- the \emph{Empirical Risk Minimization mechanism (ERM)}. We first provide the formal definition below.

\begin{definition}[Empirical Risk Minimization Mechanism]		
	Given $N$ samples $X_1, X_2,\dots, X_N$ drawn from $F$, define $\emF$ be the empirical distribution of these $N$ samples. That is to say, $\emF$ is the distribution with c.d.f. $\emF(x)$ satisfying:
	\[\emF(x) = \frac{1}{N}\sum_{i=1}^N \indic[x\ge X_i]\]
	
	The Empirical Risk Minimization mechanism (ERM) is the mechanism that computes the optimal price according the empirical distribution $\emF$. In particular, for $N$ samples $X_1,X_2,\dots, X_N$,
	
	\[ \mathsf{ERM}(X_1,X_2,\dots, X_N) = \arg\max_{p} \E_{S\sim \emF, B\sim \emF}[S + (B - S)\cdot\indic[B\ge p\ge S]] \]
	
	If there are multiple prices $p$ that maximize the expected welfare, the ERM mechanism may select any of them.
		
\end{definition}

For $N = 1,2,3,5, 10$, we compute the approximation ratio of order statistic mechanisms and also show the upper bound of ERM. The results are listed below. To prove the upper bound, we use a counter example in \cite{kang2022fixed} and show that ERM has a bad performance on this instance. We defer the complete proof of the upper bound of ERM to Appendix~\ref{appendix:ERM} and the details of numerical results to Appendix~\ref{appendix:symmetricexperiment}.

\begin{table}[H]
\centering
\begin{tabular}{ c | c | c  }\hline
 \#Samples & Order Statistics Mechanism & ERM   \\ 
 $1$ & $0.75$  & $0.5$\\  
 $2$ & $0.821$  & $\le 0.67$\\
 $3$ & $0.822$  &$\le 0.75$\\
 $5$ & $0.847$  &$\le 0.76$\\
 $10$ & $0.852$   &$\le 0.80$\\
 $ \infty$ & $\frac{2 + \sqrt{2}}{4}\approx 0.8536$ & /
\end{tabular}
\caption{Approximation Ratios with Different Number of Samples\\	 in the Symmetric Setting.}\label{table:symmetric}
\end{table}

\subsubsection{General Instances}
\label{subsec:generalsmallsample}

Now we consider the general setting, where the buyer's distribution may be different from the seller's. Recall that we only consider mechanisms over seller's information since there is no constant quantile or order statistic mechanism over seller's information. Using similar ideas, we first show a tight analysis regarding quantile mechanisms, which would guide us to discover the optimal order statistic mechanism.

\begin{lemma}[Theorem~4.1 of \cite{blumrosen2021almost}]
\label{lem:asymratio}
For any quantile mechanism $Q$ (over seller's distribution) with cumulative distribution function $Q$, its approximation ratio is exactly \[\min_{x\in [0,1]}{\int_{[0,x]} t\, dQ(t)  +  1-x}\]
\end{lemma}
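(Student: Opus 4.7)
The plan is to characterize the worst-case ratio of the quantile mechanism $Q$ by reducing to point-mass buyer instances, expanding $\ALGW(Q,\I)$ and $\OPTW(\I)$ in closed form via the seller's quantile function, and then proving a matching algebraic lower bound together with a tight family of instances.

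\textbf{Reduction to point-mass buyers.} For fixed $F_S$ and $Q$, both $\ALGW(Q,(F_S,F_B))$ and $\OPTW((F_S,F_B))$ are linear in $F_B$, so the infimum of the ratio over $F_B$ is realized (possibly as a limit) at a point-mass buyer $\delta_{v_B}$; I restrict attention to such instances. Setting $I(u) = F_S^{-1}(u)$ and $x = F_S(v_B)$, I use $\Pr[S \le F_S^{-1}(t)] = t$ (by continuity of $F_S$), the change of variable $u = F_S(s)$, and the Fubini identity $\int_0^x\!\int_0^t I(u)\,du\,dQ(t) = \int_0^x I(u)(Q(x) - Q(u))\,du$ to obtain
\[
\ALGW(Q,\I) = \int_0^x [1-Q(x)+Q(u)]\, I(u)\, du + \int_x^1 I(u)\, du + v_B \int_0^x t\, dQ(t), \qquad \OPTW(\I) = v_B\, x + \int_x^1 I(u)\, du.
\]

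\textbf{Matching lower bound and tightness.} Writing $c = \int_0^x t\, dQ(t)$, a direct rearrangement gives
\[
\ALGW - (1-x+c)\,\OPTW \;=\; \int_0^x [1 - Q(x) + Q(u)]\, I(u)\, du \;+\; (x-c)\Bigl[\int_x^1 I(u)\, du - v_B(1-x)\Bigr].
\]
Each summand is non-negative: the first because $I \ge 0$ and $1 - Q(x) + Q(u) \ge 1 - Q(x) \ge 0$; the second because $c \le x Q(x) \le x$ and $I(u) \ge v_B$ on $(x,1]$. Hence $\ALGW/\OPTW \ge 1 - x + \int_0^x t\, dQ(t)$, and taking the infimum over $x \in [0,1]$ lower-bounds the approximation ratio by $\min_{x}\bigl(\int_{[0,x]} t\, dQ(t) + 1 - x\bigr)$. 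For the matching upper bound, for each $x$ I take a continuous approximation (a uniform smoothing of width $\epsilon$) of the seller distribution placing mass $x$ at $0$ and mass $1-x$ slightly above $v_B$. Then $I \to 0$ on $[0,x]$ and $I \to v_B$ on $(x,1]$, so both non-negative terms above vanish in the limit and $\ALGW/\OPTW \to 1 - x + \int_0^x t\, dQ(t)$.

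The main technical obstacle I anticipate is the Fubini/integration-by-parts bookkeeping in the closed-form derivations (ensuring the identities hold for the general measure $dQ$ with possible atoms), and making the two-point seller distribution rigorous under the paper's continuous-distribution convention through the $\epsilon$-smoothing in the tightness construction.
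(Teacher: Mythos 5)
Your proposal is correct, but it does not follow the paper's route: the paper proves only the tightness (upper bound) direction, by taking the minimizer $x^*$ and building a seller distribution with mass $x^*$ spread in $[0,\varepsilon]$ and mass $1-x^*$ placed just above a deterministic buyer, so the ratio tends to $1-x^*+\int_{[0,x^*]}t\,dQ(t)$; for the lower bound it simply invokes Theorem~4.1 of \cite{blumrosen2021almost}. You instead re-derive the lower bound from scratch: conditioning on the buyer's value (justified by linearity of $\ALGW$ and $\OPTW$ in $F_B$), expressing everything through the seller's quantile function $I=F_S^{-1}$, and exhibiting the exact identity $\ALGW-(1-x+c)\,\OPTW=\int_0^x\bigl[1-Q(x)+Q(u)\bigr]I(u)\,du+(x-c)\bigl[\int_x^1 I(u)\,du-v_B(1-x)\bigr]$ with $x=F_S(v_B)$ and $c=\int_{[0,x]}t\,dQ(t)$, whose two terms are manifestly non-negative (your checks $c\le xQ(x)\le x$ and $I(u)\ge v_B$ on $(x,1]$ are right, using continuity of $F_S$). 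This buys a self-contained proof and makes the worst case transparent: equality forces $I\approx 0$ below quantile $x$ and $I\approx v_B$ just above it, which is exactly the instance both you and the paper use for tightness (your version also dispenses with the paper's large parameter $H$, which is indeed unnecessary). Two minor points to tidy: state the reduction as ``it suffices to prove $\ALGW-r\cdot\OPTW\ge 0$ conditioned on each buyer value and then integrate over $F_B$'', rather than asserting the infimum is attained at a point-mass buyer (point masses are not admissible under the paper's continuity convention, though your $\epsilon$-smoothing remark already handles this on the seller side); and when $Q$ has atoms the quantity $\int_{[0,x]}t\,dQ(t)+1-x$ is only right-continuous in $x$, so the stated minimum may be an infimum approached from the left of an atom --- a technicality the paper's own proof also glosses over when it assumes the minimum is attained at $x^*$.
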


Similarly, combining Lemma~\ref{lem:smallsamplekey} and Lemma~\ref{lem:asymratio}, we are able to charaterize the optimal order statistic mechanism over with $N$ samples from seller's distribution by an optimization problem:

\begin{theorem}
\label{thm:asymoptorderstatistic}
The optimal order statistic mechanism with $N$ samples for symmetric instances is the solution to the following optimization problem:
\[P_N^* = \max_{P_1,P_2,\cdots, P_N \geq 0\atop \sum_{i=1}^N P_i = 1} \min_{x\in [0,1]}{\int_{[0,x]} t\cdot p(t) d t   +  1-x},     \]
where $p(x) = \sum_{i=1}^N P_i \cdot \pdf(x)$ and $\pdf(x) = N \binom{N-1}{i-1} \cdot x ^{i-1}\cdot(1-x)^{N-i}$ is the p.d.f. of the $i$-th order statistic over $N$ samples drawn uniformly and independently from $[0, 1]$. 
\end{theorem}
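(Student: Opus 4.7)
The plan is to derive \Cref{thm:asymoptorderstatistic} as a direct consequence of \Cref{lem:smallsamplekey} together with \Cref{lem:asymratio} and the order-statistic/quantile equivalence established in \Cref{subsec:Connection}. In other words, all of the substantive work is already in place; what remains is an explicit substitution and a routine sanity check that the integral is well-defined.

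First I would invoke \Cref{lem:smallsamplekey} with $\mathcal{C}(\cdot)$ chosen to be the exact approximation ratio of a quantile mechanism on general (asymmetric) instances. This immediately reduces the problem of finding the optimal order statistic mechanism with $N$ samples to solving
\[
Q_N^\star \;=\; \arg\max_{Q\in\Delta_N} \mathcal{C}\bigl(\mathcal{P}(Q)\bigr),
\]
where $\mathcal{P}(Q)$ denotes the quantile mechanism corresponding to the order statistic mechanism $Q$. The key point used here is that an order statistic mechanism is a special quantile mechanism, an equivalence we already established in \Cref{subsec:Connection}.

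Next I would substitute the explicit tight formula for $\mathcal{C}$ provided by \Cref{lem:asymratio}: for any quantile mechanism with c.d.f. $Q$ over $[0,1]$, its exact approximation ratio on general instances equals $\min_{x\in[0,1]} \int_{[0,x]} t\, dQ(t) + (1-x)$. It only remains to identify the c.d.f./density of the quantile mechanism that corresponds to the order statistic mechanism $P=(P_1,\dots,P_N)$. By the discussion in \Cref{subsec:Connection}, this quantile mechanism has density
\[
p(t) \;=\; \sum_{i=1}^{N} P_i\, f_N^i(t), \qquad f_N^i(t) = N\binom{N-1}{i-1} t^{i-1}(1-t)^{N-i},
\]
so $dQ(t) = p(t)\,dt$ (the underlying c.d.f. is absolutely continuous because each Bernstein basis polynomial is smooth on $[0,1]$). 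Plugging this into the formula from \Cref{lem:asymratio} transforms the Lebesgue--Stieltjes integral into the ordinary Riemann integral $\int_{[0,x]} t\cdot p(t)\,dt$, yielding precisely the max–min expression in the theorem.

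Since both ingredients are quoted results, no step should present a genuine obstacle. The only place a careful reader might wish for extra justification is the interchange $dQ(t) = p(t)\,dt$ under the $\min_{x\in[0,1]}$; this is immediate because $p$ is a bounded continuous density on a compact interval, so the integral depends continuously on $x$ and the minimum is attained. With this, the equality $P_N^\star = \arg\max_{P\in \Delta_N} \min_{x\in[0,1]} \bigl(\int_{[0,x]} t\, p(t)\,dt + 1-x\bigr)$ in \Cref{thm:asymoptorderstatistic} follows.
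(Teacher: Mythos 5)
Your proposal is correct and follows essentially the same route as the paper's own proof: it, too, obtains \Cref{thm:asymoptorderstatistic} by combining \Cref{lem:smallsamplekey} with the tight ratio formula of \Cref{lem:asymratio}, using the fact that the order statistic mechanism $(P_1,\dots,P_N)$ is the quantile mechanism with density $p(t)=\sum_{i=1}^N P_i f_N^i(t)$ and that $\Delta_N$ exhausts all such mechanisms. Nothing further is needed.
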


The proof of Lemma~\ref{lem:asymratio} and Theorem~\ref{thm:asymoptorderstatistic} is deferred to Appendix~\ref{appendix:proofasymratio} and \ref{appendix:proofasymoptorderstatistic}.

Similarly, such optimization problem is easy to solve when the number of samples $N$ is not to large. We solve the optimization problem numerically for $N = 1,2,3,5, 10$. Note that we do not compare our mechanism to the Empirical Risk Minimization mechanism in the general setting. This is because we only have sample access to the seller's distribution, and the ERM can not be implemented without the buyer's samples. The details of numerical results is defered to Appendix~\ref{appendix:generalexperiment}.

\begin{table}[H]
\centering
\begin{tabular}{ c | c   }\hline
 \#Samples & Order Statistic Mechanism  \\ 
 $1$ & $0.5$  \\  
 $2$ & $0.531$  \\
 $3$ & $0.578$ \\
 $5$ & $0.601$ \\
 $10$ & $0.615$  \\
 $ \infty$ & $1 - {1\over e} \approx 0.632$ 
\end{tabular}
\caption{Approximation Ratios with Different Number of Samples\\ In the General Setting.}\label{table:asymmetric}
\end{table}

\subsection{Asymptotic Analysis: From Quantile to Order Statistics}
\label{subsec:largesample}

In this section, we turn to the case when the number of samples tends to infinity. As we show in section~\ref{subsec:Connection}, we could approximate any quantile mechanisms by order statistic mechanisms within any small error. Using such ideas, we provide a "black-box" reduction that allows us to convert any quantile mechanism with continuous probability density function $q(x)$ to order statistic mechanism with $N$ samples. Here $N$ is usually a polynomial of $\frac{1}{\varepsilon}$, as long as the probability density function is not too crazy. We now formally write it down.

\begin{lemma}\label{lem:convert}
Let $\mathcal{C}:\Delta([0,1])\mapsto \mathbb{R}$ be a function that maps every quantile mechanism $Q$ with continuous probability density function to its approximation ratio such that for any quantile mechanism $Q_1$ with p.d.f. $q_1(x)$ and quantile mechanism $Q_2$ with p.d.f. $q_2(x)$, it holds that 
\[\mathcal{C}(Q_1) - \mathcal{C}(Q_2) \geq -c\cdot \left|q_1 - q_2\right|_{\infty}\]
where $c$ is a constant.

Now let $Q$ be any quantile mechanism with continuous probability density function $q(x)$. Define $M$ as $\max_{x\in [0,1]} q(x)$. For any $\varepsilon > 0$, suppose $n$ is a positive integer satisfying that 
\begin{align}
        \omega\left(\frac{1}{\sqrt{n - 1}}\right) &\leq \varepsilon / 100\label{eq:con1}\\
        2n\exp\left(-\frac{\varepsilon^2}{8\omega^2\left(\frac{1}{\sqrt{n - 1}}\right)}\right) &\leq \varepsilon\label{eq:con2}\\
        \exp\left(-\frac{\varepsilon^2 n}{48M^2}\right) &\leq \varepsilon/2 \label{eq:con3}
\end{align}
where $\omega(h) = \sup_{x,y\in [0,1]\atop |x - y| \leq h} \left|q(x) - q(y)\right|$. Then, there exists an order statistic mechanism with $n$ samples that achieves an approximation ratio of $\mathcal{C}(Q) - c\cdot \varepsilon$.
\end{lemma}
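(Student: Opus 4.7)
The plan is to leverage the equivalence from Section~\ref{subsec:Connection}: an order statistic mechanism $P$ over $[n]$ is identical to the quantile mechanism with density $\tilde q(x) = \sum_{i=1}^n P_i f_n^i(x)$, where $f_n^i(x) = n\binom{n-1}{i-1}x^{i-1}(1-x)^{n-i}$ equals $n$ times the Bernstein basis polynomial $B_{n-1,i-1}(x)$. Since $\mathcal{C}$ is $c$-Lipschitz in the $L^{\infty}$ distance on densities, it suffices to construct $(P_1,\ldots,P_n)$ with $\|\tilde q - q\|_\infty \leq \varepsilon$, because then $\mathcal{C}(\mathcal{P}(P)) \geq \mathcal{C}(Q) - c\,\|\tilde q-q\|_\infty \geq \mathcal{C}(Q) - c\varepsilon$. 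The task therefore reduces to approximating a continuous density $q$ on $[0,1]$, in sup norm, by a convex combination of the $f_n^i$'s.

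The natural first attempt is the deterministic Bernstein construction: set $P_i$ proportional to $q((i-1)/(n-1))$ and renormalize so $\sum_i P_i = 1$. Up to this normalization, $\tilde q$ is the classical Bernstein polynomial $B_{n-1}q$, and the standard probabilistic proof of Bernstein's theorem writes $B_{n-1}q(x) - q(x) = \E[q(Y/(n-1)) - q(x)]$ for $Y\sim\mathrm{Bin}(n-1,x)$ and splits the expectation at $\delta = 1/\sqrt{n-1}$. The resulting bound $\omega(\delta) + 2M\,\Pr[|Y/(n-1)-x|>\delta]$ then has its first summand controlled by Condition~\eqref{eq:con1} and its second summand, via Hoeffding's inequality together with $\|q\|_\infty \leq M$, controlled by Condition~\eqref{eq:con3}. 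The only wrinkle is that the normalizer $\tfrac{1}{n}\sum_j q((j-1)/(n-1))$ is only a Riemann approximation of $\int_0^1 q = 1$, introducing a multiplicative error of order $\omega(1/n)$ that needs to be absorbed in the budget.

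If the deterministic construction proves too lossy to match the claimed bound on the nose, I would pass to a \emph{stochastic} Bernstein construction: draw $n$ i.i.d.\ samples from $Q$, snap each to the nearest grid point $(i-1)/(n-1)$, and let $P_i$ be the empirical frequency of index $i$. In expectation this recovers the deterministic construction, and a coordinatewise Hoeffding bound combined with a union bound over the $n$ indices produces a failure probability of at most $2n\exp(-\varepsilon^2/(8\omega^2(1/\sqrt{n-1})))$, precisely the quantity governed by Condition~\eqref{eq:con2}. The modulus-of-continuity factor $\omega^2$ appears because, after localizing to a neighborhood of size $1/\sqrt{n-1}$, the relevant bounded random variables lie in an interval of length $O(\omega(1/\sqrt{n-1}))$, which is the scale of the effective variance. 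Once the three conditions together force $\|\tilde q - q\|_\infty \leq \varepsilon$ with strictly positive probability, a valid deterministic $P$ must exist, and the Lipschitz property of $\mathcal{C}$ delivers the conclusion.

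The main obstacle is bookkeeping rather than ideas: the $\varepsilon$ budget has to be allocated cleanly between (i) the modulus-of-continuity contribution to the Bernstein approximation, (ii) the binomial-tail contribution bounded globally by $M$, and (iii) the empirical concentration error from the stochastic construction, so that each of the three hypotheses \eqref{eq:con1}--\eqref{eq:con3} governs exactly one source of error at the scale $\delta = 1/\sqrt{n-1}$. The individual ingredients (Bernstein approximation, Hoeffding's inequality, a union bound) are standard, so the craft lies in choosing the construction and the localization scale so that the three sources of error decouple cleanly into the three stated conditions.
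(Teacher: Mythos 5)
Your high-level reduction is exactly the paper's: identify an order statistic mechanism with the quantile mechanism of density $\sum_i P_i f_n^i$, construct weights making this density $\varepsilon$-close to $q$ in sup norm, and finish with the Lipschitz property of $\mathcal{C}$. The gaps are in the approximation step, in both of your routes. In the deterministic Bernstein route, the error bound you state, $\omega(\delta) + 2M\Pr[|Y/(n-1)-x|>\delta]$ with $\delta = 1/\sqrt{n-1}$, does not become small: Hoeffding gives $\Pr[|Y/(n-1)-x|>\delta]\le 2e^{-2(n-1)\delta^2}=2e^{-2}$, a constant, so the tail summand is $\Theta(M)$, and Condition~\eqref{eq:con3} (which compares $n$ to $M^2/\varepsilon^2$) cannot control it; conversely, if you enlarge $\delta$ so that a Hoeffding-type tail becomes small, Condition~\eqref{eq:con1} no longer bounds $\omega(\delta)$. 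The allocation ``\eqref{eq:con1} for the modulus term, \eqref{eq:con3} for the tail term at the same scale $\delta=1/\sqrt{n-1}$'' is therefore internally inconsistent. The route is repairable --- the Popoviciu-type bound $\|B_{n-1}q-q\|_\infty\le \tfrac54\,\omega(1/\sqrt{n-1})$ needs only \eqref{eq:con1}, and taking $P_i$ to be the cell probabilities $\int_{I_i}q$ instead of renormalized point values keeps the normalization error of order $\omega(1/(n-1))+M/n$ rather than $M\cdot\omega(1/(n-1))$, the latter not being controlled by the hypotheses when $M$ is large --- but as written the analysis fails.

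The stochastic fallback has the more serious missing idea. In your construction the randomness is which grid index each of the $n$ samples from $Q$ snaps to, so the quantity to control is $\sup_x\bigl|\sum_i (P_i-\E P_i) f_n^i(x)\bigr| = \sup_x\bigl|\tfrac1n\sum_j \bigl(f_n^{I_j}(x)-\E f_n^{I_j}(x)\bigr)\bigr|$; the summands have range $\Theta(\sqrt n)$ at interior $x$ (up to $n$ near the boundary) and $\sum_i f_n^i(x)=n$, so a coordinatewise Hoeffding bound on the $P_i$'s plus a union bound over the $n$ indices does not yield a sup-norm bound of $\varepsilon$, and the claimed failure probability $2n\exp\bigl(-\varepsilon^2/(8\omega^2(1/\sqrt{n-1}))\bigr)$ has no justification in this construction: the ``$\omega$-sized range after localization'' is a feature of a different randomization. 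That different randomization is the one the paper actually uses: it weights the $i$-th order statistic proportionally to $q(Y_i)$ where $Y_1\le\cdots\le Y_n$ are uniform order statistics, so that $\sum_i q(Y_i)p_{n-1,i-1}(x)$ is a stochastic Bernstein polynomial of $q$; the uniform concentration theorem it cites (Lemma~\ref{lem:Bernstein}) is what produces exactly the \eqref{eq:con1}/\eqref{eq:con2} terms, and a separate Chernoff bound comparing the normalizer $\sum_i q(Y_i)$ to $n$ is what consumes \eqref{eq:con3}. To close your argument you would need either to import such a uniform-in-$x$ concentration result (a pointwise bound plus a union bound over indices is not enough) or to switch to the corrected deterministic argument; as it stands, neither route goes through.
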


The high level idea of the proof is as follows. Since we know that probability density functions of order statistics form Bernstein basis polynomials, we could approximate the p.d.f. of the quantile mechanism $q(x)$ within any small error. Inequality~\eqref{eq:con1}, \eqref{eq:con2} and \eqref{eq:con3} actually help us to get an order statistic mechanism whose corresponding distribution of quantile is close to the desired quantile mechanism $Q$. Finally, by the property of $\mathcal{C}$, we know that their approximation ratio is also close. The proof is postponed to Appendix~\ref{appendix:proofconvert}. 

In the following, we show that we could apply lemma~\ref{lem:convert} to both the symmetric and general settings and convert the optimal quantile mechanism to order statistic mechanism within a error of at most $\varepsilon$ using $\text{poly}\left(\frac{1}{\varepsilon}\right)$ samples.

\subsubsection{Symmetric Instance}

We first study the case when the distributions are symmetric. \cite{kang2022fixed} provide a mechanism that chooses the mean of the distribution $F$ as the price. They show that in the symmetric setting, this is the optimal fixed price mechanism and achieves an approximation ratio of $\frac{2+\sqrt{2}}{4}$. However, what we want here is a quantile mechanism and we could not convert such mean-based mechanism directly into an order statistic mechanism. We show that quantile mechanisms can also reach the optimal $\frac{2 + \sqrt{2}}{4}$-approximation ratio. After that, we use the technique in Lemma~\ref{lem:convert} to produce an order statistic mechanism that achieves an approximation ratio of $\frac{2 + \sqrt{2}}{4} - \varepsilon$ with $\mathrm{poly}\left(\frac{1}{\varepsilon}\right)$ samples.

To start with, we first show our optimal order statistic mechanism.

\begin{theorem}
\label{thm:symquantile}
	There is a $\frac{2 + \sqrt{2}}{4}$-approximation quantile mechanism in the symmetric setting. 
\end{theorem}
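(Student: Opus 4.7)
The goal is to exhibit a concrete quantile mechanism $Q$ whose worst-case approximation ratio is at least $r^{*}:=\frac{2+\sqrt{2}}{4}$. By \Cref{lem:symratio}, it suffices to find a probability distribution $Q$ on $[0,1]$ such that for every $x \in [0,1)$,
\[
(1-x)\int_{[0,x]} t\, dQ(t) \;+\; x\int_{(x,1]}(1-t)\, dQ(t) \;+\; (1-x) \;\geq\; r^{*}(1-x^{2}).
\]
My plan is to search within the one-parameter family of uniform distributions on $[1-c,\,1]$ parameterized by $c \in (0,1]$: this family is simple enough to permit closed-form analysis yet rich enough to saturate the bound, and it reflects the intuition that in the symmetric setting good fixed prices should be biased toward the upper portion of the quantile range.

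Writing $A(x):=\int_{[0,x]} t\, dQ(t)$ and $B(x):=\int_{(x,1]}(1-t)\, dQ(t)$ and splitting according to the position of $x$ relative to $1-c$, I would compute $A(x)=0,\ B(x)=c/2$ for $x<1-c$, and $A(x) = (x^{2}-(1-c)^{2})/(2c),\ B(x) = (1-x)^{2}/(2c)$ for $x \geq 1-c$. Substituting into the required inequality and dividing by $1-x$ reduces it, in each regime, to an affine comparison that is easiest to track at two boundary points, yielding the two closed-form obstructions
\[
r \leq \frac{3-c}{2(2-c)} \quad \text{(from } x = 1-c\text{)} \quad \text{and}\quad r \leq \frac{4-c}{4} \quad \text{(from } x = 1\text{).}
\]

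Next, I would choose $c$ to saturate both bounds simultaneously. Setting the two expressions equal yields the quadratic $c^{2}-4c+2=0$, whose unique root in $(0,1)$ is $c = 2-\sqrt{2}$, at which the common value of $r$ is exactly $r^{*} = (2+\sqrt{2})/4$. This identifies the candidate mechanism $Q^{*}$ as the uniform distribution on $[\sqrt{2}-1,\,1]$.

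Finally, I would verify that $Q^{*}$ satisfies the inequality for \emph{every} $x \in [0,1)$, not only at the two chosen boundary points. On $[\sqrt{2}-1,\,1]$, after dividing by $1-x$, both sides become affine in $x$ and agree at the endpoints, so they coincide identically and the inequality holds with equality throughout. On $[0,\sqrt{2}-1]$, the inequality simplifies to $g(x):= 1 + \frac{cx}{2(1-x)} - r^{*}(1+x) \geq 0$; we have $g(0) = 1 - r^{*} > 0$ and $g(\sqrt{2}-1)=0$, and differentiating gives $g'(x) = c/(2(1-x)^{2}) - r^{*}$, whose unique root is $x = 1 - \sqrt{c/(2r^{*})}$, which with our choice of $c$ and $r^{*}$ equals precisely $\sqrt{2}-1$. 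Hence $g$ is monotone on $[0,\sqrt{2}-1]$; since it is positive at the left endpoint and zero at the right, it is nonnegative on the whole interval. The main conceptual hurdle is guessing the right one-parameter family; once the ansatz $Q = \mathrm{Unif}([1-c,1])$ is in place the remaining calculations are routine.
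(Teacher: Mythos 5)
Your proof is correct, and I checked the computations: for $Q=\mathrm{Unif}([1-c,1])$ the two binding constraints are indeed $r\leq\frac{3-c}{2(2-c)}$ (at $x=1-c$) and $r\leq\frac{4-c}{4}$ (at $x=1$), equating them gives $c^{2}-4c+2=0$, hence $c=2-\sqrt{2}$ and $r=\frac{2+\sqrt{2}}{4}$; on $[\sqrt{2}-1,1)$ both sides of the (divided-by-$(1-x)$) inequality are affine and agree at the endpoints, and on $[0,\sqrt{2}-1]$ the identity $c=1/(2r^{*})$ makes the unique critical point of $g$ land exactly at $\sqrt{2}-1$, so $g$ is monotone there and nonnegative. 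However, your route differs from the paper's in the choice of witness: the paper also invokes \Cref{lem:symratio}, but applies it to the \emph{deterministic} mechanism that always posts the $\frac{\sqrt{2}}{2}$-quantile, for which the ratio splits into two one-line minimizations ($\min_{x\le\sqrt{2}/2}\frac{x(1-\sqrt{2}/2)+1-x}{1-x^{2}}$ and $\inf_{x\ge\sqrt{2}/2}\frac{(\sqrt{2}/2)(1-x)+(1-x)}{1-x^{2}}$), both equal to $\frac{2+\sqrt{2}}{4}$, so no optimization over a family is needed. What your approach buys is a genuinely different (randomized) optimal mechanism together with the structural observation that the bound is tight simultaneously on the whole quantile interval $[\sqrt{2}-1,1]$, i.e., the optimum within quantile mechanisms is far from unique; what it costs is the extra bookkeeping of the ansatz and the interior-verification step. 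One caveat if you intended your mechanism to also feed into the sample-based conversion later in the paper: its density $\mathbbm{1}[x\in[\sqrt{2}-1,1]]/(2-\sqrt{2})$ is bounded but still discontinuous at $\sqrt{2}-1$, so the smoothing step of Lemma~\ref{lem:symturning} would still be required before applying Lemma~\ref{lem:convert}.
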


\begin{proof}
    Our quantile mechanism $Q$ runs as following:
    \begin{itemize}
    \item Let $F$ be the c.d.f. of the distribution.
    \item Output $F^{-1}\left(\frac{\sqrt{2}}{2}\right)$, i.e. $\frac{\sqrt{2}}{2}$-quantile of the distribution, as the price.
\end{itemize}
    The approximation ratio could be directly calculated using Lemma~\ref{lem:symratio}.
    One could see that
    \begin{align*}
    \begin{split}
         &\inf_{x\in [0,1)} \frac{\int_{[0,x]} t(1-x) \, d Q(t) + \int_{(x, 1]} (1-t)x\, dQ(t) + (1-x)}{1-x^2} \\
         &= \min\left(\min_{x\in[0,\frac{\sqrt{2}}{2}]} \frac{x\cdot(1-\frac{\sqrt{2}}{2}) + 1-x}{1-x^2}, \inf_{x\in[\frac{\sqrt2}2,1)}\frac{\frac{\sqrt{2}}{2}\cdot(1 -x) + (1-x)}{1-x^2}\right)\\
         & = \frac{2+\sqrt{2}}{4}.
    \end{split}
    \end{align*}which completes the proof.    
\end{proof}

Now we aim to convert it to an order statistic mechanism. 

\begin{theorem}
\label{thm:symapprox}
There exists an order statistic mechanism $P$ with $N = O\left(\frac{1}{\varepsilon^7}\right)$ samples that achieves $\frac{2 + \sqrt{2}}{4} - \varepsilon$ approximation.
\end{theorem}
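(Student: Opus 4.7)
The plan is to invoke the black-box reduction of Lemma~\ref{lem:convert} on a smooth surrogate of the deterministic mechanism constructed in Theorem~\ref{thm:symquantile}. That mechanism is a Dirac mass on the quantile $\sqrt{2}/2$ and therefore has no continuous probability density function; Lemma~\ref{lem:convert} requires a continuous p.d.f., so one must first replace the Dirac by a smoothed version and only then convert it to an order statistic mechanism.

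Concretely, fix a bandwidth $\delta > 0$ to be chosen last, and let $Q_\delta$ be a symmetric bump quantile mechanism centered at $\sqrt{2}/2$, supported in $[\sqrt{2}/2 - \delta/2, \sqrt{2}/2 + \delta/2]$, with a continuous Lipschitz p.d.f. $q_\delta$ of height $M = \Theta(1/\delta)$ and Lipschitz constant $L = \Theta(1/\delta^2)$, so $\omega_\delta(h) = O(h/\delta^2)$. Plugging $Q_\delta$ into the closed form of Lemma~\ref{lem:symratio} and using the symmetry $\E_{t \sim Q_\delta}[t] = \sqrt{2}/2$, the numerator of the ratio agrees with its Dirac counterpart for every $x$ outside the bump's support and differs by at most $O(\delta)$ inside. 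Since in the Dirac case the infimum equals $(2+\sqrt{2})/4$ and is attained at the interior point $x^\star = \sqrt{2}-1$, which lies outside the bump once $\delta$ is small, and the ratio in a neighborhood of $\sqrt{2}/2$ is bounded below by a constant strictly larger than $(2+\sqrt{2})/4$, we conclude $\mathcal{C}(Q_\delta) \geq \frac{2+\sqrt{2}}{4} - O(\delta)$. I also need to verify the one-sided Lipschitz hypothesis $\mathcal{C}(Q_1) - \mathcal{C}(Q_2) \geq -c\|q_1 - q_2\|_\infty$ required by Lemma~\ref{lem:convert}: this follows because the numerator in Lemma~\ref{lem:symratio} is a linear functional of the p.d.f. with kernel $g_x(t) = t(1-x)\indic[t \leq x] + (1-t)x\indic[t > x] \in [0,1]$, the denominator $1 - x^2$ is bounded away from zero at an interior infimum-minimizer, and the behavior as $x \to 1$ is controlled by the mean of $q$, which itself varies by only $O(\|q_1 - q_2\|_\infty)$; together these yield a pointwise Lipschitz bound which a standard inf-of-linear-functionals argument promotes to $\mathcal{C}$.

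To finish, substitute $M = \Theta(1/\delta)$ and $\omega_\delta(h) = O(h/\delta^2)$ into conditions~(\ref{eq:con1})--(\ref{eq:con3}) of Lemma~\ref{lem:convert}. Condition~(\ref{eq:con1}) yields $n \geq \Omega(1/(\varepsilon^2 \delta^4))$; condition~(\ref{eq:con2}) further forces $\omega_\delta$ to be a logarithmic factor smaller than $\varepsilon$, inflating $n$ by an additional polynomial slack; and condition~(\ref{eq:con3}) contributes only a logarithmic correction. Balancing the smoothing loss $O(\delta)$ against the reduction loss $O(\varepsilon)$ and then minimizing $n$ over admissible $\delta$ produces the advertised sample complexity $N = O(1/\varepsilon^7)$ and approximation ratio $\frac{2+\sqrt{2}}{4} - \varepsilon$. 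The main obstacle I expect is this last accounting step: because the formula in Lemma~\ref{lem:symratio} is an infimum over $x \in [0,1)$ whose denominator degenerates at $x = 1$, one has to argue that the infimum-minimizer does not escape to the right boundary when $Q_\delta$ is perturbed; once that is established, the three constraints of Lemma~\ref{lem:convert} interact multiplicatively, so pinning down the correct exponent of $1/\varepsilon$ is a matter of careful bookkeeping rather than a one-line calculation.
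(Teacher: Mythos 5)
Your proposal is correct and follows essentially the same route as the paper: you smooth the point mass at the $\sqrt{2}/2$-quantile into a narrow continuous bump (the paper's Lemma~\ref{lem:symturning}), verify the one-sided Lipschitz property of the ratio functional from Lemma~\ref{lem:symratio} (the paper's Lemma~\ref{lem:symclose}), and then feed $M=\Theta(1/\varepsilon)$ and $\omega(h)=O(h/\varepsilon^2)$ into the three conditions of Lemma~\ref{lem:convert} with $n=\Theta(1/\varepsilon^7)$. The only cosmetic difference is that the paper bounds the smoothed mechanism's ratio by a per-quantile argument $\min\bigl(\tfrac{1+t}{2},\tfrac{1+\sqrt{1-t^2}}{2}\bigr)\ge \tfrac{2+\sqrt{2}}{4}-\varepsilon/2$ on the bump's support rather than analyzing the mixture formula directly, which is an equivalent calculation.
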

To start with, we may notice that it is impossible to directly apply Lemma~\ref{lem:convert} to the optimal quantile mechanism a since it is does not have a continiuous probability density function. So our first step is to provide a quantile mechanism with continuous distribution.
\begin{lemma}
\label{lem:symturning}
For any $\varepsilon > 0$, there exists a quantile mechanism $Q'$ with a probability density function $\tilde{q}(x)$ such that $\omega\left(c \cdot \varepsilon^{3.5}\right) \leq 32^2 c\cdot \varepsilon^{1.5}$ and  $\max_{x\in[0,1]} \tilde{q}(x) \leq 32/\varepsilon$. Furthermore, the mechanism $Q'$ achieves an approximation ratio of $\frac{\sqrt2 + 2}{4} - \varepsilon/2$.
\end{lemma}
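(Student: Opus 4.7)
The plan is to construct $Q'$ by replacing the point mass at $p = \sqrt{2}/2$ (the optimal quantile used in Theorem~\ref{thm:symquantile}) with a narrow continuous density concentrated around $p$. Specifically, let $w = \varepsilon/16$ and define $\tilde q$ to be the piecewise-linear ``tent'' supported on $[p - w/2,\, p + w/2]$ with peak value $32/\varepsilon$ at $t = p$ and linear decay to $0$ at the two endpoints. This density integrates to $\tfrac{1}{2} \cdot w \cdot (32/\varepsilon) = 1$, is symmetric about $p$ (so $\E_{t \sim Q'}[t] = p$), and for small $\varepsilon$ its support lies inside $[0,1]$. The smoothness conditions in the statement are then immediate: $\max_x \tilde q(x) = 32/\varepsilon$ by construction, and $\tilde q$ is Lipschitz with constant $\frac{32/\varepsilon}{w/2} = \frac{32^2}{\varepsilon^2}$, so $\omega(h) \leq \frac{32^2}{\varepsilon^2}\, h$ for all $h$, which gives $\omega(c\varepsilon^{3.5}) \leq 32^2 c\,\varepsilon^{1.5}$.

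To bound the approximation ratio I apply Lemma~\ref{lem:symratio}. A short rewrite of its numerator $N(x)$, using $\mu_L(x) := \int_{[0,x]} t\,dQ'(t)$ and $\E_{Q'}[t] = p$, yields
\[ N(x) \;=\; \int_{[0,x]} t(1-x)\, dQ'(t) \;+\; \int_{(x,1]} (1-t)x\, dQ'(t) \;+\; (1-x) \;=\; \mu_L(x) - x\, Q'(x) + 1 - x p.\]
For $x \leq p - w/2$ we have $Q'(x) = 0$ and $\mu_L(x)=0$, giving $N(x) = 1-xp$; for $x \geq p + w/2$ we have $Q'(x)=1$ and $\mu_L(x)=p$, giving $N(x) = (1+p)(1-x)$. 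These are exactly the expressions arising from the point mass at $p$ in Theorem~\ref{thm:symquantile}, so the infimum of $N(x)/(1-x^2)$ over $[0, p-w/2] \cup [p+w/2, 1)$ equals $\tfrac{2+\sqrt{2}}{4}$, attained at $x = \sqrt{2}-1$ from the left branch and as $x \to 1$ from the right branch. Both extremal points lie outside the window for all sufficiently small $\varepsilon$, so this part already beats the target $\tfrac{2+\sqrt{2}}{4} - \varepsilon/2$.

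Inside the window $(p-w/2,\, p+w/2)$ I bound the deviation from the point-mass case directly. Since $\mu_L(x) - x Q'(x) = \int_{[0,x]} (t-x)\tilde q(t)\, dt \in [-w/2,\, 0]$, we have $N(x) \geq 1 - xp - w/2$. On the window, $(1-xp)/(1-x^2) \to 1$ as $x \to p$ and $1-x^2$ stays close to $1-p^2 = 1/2$, so uniformly
\[ \frac{N(x)}{1-x^2} \;\geq\; \frac{1 - xp}{1-x^2} - \frac{w/2}{1-x^2} \;=\; 1 - O(w) \;=\; 1 - O(\varepsilon),\]
which, for $\varepsilon$ small enough, dominates $\tfrac{2+\sqrt{2}}{4} - \varepsilon/2$. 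Combining the two regimes gives $\mathcal{C}(Q') \geq \tfrac{2+\sqrt{2}}{4} - \varepsilon/2$.

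The main obstacle is the uniform lower bound inside the window: the specific constants $w = \varepsilon/16$ and peak $32/\varepsilon$ are forced by the modulus-of-continuity requirement, and one must verify they still leave enough slack so that the $O(\varepsilon)$ loss from smoothing is dominated by the constant gap between $1$ (the value of the point-mass ratio at $x = p$) and $\tfrac{2+\sqrt{2}}{4} - \varepsilon/2$. Because the window is a constant distance away from the critical points $x = \sqrt{2}-1$ and $x \to 1$ where the point-mass ratio is tight, this gap is essentially a positive constant, and the verification reduces to a routine monotonicity check of $(1-xp)/(1-x^2)$ at the two endpoints of the window.
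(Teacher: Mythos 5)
Your construction is the same as the paper's: a triangular ``tent'' density centered at $\frac{\sqrt2}{2}$ with half-width $\varepsilon/32$, peak $32/\varepsilon$, and Lipschitz constant $(32/\varepsilon)^2$, which gives the stated bounds on $\max_x\tilde q(x)$ and on $\omega(c\varepsilon^{3.5})$ exactly as in the paper. Where you differ is in verifying the approximation ratio: the paper conditions on the realized quantile $t$, notes that the mechanism only ever outputs quantiles $t\in\left[\frac{\sqrt2}{2}-\frac{\varepsilon}{32},\frac{\sqrt2}{2}+\frac{\varepsilon}{32}\right]$, and applies Lemma~\ref{lem:symratio} to each deterministic quantile-$t$ price, whose ratio is $\min\left\{\frac{1+t}{2},\frac{1+\sqrt{1-t^2}}{2}\right\}\geq \frac{2+\sqrt2}{4}-\varepsilon/2$ on the window; averaging over $t\sim Q'$ then finishes. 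You instead apply Lemma~\ref{lem:symratio} directly to the smoothed distribution, using the correct identity $N(x)=\mu_L(x)-xQ'(x)+1-xp$ and splitting on whether $x$ is inside or outside the support window. Both routes are valid; the paper's per-quantile argument is a bit cleaner (it shows every price in the window is individually near-optimal, with no mixture bookkeeping), while yours is self-contained but needs the window estimate $\mu_L(x)-xQ'(x)\geq -w/2$ (which does hold, since $\frac{d}{dx}\int_{[0,x]}(t-x)\,dQ'(t)=-Q'(x)\leq 0$ and the value at $x=p+w/2$ is $p-x=-w/2$) plus a quantitative version of your ``$1-O(\varepsilon)$'' step. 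That last step is stated loosely but is easy to make rigorous: since $\frac{1-xp}{1-x^2}\geq\frac{2+\sqrt2}{4}$ for \emph{all} $x\in[0,1)$ (its minimum is at $x=\sqrt2-1$), inside the window one gets $\frac{N(x)}{1-x^2}\geq \frac{2+\sqrt2}{4}-\frac{w/2}{1-x^2}\geq\frac{2+\sqrt2}{4}-\varepsilon/2$ because $1-x^2$ is bounded below by an absolute constant there; so there is no genuine gap, only a constant that should be written out.
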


Our last step is to make sure that the approximation ratio would not differ to much for two probability density functions that are close to each other.
\begin{lemma}
\label{lem:symclose}
Suppose $\mathcal{C}(p)$ is the function that maps every quantile mechanism $Q$ for symmetric instances with a continuous probabilitiy density function $q(x)$ to its approximation ratio where
\[ \mathcal{C}(p)=\inf_{x\in [0,1)} \frac{\int_{[0,x]} q(t)t(1-x) \, dt + \int_{(x, 1]} q(t)(1-t)x\, dt + (1-x)}{1-x^2}.\]

For any quantile mechanism $Q_1$ with continuous p.d.f $q_1(x)$ and $Q_2$ with continuous p.d.f. $q_2(x)$, it holds that
\[\mathcal{C}(p_1) - \mathcal{C}(p_2) \geq - \left|q_1 - q_2\right|_{\infty}.\]
\end{lemma}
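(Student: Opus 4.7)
The plan is to exploit that, for each fixed $x \in [0,1)$, the numerator of the Rayleigh-type quotient defining $\mathcal{C}$ is \emph{linear} in the density $q$, and the denominator does not depend on $q$ at all. So the whole thing is an inf over a family of affine functionals in $q$, which makes Lipschitz stability in the sup norm immediate once one bounds the relevant $L^1$ weights.

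Concretely, I would introduce the shorthand
\[
R_q(x) \;=\; \frac{\int_{0}^{x} q(t)\,t(1-x)\, dt \;+\; \int_{x}^{1} q(t)\,(1-t)x\, dt \;+\; (1-x)}{1-x^2},
\]
so that $\mathcal{C}(p) = \inf_{x \in [0,1)} R_q(x)$. Writing out $R_{q_1}(x) - R_{q_2}(x)$, the $(1-x)$ term and the denominator cancel, leaving
\[
R_{q_1}(x) - R_{q_2}(x) \;=\; \frac{\int_{0}^{x} (q_1(t)-q_2(t))\,t(1-x)\, dt + \int_{x}^{1} (q_1(t)-q_2(t))\,(1-t)x\, dt}{1-x^2}.
\]
Using $|q_1(t)-q_2(t)| \le \|q_1-q_2\|_\infty$ inside both integrals and computing the elementary integrals $\int_0^x t\,dt = x^2/2$ and $\int_x^1 (1-t)\,dt = (1-x)^2/2$ gives
\[
R_{q_1}(x) - R_{q_2}(x) \;\ge\; -\,\|q_1-q_2\|_\infty \cdot \frac{\tfrac{x^2(1-x)}{2} + \tfrac{x(1-x)^2}{2}}{1-x^2} \;=\; -\,\|q_1-q_2\|_\infty \cdot \frac{x}{2(1+x)}.
\]

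Since $\tfrac{x}{2(1+x)} \le \tfrac14 \le 1$ for all $x \in [0,1)$, we get the uniform pointwise bound $R_{q_1}(x) \ge R_{q_2}(x) - \|q_1-q_2\|_\infty$ for every $x$. Taking the infimum over $x$ on both sides (and using that subtracting a constant commutes with $\inf$) yields
\[
\mathcal{C}(p_1) \;=\; \inf_{x\in[0,1)} R_{q_1}(x) \;\ge\; \inf_{x\in[0,1)} R_{q_2}(x) - \|q_1-q_2\|_\infty \;=\; \mathcal{C}(p_2) - \|q_1-q_2\|_\infty,
\]
which is exactly the claimed inequality.

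There is no real obstacle here; the only thing to be slightly careful about is cancelling the $(1-x)$ term in the numerators correctly (it is independent of $q$) and verifying that the weight $x/[2(1+x)]$ is uniformly at most $1$ on $[0,1)$, so that the constant in front of $\|q_1-q_2\|_\infty$ is indeed $1$ as stated (in fact one even gets the slightly stronger constant $1/4$, which is not needed).
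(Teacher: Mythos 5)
Your proof is correct and follows essentially the same route as the paper's: bound the pointwise difference $R_{q_1}(x)-R_{q_2}(x)$ by $\|q_1-q_2\|_\infty$ times a weight at most $1$, then take the infimum over $x$. The only difference is cosmetic — you evaluate the integrals exactly and obtain the sharper constant $\tfrac{x}{2(1+x)}\le\tfrac14$, whereas the paper bounds $(1-t)\le(1-x)$ on $[x,1]$ and settles for the constant $1$, which suffices for the statement.
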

We first use these lemmas to give a proof of Theorem~\ref{thm:symapprox}, and leave the proof of Lemma~\ref{lem:symturning} and Lemma~\ref{lem:symclose} to Appendix~\ref{appendix:proofofsymturning} and \ref{appendix:proofsymclose}.

\begin{proof}[Proof Of Theorem~\ref{thm:symapprox}]
	
	Let $n = c\cdot \frac{1}{\varepsilon^7} + 1$ where $c$ is a large enough constant. Notice that $\omega\left(\frac{1}{\sqrt{n - 1}}\right)$ equals $\omega\left(c^{-0.5}\cdot \varepsilon^{3.5}\right)\leq 32^2 c^{-0.5}\cdot \varepsilon^{1.5}$. This implies that $2n\exp\left(-\frac{\varepsilon^2}{8\omega^2\left(\frac{1}{\sqrt{n - 1}}\right)}\right) \leq \varepsilon / 2$. Besides, define $M$ be $\max_{x\in [0,1]} \tilde{q}(x)$ which is at most $ 32/\varepsilon$. It is also easy to verify that $\exp\left(-\frac{\varepsilon^2 n}{48M^2}\right) \leq \varepsilon / 4$. Combining the properties above with Lemma~\ref{lem:symclose}, we could apply Lemma~\ref{lem:convert}, and see that there exists an order statistic mechanism with $n$ samples with an approximation of at least $\mathcal{C}(Q') - \varepsilon / 2$. Together with Lemma~\ref{lem:symturning}, it follows that this order statistic mechanism is $\frac{2+\sqrt2}{4} - \varepsilon$-approximate. This concludes our proof.

    We would like to comment that if we always choose the $\left\lfloor \frac{\sqrt2}{2}\cdot n\right\rfloor$-th order statistic as the price, there is an argument to prove that we could achieve an approximation ratio of $\frac{2+\sqrt2}{4} - \varepsilon$ with $\tilde{O}\left(\frac{1}{\varepsilon^2}\right)$ samples. 
\end{proof}

\subsubsection{General Instance}

We now consider the general instance. \cite{blumrosen2021almost} provides a $1 - 1/e$ approximation quantile mechanism that is also shown to be optimal by \cite{kang2022fixed}. Using the block-box reduction shown in Lemma~\ref{lem:convert}, we show that the optimal quantile mechanism can be approximated by order statistics mechanism as closely as desired and also obtain an upper bound on the sample complexity.

\begin{theorem}
\label{thm:asymapprox}
There exists an order statistic mechanism $P$ with $N = O\left(\frac{1}{\varepsilon^5}\right)$ samples that achieves $1 - \frac{1}{e} - \varepsilon$ approximation. 
\end{theorem}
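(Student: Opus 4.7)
The proof follows the same blueprint as Theorems~\ref{thm:symquantile}--\ref{thm:symapprox} in the symmetric case: first exhibit a continuous quantile mechanism that is essentially optimal, then invoke the black-box reduction of \Cref{lem:convert} to transfer that ratio onto an order statistic mechanism. The starting point is the $(1-1/e)$-optimal quantile mechanism of \cite{blumrosen2021almost}, whose p.d.f.\ can be taken as $q^{\ast}(x) = (1/x)\,\indic[x \ge 1/e]$; plugging into \Cref{lem:asymratio} yields the plateau value $1-1/e$ for every $x \in [1/e,1]$, confirming $\mathcal{C}(Q^{\ast})=1-1/e$. However $q^{\ast}$ is discontinuous at $1/e$, so \Cref{lem:convert} cannot be applied to it directly.

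\textbf{Step 1 (smoothing).} I would define $Q'$ with continuous p.d.f.\ $\tilde q$ by replacing the jump of $q^{\ast}$ at $1/e$ by a linear ramp of width $\delta = \Theta(\varepsilon)$ on $[1/e-\delta,\,1/e]$ and renormalizing the whole density to unit mass. A direct computation shows that $M := \max_{x}\tilde q(x) \le e = O(1)$; the modulus of continuity satisfies $\omega(h) \le \max(e/\delta,\, e^2)\cdot h = O(h/\varepsilon)$ because the ramp slope dominates and $1/x$ has bounded derivative on $[1/e,1]$; and substituting $\tilde q$ into \Cref{lem:asymratio} shows that the value $V(x) := \int_0^x t\tilde q(t)\,dt + 1-x$ is strictly decreasing on $[1/e,1]$ and bounded below by $1 - 1/e$ on $[0,1/e]$, so the worst-case $x$ stays at the right endpoint $x=1$ and $\mathcal{C}(Q') = V(1) \ge 1-1/e - O(\delta) \ge 1-1/e - \varepsilon/2$ for a suitable implicit constant in $\delta$.

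\textbf{Step 2 (Lipschitz property of $\mathcal{C}$).} I would then establish the asymmetric counterpart of \Cref{lem:symclose}: for any two quantile mechanisms with continuous p.d.f.s $q_1, q_2$,
\[
\mathcal{C}(Q_1) - \mathcal{C}(Q_2) \;\ge\; \inf_{x\in[0,1]}\int_0^x t\bigl(q_1(t)-q_2(t)\bigr)\,dt \;\ge\; -\tfrac{1}{2}\,\|q_1-q_2\|_{\infty}.
\]
The first bound is the standard ``evaluate both minima at the minimizer of the first'' estimate applied to \Cref{lem:asymratio}, and the second uses $\int_0^x t\,dt \le 1/2$. This verifies the hypothesis of \Cref{lem:convert} with Lipschitz constant $c = 1/2$.

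\textbf{Step 3 (applying \Cref{lem:convert}).} With $c = 1/2$, $M = O(1)$, and $\omega(h) = O(h/\varepsilon)$, I would choose $N = \lceil C/\varepsilon^5\rceil$ for a sufficiently large constant $C$ and verify the three conditions of \Cref{lem:convert} with slack parameter $\varepsilon$. The binding condition is \eqref{eq:con1}: $\omega(1/\sqrt{N-1}) = O(1/(\varepsilon\sqrt{N-1})) = O(\varepsilon^{1.5}) \le \varepsilon/100$ for small $\varepsilon$; the remaining two conditions involve $\exp(-\Omega(1/\varepsilon))$ and $\exp(-\Omega(1/\varepsilon^3))$ respectively, both trivially negligible. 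Combining the $\varepsilon/2$ loss from Step~1 with the $c\cdot\varepsilon$ loss from \Cref{lem:convert} produces an order statistic mechanism with $N = O(1/\varepsilon^5)$ samples achieving approximation ratio at least $1-1/e - \varepsilon$.

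\textbf{Main obstacle.} The delicate step is Step~1: the smoothing must simultaneously preserve unit mass, keep $M$ bounded by a constant, yield a $\Theta(1/\varepsilon)$-bounded modulus of continuity, and ensure that the renormalization does not drop the value of the minimum in \Cref{lem:asymratio} by more than $\varepsilon/2$. Verifying that the worst-case $x$ stays at $x=1$ after renormalization requires a short monotonicity calculation, after which Steps~2 and 3 are essentially mechanical applications of the pre-established lemmas.
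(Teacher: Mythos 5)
Your proposal is correct and follows essentially the same route as the paper: smooth the discontinuous density $1/x\cdot\indic[x\ge 1/e]$ by a linear ramp of width $\Theta(\varepsilon)$ (the paper's \Cref{lem:asymturning}, which keeps unit mass by subtracting $\varepsilon/2$ from $1/x$ rather than renormalizing, a cosmetic difference), prove the Lipschitz property of the ratio functional (the paper's \Cref{lem:asymclose}, with constant $1$ instead of your $1/2$), and apply \Cref{lem:convert} with $n=\Theta(\varepsilon^{-5})$. The only slip is the remark that \eqref{eq:con1} is the binding condition — it alone would permit $N=O(\varepsilon^{-4})$, and it is \eqref{eq:con2} that forces $\varepsilon^{-5}$ — but since you verify all three conditions at $N=O(\varepsilon^{-5})$, this does not affect correctness.
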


In the following proof, we will use order statistic mechanism to approximate the optimal quantile mechanism $Q$ with p.d.f. $q(x) = \frac{1}{x}$ on $[1/e, 1]$. Similarly, $q(x)$ is not a continuous function on $[0, 1]$. Thus, we need to first convert it to a continuous function $\tilde{q}(x)$ on $[0, 1]$ and then apply Lemma~\ref{lem:convert}.

Similarly, we introduce the following two lemmas first.

\begin{lemma}
\label{lem:asymturning}
For any $\varepsilon > 0$, there exists a quantile mechanism $Q'$ with a probability density function $\tilde{q}(x)$ such that $\omega\left(c\cdot \varepsilon^{-2.5}\right) \leq 100c^{0.5} \cdot \varepsilon^{1.5}$ and $\max_{x\in[0,1]} \tilde{q}(x)\leq 2e$. Besides, the quantile mechanism $Q'$ has an approximation ratio of  $1 - \frac{1}{e} - \varepsilon/2$.

\end{lemma}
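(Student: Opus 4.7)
The plan is to construct $\tilde{q}$ as a continuous smoothing of the optimal quantile density $q(x) = (1/x)\,\indic[x \ge 1/e]$, which is already smooth on $(1/e,1]$ but jumps from $0$ to $e$ at $x=1/e$. I would replace this single jump with a short linear ramp of width $\delta$, where $\delta$ is chosen as a small constant multiple of $\varepsilon$. Concretely, letting $Z := 1 + e\delta/2$ be a normalization constant, set
\[
\tilde{q}(x) \;=\; \frac{1}{Z}\cdot
\begin{cases}
0, & x \in [0,\, 1/e - \delta),\\[2pt]
\dfrac{e}{\delta}\bigl(x - (1/e - \delta)\bigr), & x \in [1/e - \delta,\, 1/e),\\[2pt]
1/x, & x \in [1/e,\, 1].
\end{cases}
\]
The ramp reaches the value $e$ at $x=1/e$, matching $1/x$ there, so $\tilde{q}$ is continuous on $[0,1]$. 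A direct integration of the unnormalized piecewise function yields total mass $1 + e\delta/2 = Z$, so $\tilde{q}$ is a valid density of a quantile mechanism $Q'$.

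The two pointwise bounds are immediate. The maximum of $\tilde{q}$ is attained at $x=1/e$ with value $e/Z \le e \le 2e$. On $[1/e-\delta, 1/e]$ the derivative of $\tilde{q}$ equals $e/(Z\delta)$, and on $[1/e,1]$ it equals $-1/(Zx^2)$, whose absolute values are bounded by $e/\delta$ and $e^2$ respectively. Hence $\tilde{q}$ is $(e/\delta)$-Lipschitz on all of $[0,1]$, giving $\omega(h) \le (e/\delta)\,h$ for every $h>0$, and choosing $\delta$ as a sufficiently small multiple of $\varepsilon$ (roughly $\delta \ge e c^{1/2}\varepsilon/100$) yields the stated bound on the modulus of continuity.

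The main technical step is bounding the approximation ratio via Lemma~\ref{lem:asymratio}, i.e., showing
\[
g(x) \;:=\; \int_0^x t\,\tilde{q}(t)\,dt + (1-x) \;\ge\; 1 - \tfrac{1}{e} - \tfrac{\varepsilon}{2}
\]
for every $x \in [0,1]$. I would split into three regions. For $x \le 1/e - \delta$, the integral vanishes and $g(x) = 1 - x \ge 1 - 1/e + \delta$. For $x \in [1/e - \delta, 1/e]$, only the ramp contributes, and its total contribution is $O(\delta)$, so $g(x) \ge 1 - 1/e - O(\delta)$. For $x \in [1/e, 1]$, a short computation using $\int_{1/e-\delta}^{1/e} t\,\tilde{q}(t)\,dt = \delta/(2Z) - e\delta^2/(6Z)$ and $\int_{1/e}^x t\cdot(1/(Zt))\,dt = (x-1/e)/Z$ gives
\[
g(x) \;=\; \frac{\delta}{2Z} + \frac{x - 1/e}{Z} + (1 - x) - \frac{e\delta^2}{6Z},
\]
which is linear in $x$ with slope $1/Z - 1 < 0$. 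Hence its minimum on $[1/e,1]$ is attained at $x=1$, yielding $g(1) = (1 - 1/e) - \delta(e-2)/2 + O(\delta^2)$. Choosing $\delta$ to be a small enough multiple of $\varepsilon$ (for instance $\delta = \varepsilon/2$) then makes $g(x) \ge 1 - 1/e - \varepsilon/2$ uniformly.

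The main obstacle is the competing pressure on $\delta$: it must be large enough (on the order of $c^{1/2}\varepsilon$) for the Lipschitz constant $e/\delta$ to satisfy the modulus-of-continuity bound, yet small enough (on the order of $\varepsilon$) so that the normalization penalty $1 - 1/Z$ and the linear-ramp contribution together cost at most $\varepsilon/2$ in the approximation ratio. Since both quantities scale linearly in $\delta$, a single choice $\delta = \Theta(\varepsilon)$ balances the two; the remaining work is just verifying that the universal constants line up with the precise right-hand sides $100\,c^{1/2}\varepsilon^{1.5}$ and $2e$ in the statement of the lemma.
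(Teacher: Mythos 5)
Your construction and verification are essentially the paper's own proof: both replace the jump of the optimal density $q(x)=\tfrac{1}{x}\,\indic[x\ge 1/e]$ at $x=1/e$ by a linear ramp of width $\Theta(\varepsilon)$ (yielding Lipschitz constant $O(1/\varepsilon)$, hence the stated modulus bound and $\max\tilde q\le 2e$), and both certify the $1-\tfrac1e-\tfrac\varepsilon2$ ratio via Lemma~\ref{lem:asymratio}. The only difference is bookkeeping: the paper absorbs the ramp's mass by using $\tfrac1x-\tfrac\varepsilon2$ on $[1/e,1]$ so no renormalization is needed (and the ratio bound follows by simply discarding the ramp's contribution), whereas you keep $1/x$ and divide by $Z=1+e\delta/2$, which with $\delta=\varepsilon/2$ indeed checks out.
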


\begin{lemma}
\label{lem:asymclose}
Suppose $\mathcal{C}(p)$ is the function that maps every quantile mechanism $Q$ for general instances with a continuous probabilitiy density function $q(x)$ to its approximation ratio where
\[ \mathcal{C}(p)=\min_{x\in [0,1]} {\int_0^x q(t) t\, dt  +  1-x}.\]

For any quantile mechanism $Q_1$ with continuous p.d.f $q_1(x)$ and $Q_2$ with continuous p.d.f. $q_2(x)$, it holds that
\[\mathcal{C}(p_1) - \mathcal{C}(p_2) \geq - \left|q_1 - q_2\right|_{\infty}.\]
	
\end{lemma}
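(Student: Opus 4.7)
The plan is to show that the functional $\mathcal{C}$ is $\tfrac12$-Lipschitz in the density with respect to the sup-norm, which is strictly stronger than what is claimed. The key observation is that $\mathcal{C}(p)$ is the pointwise minimum of a family of affine functionals in $q$, so the standard ``minimum of Lipschitz maps is Lipschitz'' argument applies.

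Concretely, I would first introduce the shorthand $f_i(x) \defeq \int_0^x q_i(t)\,t\,dt + (1-x)$ for $i \in \{1,2\}$, so that $\mathcal{C}(p_i) = \min_{x \in [0,1]} f_i(x)$. Since $q_i$ is continuous on $[0,1]$, each $f_i$ is continuous on the compact interval $[0,1]$, hence each minimum is attained. Pick $x^{*} \in \arg\min_{x \in [0,1]} f_1(x)$, so that $\mathcal{C}(p_1) = f_1(x^{*})$, and note that by definition of the minimum one has $\mathcal{C}(p_2) \leq f_2(x^{*})$. It is important here to choose the minimizer of $f_1$ rather than of $f_2$, so that the resulting inequality points in the required direction.

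Combining the two bounds yields
\[
\mathcal{C}(p_1) - \mathcal{C}(p_2) \;\geq\; f_1(x^{*}) - f_2(x^{*}) \;=\; \int_0^{x^{*}} \bigl(q_1(t) - q_2(t)\bigr)\,t\,dt.
\]
Using $|q_1(t) - q_2(t)| \leq \InNorms{q_1 - q_2}_{\infty}$ pointwise and $\int_0^{x^{*}} t\,dt = (x^{*})^2/2 \leq 1/2$, this integral is at least $-\InNorms{q_1 - q_2}_{\infty}/2$, which is in turn at least $-\InNorms{q_1 - q_2}_{\infty}$, yielding the claim. There is no real obstacle in this argument; the only subtleties are ensuring the minimum is attained (handled by continuity on a compact set) and picking the minimizer on the correct side so the one-sided bound has the right sign. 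The same skeleton (verbatim, with $t$ replaced by $t(1-x) + (1-t)x\cdot\indic[t > x]$) is exactly what drives the companion symmetric statement in Lemma~\ref{lem:symclose}.
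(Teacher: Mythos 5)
Your proof is correct and is essentially the paper's own argument: both boil down to the pointwise bound $\int_0^x q_1(t)t\,dt+(1-x) - \left(\int_0^x q_2(t)t\,dt+(1-x)\right) = \int_0^x \left(q_1(t)-q_2(t)\right)t\,dt \geq -\left|q_1-q_2\right|_{\infty}$ and then passing to the minimum over $x$ (the paper applies the bound at every $x$ and takes minima, you apply it at an attained minimizer of the first objective, which is the same standard ``min is Lipschitz'' step). Your additional factor $\tfrac12$ from $\int_0^{x^*} t\,dt \le \tfrac12$ is a harmless strengthening of the stated claim.
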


	The proofs of Lemma~\ref{lem:asymturning} and Lemma~\ref{lem:asymclose} are respectively in Appendix~\ref{appendix:proofasymturning} and \ref{appendix:proofasymclose}. 

\begin{proof}[Proof Of Theorem~\ref{thm:asymapprox}]
	We follow the same argument to prove Theorem~\ref{thm:asymapprox}.

Let $n = c\cdot \frac{1}{\varepsilon^5} + 1$ where $c$ is a large enough constant. Again it is easy to see that $\omega\left(\frac{1}{\sqrt{n - 1}}\right) \leq 100c^{-0.5} \cdot \varepsilon^{1.5}$. Thus it holds that $2n\exp\left(-\frac{\varepsilon^2}{8\omega^2\left(\frac{1}{\sqrt{n - 1}}\right)}\right) \leq \varepsilon / 2$. 

Besides, define $M = \max_{x\in [0,1]} \tilde{q}(x) \leq 2e$, we could also see that $\exp\left(-\frac{\varepsilon^2 n}{48M^2}\right) \leq \varepsilon / 4$. Combining the properties above with Lemma~\ref{lem:asymclose}, we could see the existence of an order statistic mechanism with $n$ samples with an approximation of at least $\mathcal{C}(Q') - \varepsilon / 2$ by applying Lemma~\ref{lem:convert}. Together with Lemma~\ref{lem:asymturning}, we know that the approximation ratio of this order statistic mechanism is $1 - 1/e - \varepsilon$. This finishes our proof.
\end{proof}

\bibliographystyle{plain}
\bibliography{bibliography,references}
\appendix
\section{Tie Breaking}\label{appendix:tiebreaking}

For distribution $D$ with point masses, the following reduction will convert it to continuous one. We will overload the notation of $D$ and think of it as a bivariate distribution with the first coordinate drawn from the previous single-variate distribution $D$ and the second tie-breaker coordinate drawn independently and uniformly from $[0,1]$. And $(X_1, t_1) > (X_2, t_2)$ if and only if either $X_1 > X_2$, or $X_1 = X_2$ and $t_1 > t_2$. Since the tie-breaker coordinate is continuous, the probability of having $(X_1, t_1) = (X_2, t_2)$ for any two values during a run of any mechanism is zero. Therefore we could define the c.d.f. of $D$ as \[F_{D}(X,t) = \Pr_{(Y,u) \sim (D,U[0,1])}[(Y, u) < (X, t)]\]

Remind the second coordinate is only used to break ties, and it does not affect the calculation of welfare. After including the additional random variable, we can see that $D$ has been converted into a continuous distribution since its second coordinate is continuous. 

\section{Missing Proofs in Section~\ref{sec:fullinfo}}
\label{appendix:fullinfo}

\subsection{Proof of Lemma~\ref{lem:fullinfo}}
\label{appendix:prooffullinfo}
We first show the proof of Lemma~\ref{lem:fullinfo}.

The approximation ratio of the optimal fixed-price mechanism could be written as
\begin{align*}
    \begin{split}
        \min_{\I = (F_S, F_B)} \max_{p\in \mathbb{R}} \frac{\ALGW(\I, p)}{\OPTW(\I)}.
    \end{split}
\end{align*}

We first show that for any instance $\I = (F_S, F_B)$, there is a valid solution $(u, v, r)$ such that $r = \max_{p\in \mathbb{R}}\\ \frac{\ALGW(\I, p)}{\OPTW(\I)}$. We could first simply scale the instance by $\frac{1}{\OPTW(\I)}$ to $\I' = (F_S', F_B')$ where $\OPTW(\I') = 1$. Such scaling means that $\ALGW(\I, \OPTW(\I)\cdot p) = \OPTW(\I)\cdot \ALGW(\I', p)$ for all $p\in \mathbb{R}_{\ge 0}$. This implies that
\begin{align*}
	\begin{split}
		\max_{p\in \mathbb{R}}  \frac{\ALGW(\I, p)}{\OPTW(\I)} = \max_{p\in \mathbb{R}}  {\ALGW(\I', p)}.
	\end{split}
\end{align*}

Therefore, let $u$ and $v$ be the probability measures of $F_S'$ and $F_B'$ and $r$ be $\max_{p\in \mathbb{R}}  {\ALGW(\I', p)}$. It is easy to verify that $(u, v, r)$ is a valid solution. Let $r^*$ be the optimal value of $\mathsf{FullOp}$, this implies that $r^*\leq \max_{p\in \mathbb{R}}  \frac{\ALGW(\I, p)}{\OPTW(\I)}$ for any instance $\I = (F_S, F_B)$. Taking the minimum over all possible $\I$, we then get that 
\begin{equation}
\label{eq:fullinfopr1}
r^* \leq \min_{\I = (F_S, F_B)} \max_{p\in \mathbb{R}} \frac{\ALGW(\I, p)}{\OPTW(\I)}
\end{equation}
 
 Next, let $(u^*, v^*, r^*)$ be the optimal solution of $\mathsf{FullOp}$. Since $u^*, v^*$ are both probability measures,  let $F_S^*, F_B^*$ be the corresponding distributions of $u$ and $v$ and $\I^* = (F_S^*, F_B^*)$ be the instance. Now by the constraint of $\mathsf{FullOp}$, we know that $\OPTW(\I^*)\geq 1$. Besides, $(u^{*}, v^*, r^*)$ is an optimal solution implies that $r^{*} = \max_{p\in \mathbb{R}} \ALGW(\I^*, p)$. Therefore,
\begin{equation}
\label{eq:fullinfopr2} 	
 	r^* = \max_{p\in \mathbb{R}} \ALGW(\I^*, p) \geq \max_{p\in \mathbb{R}} \frac{\ALGW(\I^*, p)}{\OPTW(\I^*)} \geq \min_{\I = (F_S, F_B)} \max_{p\in \mathbb{R}} \frac{\ALGW(\I, p)}{\OPTW(\I)}
\end{equation}
 
 Now combining inequality~\eqref{eq:fullinfopr1} and \eqref{eq:fullinfopr2}, it follows that
 
 \[r^* = \min_{\I = (F_S, F_B)} \max_{p\in \mathbb{R}} \frac{\ALGW(\I, p)}{\OPTW(\I)}\]
 
 which completes the proof.

\subsection{Proof of Lemma~\ref{lem:fullinfolower}}
\label{appendix:prooffullinfolower}

Before we give the proof of Lemma~\ref{lem:fullinfolower}, we first show prove a lemma that helps us discretize a continuous distribution.
\begin{lemma}
\label{lem:discretization}
    For any instance $\I = (F_S, F_B)$, and $0 = p_1 < p_2 < \cdots < p_n$, there exists a set of numbers $\{s_i\}_{i\in [n]}, \{b_i\}_{i\in [n]}$ satisfying the following equations.
    \begin{align}
        s_i , b_i & \geq 0\quad \forall i\in [n]\label{eq:discretization1}\\
        1\le\sum_{i=1}^n s_i &\le 1 + \frac{\E[S]}{p_n} \quad 1 \le \sum_{i=1}^n b_i \le 1+ \frac{\E[B]}{p_n}\label{eq:discretization2}\\ 
        \sum_{i=1}^t s_i &\geq \Pr[S < p_t]\quad  \sum_{i=1}^t b_i \geq \Pr[B<p_t]\quad \forall t\in [n]\label{eq:discretizationNew}\\
        \sum_{i=1}^{n-1} s_i &\leq 1 \quad \sum_{i=1}^{n-1} b_i \leq 1\label{eq:discretization6}\\
        \E_{S\sim F_S}\left[S\right] &= \sum_{i=1}^n s_i p_i\label{eq:discretization3}\\
        \E_{B\sim F_B}\left[B\right] &= \sum_{j=1}^n b_j p_j\label{eq:discretization4}\\
        \OPTW(\I) &\defeq \E_{S\sim F_S\atop B\sim F_B}[\max(S, B)]\leq \sum_{i=1}^n \sum_{j=1}^n s_i b_j \cdot \max(p_i, p_j)\label{eq:discretization5}\\
        \ALGW(\I, p_t) &\defeq \E_{S\sim F_S}[S] + \E_{S\sim F_S\atop B\sim F_B}\Big[(B-S)\cdot \indic[S\leq p_t\leq B]\Big]\nonumber\\
            & \quad \geq \sum_{i=1}^n s_i p_i+\sum_{i=1}^{t - 1} \sum_{j=t+1}^{n} s_i b_j (p_j - p_i) \quad \forall t\in [n] \label{eq:discretization7}
    \end{align}
    Additionally, $\{b_i\}_{i\in [n]}$ and $\{s_i\}_{i\in [n]}$ only depends on $F_B$ and $F_S$ respectively.
\end{lemma}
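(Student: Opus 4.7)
The plan is to construct $\{s_i\}$ and $\{b_j\}$ via a mean-preserving randomized rounding of $F_S$ and $F_B$ onto the grid $\{p_1,\ldots,p_n\}$, modified for values above $p_n$. Concretely, for the seller: on each interval $[p_i,p_{i+1})$ with $i<n$, split the probability mass of $F_S$ linearly between the two endpoints using the standard weights $(p_{i+1}-s)/(p_{i+1}-p_i)$ at $p_i$ and $(s-p_i)/(p_{i+1}-p_i)$ at $p_{i+1}$; for $s\ge p_n$, inflate and deposit total mass $s/p_n$ at $p_n$. Define $s_i$ to be the total mass that this procedure assigns to $p_i$, and define $b_j$ analogously from $F_B$. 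By construction, $\{s_i\}$ depends only on $F_S$ and $\{b_j\}$ only on $F_B$, which yields the final sentence of the lemma.

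Most of the constraints follow by direct calculation. Non-negativity (\ref{eq:discretization1}) is immediate, and the expectation identities (\ref{eq:discretization3})--(\ref{eq:discretization4}) hold because the linear split preserves the mean on each bounded interval and the inflation rule satisfies $p_n\cdot(s/p_n)=s$. The sum bounds (\ref{eq:discretization2}) reduce to the observation that the only excess of $\sum_i s_i$ over $1$ comes from the inflated tail, and equals $\int_{[p_n,\infty)}(s/p_n-1)\,dF_S(s)\le\E[S]/p_n$. The prefix bound (\ref{eq:discretizationNew}) holds because $\sum_{i=1}^t s_i$ absorbs all mass $F_S$ places on $[0,p_t)$ (plus the ``down'' portion of $[p_t,p_{t+1})$), and (\ref{eq:discretization6}) holds because $\sum_{i=1}^{n-1} s_i$ only receives contributions from the partition of $[0,p_n)$, whose total probability is at most $1$.

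For the OPT upper bound (\ref{eq:discretization5}), I would exploit the convexity of $\max$. Restricted to $S,B<p_n$, the construction is a genuine independent randomized rounding with $\E[s'\mid S]=S$ and $\E[b'\mid B]=B$; pointwise Jensen gives $\max(S,B)\le\E_{\text{round}}[\max(s',b')\mid S,B]$, and integrating yields $\E[\max(S,B)\indic[S,B<p_n]]\le\sum_{i,j}s_i^{(\text{in})}b_j^{(\text{in})}\max(p_i,p_j)$, where $s_i^{(\text{in})},b_j^{(\text{in})}$ denote the ``inner'' parts of $s_i,b_j$. For the tail regions, direct bookkeeping shows the inflated mass at $p_n$ exactly accounts for $\E[S\indic[S\ge p_n]]\cdot\Pr[B<p_n]$ (and symmetrically for the buyer tail); the doubly-tail contribution $\E[\max(S,B)\indic[S,B\ge p_n]]$ is controlled using the tighter bound $\max(S,B)\le S+B-p_n$ when $S,B\ge p_n$ together with a short calculation involving the inflated cross-term at $(p_n,p_n)$.

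The ALG lower bound (\ref{eq:discretization7}) will be the main obstacle. After peeling off $\sum_i s_ip_i=\E[S]$ using (\ref{eq:discretization3}), it remains to show
\[
\sum_{i<t<j} s_i b_j\,(p_j-p_i)\ \le\ \E_{S,B}\!\left[(B-S)\indic[S\le p_t\le B]\right].
\]
I would prove this pointwise under the same coupling: for each fixed $(S,B)$, compute $\E_{\text{round}}\bigl[(b'-s')\indic[s'<p_t<b']\mid S,B\bigr]$ and bound it by $(B-S)\indic[S\le p_t\le B]$. The delicate case is $S\in(p_{t-1},p_t)$ and $B\in(p_t,p_{t+1})$, in which both sources of randomness matter; writing $x=p_t-S,\ y=B-p_t,\ a=p_t-p_{t-1},\ c=p_{t+1}-p_t$, the bound reduces to $xy(a+c)/(ac)\le x+y$, which is equivalent to $\tfrac{1}{x}+\tfrac{1}{y}\ge\tfrac{1}{a}+\tfrac{1}{c}$ and follows immediately from $x\le a,\ y\le c$. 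The remaining cases ($S\le p_{t-1}$ or $B\ge p_{t+1}$, with the other side possibly in an adjacent interval) reduce to simpler elementary inequalities such as $(x-a)(B-p_t)\le 0$. Integrating the pointwise bound over $(S,B)\sim F_S\otimes F_B$ then yields (\ref{eq:discretization7}).
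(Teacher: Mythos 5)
Your construction is exactly the paper's: on each interval $[p_i,p_{i+1})$ the mass of $F_S$ is split between the two endpoints so as to preserve both probability and mean (the paper's $s_{i,\LF},s_{i+1,\RT}$), and the tail mass above $p_n$ is inflated onto $p_n$ at rate $S/p_n$; constraints \eqref{eq:discretization1}--\eqref{eq:discretization4} and \eqref{eq:discretization6} then follow by the same direct bookkeeping in both write-ups. For \eqref{eq:discretization5}, your ``pointwise Jensen'' step is the same per-rectangle convexity inequality the paper uses, namely $\max(x,y)\le \frac{p_{i+1}-x}{p_{i+1}-p_i}\max(p_i,y)+\frac{x-p_i}{p_{i+1}-p_i}\max(p_{i+1},y)$, and your doubly-tail bound $\max(S,B)\le S+B-p_n$ works just as well as the paper's $\max(S,B)\le SB/p_n$ (it is in fact tighter: the slack factors as $\left(E_s-p_nq_s\right)\left(E_b-p_nq_b\right)/p_n\ge 0$ with $E_s=\E[S\indic[S\ge p_n]]$, $q_s=\Pr[S\ge p_n]$, etc.). The one place where you genuinely diverge is \eqref{eq:discretization7}: you verify it pointwise under the rounding coupling, reducing the delicate case $S\in(p_{t-1},p_t)$, $B\in(p_t,p_{t+1})$ to $\tfrac1x+\tfrac1y\ge\tfrac1a+\tfrac1c$ and the buyer-tail case to $\tfrac{B}{p_n}(p_n-S)\le B-S$, and then integrating; the paper instead stays at the aggregate level, using the exact decompositions $\E[B\,\indic[B\ge p_t]]=\sum_{j=t+1}^{n}b_jp_j+b_{t,\LF}p_t$, $\Pr[S<p_t]=\sum_{i=1}^{t-1}s_i+s_{t,\RT}$, $\E[S\,\indic[S<p_t]]=\sum_{i=1}^{t-1}s_ip_i+s_{t,\RT}p_t$ together with $\Pr[B\ge p_t]\le\sum_{j=t+1}^{n}b_j+b_{t,\LF}$, and then expanding the product and discarding nonnegative terms. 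Both verifications are sound: yours localizes the difficulty to two elementary scalar inequalities and makes transparent why only the event $\{S<p_t\le B\}$ is needed, while the paper's avoids the case analysis at the cost of manipulating aggregate identities. The only phrasing caution is that for $B\ge p_n$ the rounding is not a conditional probability distribution (its total mass is $B/p_n\ge 1$), so the coupling statements there must be written in terms of the inflated mass rather than a conditional expectation -- which is how you in fact handle the tail, so this is cosmetic.
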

\begin{proof}
   We construct $(s_1,\cdots,s_n,b_1,\cdots,b_n)$ as follows. For the seller, define \[q_{s,i} = \Pr_{S\sim F_S}[p_i\le S < p_{i+1}] \text{ and } E_{s,i} = \E_{S\sim F_S}[S\cdot\indic[p_i\le S < p_{i+1}]], ~ \forall i\in [n],\]

    where we assume that $p_{n+1} = +\infty$. It is clear from the definition that $q_{s,i} \cdot p_i \le E_{s,i} \le q_{s,i} \cdot p_{i+1}$. Therefore, for any $i\in [n - 1]$, there exists non-negative numbers $s_{i,\LF}$ and $s_{i + 1, \RT}$ such that  
    \begin{equation}\label{eq:discredef}
        s_{i,\LF} + s_{i + 1, \RT} = q_{s,i}~~ \text{and}~~s_{i,\LF} \cdot p_i + s_{i + 1, \RT}\cdot p_{i+1} = E_{s,i}.
    \end{equation}

   {We further define $s_{n,\LF}$ as $E_{s,n} / p_n$ and $s_{1,\RT} = 0$. Now set $s_i = s_{i,\LF} + s_{i,\RT}$ for all $i\in [n]$. For the buyer, we define $\{b_i\}_{i\in [n]}$ similarly.} Therefore, from the construction, it is clear that $\{b_i\}_{i\in [n]}$ and $\{s_i\}_{i\in [n]}$ only depend on $F_B$ and $F_S$ respectively.

    We now verify that $\left(\{s_i\}_{i\in [n]}, \{b_i\}_{i\in [n]}\right)$ satisfies the properties above. The non-negativity of $s_i$ and $b_i$ is immediately derived from $s_{i,\LF},s_{i,\RT} \geq 0$. { From our definition, it is clear that  $\sum_{i=1}^n q_{s,i} = 1$, therefore} 
    
    \[\sum_{i=1}^n s_i  = \sum_{i=1}^n s_{i,\LF} + s_{i,\RT} \geq \sum_{i=1}^n q_{s,i} = 1,\]

    and

    \[\sum_{i=1}^{n - 1} s_i = \sum_{i=1}^{n-1} s_{i,\LF} + s_{i,\RT} \leq \sum_{i=1}^n q_{s,i} = 1.\]
    
    We could also see that
       { \[\sum_{i=1}^n s_i = \sum_{i=1}^n s_{i,\LF} + s_{i,\RT} \leq \sum_{i=1}^n q_{s,i} + s_{n,\LF} = 1 + \frac{E_{s,n}}{p_n} \leq 1 + \frac{\mathbb{E}[S]}{p_n}.\]}

	For any $t\in [n]$, it is clear that 
	\[\sum_{i=1}^t s_i = \sum_{i=1}^t s_{i,\LF} + s_{i,\RT} \geq \sum_{i=1}^{t-1} q_{s,i} = \Pr[S < p_t].\]

    For the expectations, it holds that
    \[\sum_{i=1}^n s_i p_i = \sum_{i=1}^n \left(s_{i,\LF} + s_{i,\RT}\right)\cdot p_i = \sum_{i=1}^n E_{s,i} = \E[S]\]

    By symmetry, similar inequalities also holds for for $\{b_i\}_{i\in [n]}$. So far, we have verified that properties \eqref{eq:discretization1}, \eqref{eq:discretization2}, \eqref{eq:discretizationNew}, \eqref{eq:discretization6}, \eqref{eq:discretization3} and \eqref{eq:discretization4} are satisfied. It only remains to show that \eqref{eq:discretization5} and\eqref{eq:discretization7} holds.

    For any $i\neq j\in [n]$, w.l.o.g. we can assume that $i < j$.  We could see that 
\begin{align}\label{eq:FullVeri1}
    \begin{split}        
        &\E_{S\sim F_S\atop B\sim F_B}\Big[\max(S,B)\cdot \indic[S\in [p_i,p_{i+1})] \indic[B \in [p_j,p_{j+1})]\Big] \\
        & =  \E_{S\sim F_S\atop B\sim F_B}\Big[B\cdot  \indic[S\in [p_i,p_{i+1})] \indic[B \in [ p_j,p_{j+1})]\Big]\\
       & =  \E_{B\sim F_B}\Big[B\cdot \indic[B \in [p_j,p_{j+1})]\Big] \cdot \Pr_{S\sim F_S}\Big[ S\in [p_i, p_{i+1})\Big]\\
       & =  E_{b, j} \cdot q_{s, i}\\
       & = (b_{j,\LF} \cdot p_j + b_{j + 1, \RT} \cdot p_{j+1}) \cdot (s_{i,\LF} + s_{i+1, \RT})\\
       & = s_{i,\LF} \cdot b_{j,\LF} \max(p_i, p_j) + s_{i+1,\RT} \cdot b_{j,\LF} \max(p_{i+1}, p_j)   \\
       &\quad + s_{i,\LF} \cdot b_{j+1,\RT} \max(p_i, p_{j+1}) + s_{i+1,\RT} \cdot b_{j+1,\RT} \max(p_{i+1}, p_{j+1})
    \end{split}
    \end{align}
  The second equality is due to the independence between $S$ and $B$.

    Now consider the case when $i = j \leq n - 1$. For any $x, y\in [p_i, p_{i+1}]$, we have 
    \begin{align*}
        \begin{split}
            \max(x,y) \le \max(p_i, y) \cdot \frac{p_{i+1} - x}{p_{i + 1} - p_i}+ \max(p_{i+1}, y) \cdot\frac{x - p_i}{p_{i+ 1} - p_{i}},
        \end{split}
    \end{align*} as $\frac{p_{i+1} - x}{p_{i + 1} - p_i} + \frac{x-p_{i}}{p_{i+ 1} - p_{i}} = 1$ and $\frac{p_{i+1} - x}{p_{i + 1} - p_i} \cdot p_i + \frac{x - p_{i}}{p_{i+ 1} - p_{i}} \cdot p_{i+1} = x$.

  Based on the inequality above, for any fixed $y\in [p_i, p_{i+1})$, we have
\begin{align*}
        \begin{split}
            &\E_{S\sim F_S}\Big[\max(S, y)\cdot \indic[S\in [p_i, p_{i+1})]\Big] \\
            \leq & \E_{S\sim F_S}\left[\left( \max(p_i, y) \cdot \frac{p_{i+1} - S}{p_{i + 1} - p_i}+ \max(p_{i+1}, y) \cdot\frac{S - p_i}{p_{i+ 1} - p_{i}} \right) \indic[S\in [p_i,p_{i+1})]\right]\\
             = &\max(p_i, y) \E_{S\sim F_S}\left[\frac{p_{i+1} - S}{p_{i+1} - p_i} \cdot \indic[S\in [p_i, p_{i+1})]\right] + \max(p_{i+1}, y)\E_{S\sim F_S}\left[\frac{S - p_i}{p_{i+1} - p_i}\cdot \indic[S\in [p_i, p_{i+1})]\right]\\
             = &y\cdot s_{i,\LF} + p_{i+1}\cdot s_{i+1,\RT}
        \end{split}
    \end{align*} The last equality is because of the following identities:
    \begin{align*}
        \begin{split}
    \E_{S\sim F_S}\left[\frac{p_{i+1} - S}{p_{i+1} - p_i} \cdot \indic[S\in [p_i, p_{i+1})]\right] + \E_{S\sim F_S}\left[\frac{S - p_i}{p_{i+1} - p_i}\cdot \indic[S\in [p_i, p_{i+1})]\right] &= \Pr[S\in [p_i, p_{i+1})]]        \\
    p_i\cdot \E_{S\sim F_S}\left[\frac{p_{i+1} - S}{p_{i+1} - p_i} \cdot \indic[S\in [p_i, p_{i+1})]\right] + p_{i+1} \cdot \E_{S\sim F_S}\left[\frac{S - p_i}{p_{i+1} - p_i}\cdot \indic[S\in [p_i, p_{i+1})]\right] &= \E[S\cdot \indic[S\in[p_i, p_{i+1})]].
        \end{split}
    \end{align*}

    Hence, we can conclude that $\left(\E_{S\sim F_S}\left[\frac{p_{i+1} - S}{p_{i+1} - p_i} \cdot \indic[S\in [p_i, p_{i+1})]\right], \E_{S\sim F_S}\left[\frac{S - p_i}{p_{i+1} - p_i}\cdot \indic[S\in [p_i, p_{i+1})]\right]\right)$ is the unique solution to \eqref{eq:discredef}. Thus, these two numbers respectively equal to $s_{i,\LF}$ and $s_{i+1,\RT}$.

    Due to the inequality above, we have
\begin{align}\label{eq:FullVeri2}
        \begin{split}
            &\E_{B\sim F_B}\left[ \E_{S\sim F_S}\Big[\max(S, B)\cdot \indic[S\in [p_i, p_{i+1})]\Big]\cdot \indic[B\in [p_i, p_{i+1})]\right]\\
             {\leq} &\E_{B\sim F_B}\left[\left(B\cdot s_{i,\LF} + p_{i+1}\cdot s_{i+1,\RT}\right)\cdot \indic[B\in [p_i, p_{i+1})]\right]\\
             = &b_{i,\LF} s_{i,\LF}\cdot p_i + b_{i + 1, \RT} s_{i,\LF}\cdot p_{i+1} + b_{i,\LF} s_{i+1,\RT} \cdot p_{i+1} + b_{i + 1,\RT} s_{i+1,\RT}\cdot p_{i+1}
        \end{split}
    \end{align}

    The last special case is when $i = j = n$. 
\begin{align}\label{eq:FullVeri3}
    \begin{split}        
        \E_{S\sim F_S\atop B\sim F_B}\left[\max(S,B)\cdot \indic[S\geq p_n] \indic[B\geq p_n]\right] &{\leq} \E_{S\sim F_S\atop B\sim F_B}\left[B S/ p_n\cdot \indic[S\geq p_n] \indic[B\geq p_n]\right]\\
         &= p_n \cdot \left(\E_{S\sim F_S}\left[S \cdot \indic[S\geq p_n]\right]/p_n\right) \cdot \left(\E_{S\sim F_B}\left[{B}\cdot \indic[B\geq p_n]\right]/p_n\right) \\
         &= s_{n,\LF} \cdot b_{n,\LF} \cdot p_n.
    \end{split}
    \end{align} Combining inequality~\eqref{eq:FullVeri1}, \eqref{eq:FullVeri2} and \eqref{eq:FullVeri3}, we have
    \begin{align*}
        \begin{split}
            \sum_{i=1}^n \sum_{j=1}^n s_i b_j \cdot \max(p_i,p_j) &\geq \sum_{i=1}^n \sum_{j=1}^n \E_{S\sim F_S\atop B\sim F_B}\Big[\max(S, B)\indic[S\in [p_{i}, p_{i+1})]\indic[B\in [p_j, p_{j+1})]\Big]  \\
                &= \E_{S\sim F_S\atop B\sim F_B}\left[\max(S, B)\right],
        \end{split}
    \end{align*}
   so inequality~\eqref{eq:discretization5} is satisfied.
    
Finally, we are only left to show that property~\eqref{eq:discretization7} holds. For any $t\in [n]$, it follows that
\begin{small}
	    \begin{align}
        \begin{split}
            \ALGW(\I, p_t) & = \E_{S\sim F_S}[S] + \E_{S\sim F_S\atop B\sim F_B}\left[(B - S)\cdot \indic[S\leq p_t\leq B]\right]\\
                            & \geq \E_{S\sim F_S}[S] + \E_{S\sim F_S\atop B\sim F_B}\left[(B - S)\cdot \indic[S< p_t\leq B]\right]\\
                            &= \sum_{i=1}^n E_{s,i} + \E_{B\sim F_B}[B\cdot \indic[B \geq p_t]]\cdot \Pr_{S\sim F_S}[S< p_t] - \E_{S\sim F_S}[S\cdot \indic[S < p_t]]\cdot \Pr_{B\sim F_B}[B\geq p_t]\\
& \geq \sum_{i=1}^n s_i \cdot p_i + \left(\sum_{j=t+1}^n b_j\cdot p_j + b_{t,\LF}\cdot p_t\right)\cdot \left(\sum_{i=1}^{t-1} s_i + s_{t,\RT}\right) \\
       & \quad\quad\quad\quad\quad~ - \left(\sum_{i=1}^{t-1} s_i\cdot p_i + s_{t,\RT}\cdot p_t\right)\cdot \left(\sum_{j=t + 1}^n b_j + b_{t,\LF}\right)\\
       & = \sum_{i=1}^n s_i \cdot p_i + \sum_{i=1}^{t-1} \sum_{j=t + 1}^n s_i b_j (p_j - p_i) + \sum_{j=t+1}^n b_j\cdot s_{t,\RT}\cdot (p_j - p_t) + \sum_{i=1}^{t-1} s_i \cdot b_{t,\LF}\cdot (p_t - p_i)\\
       & \geq \sum_{i=1}^n s_i \cdot p_i + \sum_{i=1}^{t-1} \sum_{j=t+1}^n s_i b_j (p_j - p_i).
        \end{split}
    \end{align}
    \end{small}where the second inequality follows from the fact that 
    \[\Pr[B\geq p_t] = \sum_{j=t}^{n-1} (b_{j,\LF} + b_{j+1,\RT}) + \Pr[B\geq p_n]{\leq \sum_{j=t}^{n-1} (b_{j,\LF} + b_{j+1,\RT}) +b_{n,\LF}}\leq \sum_{j=t+1}^n b_j + b_{t,\LF} \]

    Therefore, we could see that inequality~\eqref{eq:discretization7} holds. This finishes our proof.
\end{proof}

With the lemma above, we are ready to give the proof of Lemma~\ref{lem:fullinfolower}.

    Consider the following fixed-price mechanism: Given any instance $\I = (F_S, F_B)$, we first compute the optimal welfare of the instance. Suppose $\OPTW(\I) = c$, we choose the fixed price $p^*$ from $\{c p_1,\cdots, c p_n\}$ to maximizes the welfare, i.e., $p^* \in \arg\max_{p\in \{c p_1,\cdots,cp_n\}} \ALGW(\I, p)$. In the following, we show that this mechanism is an $r^{*}$-approximation to the optimal welfare.

   Note that the approximation ratio of our mechanism is independent of $c$.\footnote{The price $p^*$ depends on $c$, but the approximation ratio to the optimal welfare does not.} To keep our analysis clean, we first assume that the instance $\I = (F_S,F_B)$ has optimal welfare $1$. The approximation ratio of our mechanism could be written as
    \begin{align*}
        \begin{split}
            \min_{\I = (F_S,F_B)\atop \OPTW(\I) = 1} \max_{p\in \{p_1,p_2,\cdots, p_n\}} \ALGW(\I, p).
        \end{split}
    \end{align*}

    Next, we argue that for any instance $\I = (F_S, F_B)$ satisfying $\OPTW(\I) = 1$, there exists a valid solution $(s_1,\cdots,s_n,b_1,\cdots,b_n,r)$ of $\mathsf{LowerOp}$ such that $r \le \max_{p\in \{p_1,p_2,\cdots,p_n\}} \ALGW(\I, p)$. This immediately implies that $r^{*}$ is a lower bound of the approximation ratio.
    
	Given an instance $\I = (F_S, F_B)$ s.t. $\OPTW(\I) = 1$, the solution $(s_1,\cdots,s_n,b_1,\cdots,b_n,r)$ is constructed as follows. Let $(s_1,s_2,\cdots, s_n,b_1,b_2,\cdots, b_n)$ be the set of numbers that satisfies all the properties stated in Lemma~\ref{lem:discretization}. Let $r$ be $$\max_{t\in [n]} \sum_{i=1}^n s_i p_i + \sum_{i=1}^{t-1} \sum_{j=t+1}^{n} s_i b_j (p_j-p_i).$$

    We first verify that $(\{s_i\}_{i\in [n]}, \{b_i\}_{i\in [n]}, r)$ is a valid solution of $\mathsf{LowerOp}$. Notice that $\E[S]\leq \E[\max(S,B)] = 1$ and $\E[B] \leq \E[\max(B,S)] = 1$. Therefore, constraints~\eqref{eq:LowerOp1}, \eqref{eq:LowerOp2} and \eqref{eq:LowerOp3} directly follows from inequality~\eqref{eq:discretization1} and \eqref{eq:discretization2}. What's more, we could see \eqref{eq:LowerOp5} holds by the definition of $r$.
    
    Now by property~\eqref{eq:discretization5}, we have
    \begin{align*}
        \begin{split}
            \sum_{i=1}^n \sum_{j=1}^n s_i b_j \cdot \max(p_i,p_j) \geq \sum_{i=1}^n \sum_{j=1}^n \E_{S\sim F_S\atop B\sim F_B}\Big[\max(S, B)\indic[S\in [p_{i}, p_{i+1})]\indic[B\in [p_j, p_{j+1})]\Big]  = 1,
        \end{split}
    \end{align*}
   so constraint~\eqref{eq:LowerOp4} is satisfied.
    
Finally, we are only left to show that the best price in $\{p_1,p_2,\cdots, p_k\}$ must obtain an approximation ratio that is at least $r$ on instance $\I$, i.e., $r \leq \max_{p\in \{p_1,p_2,\cdots, p_k\}} \ALGW(\I, p)$.  Inequality~\eqref{eq:discretization7} states that
\[\ALGW(\I, p_t)\geq \sum_{i=1}^n s_i p_i+\sum_{i=1}^{t - 1} \sum_{j=t+1}^{n} s_i b_j (p_j - p_i).\]

Taking maximum over $t\in [n]$, we then get that 
	\[r = \max_{t\in [n]}\sum_{i=1}^n s_i \cdot p_i + \sum_{i=1}^{t-1} \sum_{j=t+1}^n s_i b_j (p_j - p_i) \leq \max_{t\in [n]} \ALGW(\I, p_t)\]
	
	which finishes our proof.

\subsection{Proof of Lemma~\ref{lem:fullinfoupper}}
\label{appendix:prooffullinfoupper}
In the following, we complete the proof of Lemma~\ref{lem:fullinfoupper}.

\begin{center}
\begin{tabular}{|c|}
\hline 
The Optimization Problem $\mathsf{UpperOp}$ \\
\hline
\parbox{15cm}{
\begin{align}
\min_{s_1,s_2\cdots, s_n\atop b_1,b_2,\cdots,b_n,r} \quad r \nonumber \\
\textsf{s.t.} \quad  & s_i, b_i \geq 0 & \forall i \in [n]\nonumber\\
& \sum_{i=1}^n s_i = 1 \quad \text{ and } \quad \sum_{i=1}^n b_i = 1 \nonumber\\
& \sum_{i=1}^n \sum_{j=1}^n s_i b_j \max(p_i,p_j) \geq 1 & \nonumber    \\
& \sum_{i=1}^n s_i p_i+\sum_{i=1}^{t} \sum_{j=t+1}^{n} s_i b_j (p_j - p_i) \leq r & \forall t \in [n] \nonumber
\end{align}
}\\
\hline 
\end{tabular}
\end{center}

	For any fixed support $0 = p_1 < p_2 < \cdots < p_n$ and a valid solution $(s_1, s_2, \cdots, s_n, b_1, b_2,\cdots, b_n, r)$, define an instance $\I = (F_S, F_B)$ satisfying
 
\begin{align*}
\begin{split}
S \sim F_S, S=\left\{
\begin{aligned}
&p_1 + \varepsilon  \quad & w.p. \quad s_1\\
&p_2 + \varepsilon  \quad & w.p. \quad s_2\\
& \cdots & \\
&p_n + \varepsilon  \quad & w.p. \quad s_n
\end{aligned}
\right.
\quad\quad\quad
B \sim F_B, B=\left\{
\begin{aligned}
&p_1   \quad & w.p. \quad b_1\\
&p_2   \quad & w.p. \quad b_2\\
& \cdots & \\
&p_n   \quad & w.p. \quad b_n
\end{aligned}
\right.
\end{split}
\end{align*} where $\varepsilon > 0$ is a constant that small enough.

It is easy to see that both $F_S$ and $F_B$ are valid distributions since the $\mathsf{UpperOp}$ requires the non-negativity of $s_i, b_i$ and $\sum_{i=1}^n s_i = \sum_{i=1}^n b_i = 1$. Next, we aim to show that no fixed-price mechanism have an approximation ratio of $r + \varepsilon$ on this instance $\I = (F_S, F_B)$. For any $x\in \mathbb{R}_{\ge 0}$, we could first see that $x < {\varepsilon}$ would never be a optimal price. Thus let $p_i$ be the largest $p\in \{p_1,p_2,\cdots, p_n\}$ that is not greater than $x - \varepsilon $. Notice that both $F_S$ is a distribution on support $\{p_i + \varepsilon\}_{i\in [n]}$ and $F_B$ is a discrete distribution on support $\{p_i\}_{i\in [n]}$. This means choosing $p_i + \varepsilon$ instead of $x$ would never become worse. Therefore, we could see that the optimal fixed-price mechanism on this instance is simply choosing one $p_t\in \{p_1,p_2,\cdots, p_k\}$ that maximizes $\ALGW(\I, p_t + \varepsilon )$. Again, by the fact that $F_S$ and $F_B$ are discrete distributions, $\ALGW(\I, p_t + \varepsilon )$ could be written as:
\begin{align*}
    \begin{split}
        \ALGW(\I, p_t + \varepsilon ) &= \sum_{i=1}^n (p_i + \varepsilon) s_i + \sum_{i=1}^t \sum_{j=t+1}^n s_i b_j (p_j - p_i - \varepsilon) \\& \leq \sum_{i=1}^n p_i s_i + \sum_{i=1}^t \sum_{j=t+1}^n s_i b_j (p_j - p_i) + \varepsilon.  
    \end{split}
\end{align*}

Also notice that the constraints of $\mathsf{UpperOp}$ guarantee that 
\begin{align*}
    \begin{split}
        \OPTW(\I) = \sum_{i=1}^n\sum_{j=1}^n s_i b_j \max(p_i + \varepsilon,p_j)  \geq  \sum_{i=1}^n\sum_{j=1}^n s_i b_j \max(p_i,p_j) \geq 1.
    \end{split}
\end{align*}

Therefore, the approximation ratio of the optimal fixed-price mechanism on instance $\I = (F_S, F_B)$ is upper bounded by
\begin{align*}
    \begin{split}
        \frac{\max_{t\in [n]} \ALGW(\I, p_t + \varepsilon)}{\OPTW(\I)} \le \max_{t\in [n]} \sum_{i=1}^n p_i s_i + \sum_{i=1}^t \sum_{j=t}^n s_i b_j (p_j - p_i) + \varepsilon = r + \varepsilon.
    \end{split}
\end{align*}

And this finishes our proof.

\subsection{Proof of Lemma~\ref{lem:optconvergence}}
\label{appendix:proofconvergence}

In this section, we assume that $\varepsilon > 0$ is a small enough constant such that $\varepsilon^2 \ll \varepsilon$.

We first show that, for any $\varepsilon > 0$, there exists a set of support $\{p_1,p_2,\cdots, p_n\}$ such that $\mathsf{UpperOp}$ has an optimal value of at most  $r^{*} + \varepsilon$.

As we show before, we could assume that the instance $\I = (F_S, F_B)$ has optimal welfare $1$. Thus, the approximation ratio of the optimal fixed-price mechanism is
\begin{align*}
    \begin{split}
      r^{*} = \min_{\I = (F_S, F_B)\atop \OPTW(\I) = 1} \max_{p\in \mathbb{R}} {\ALGW(\I, p)}.
    \end{split}
\end{align*}

Suppose $r^{*}$ is attained at $\I^{*} = (F_S^*, F_B^*)$. Now define $n = 1/\varepsilon^4$, and $p_i = i\cdot (1/\varepsilon^2) + \varepsilon / 2$  for $i\in [n+1]$. Our idea is to construct a valid solution $\{s_i, b_i\}_{i\in [n+1]}$ by rounding up $\I^{*}$ to $p_i$ and show that this solution has an objective value that is close to $r^*$.

Suppose $p_0 = 0$. Now we define \[s_i = \Pr_{S\sim F_S^*}\Big[S\in [(i-1) \varepsilon^2, i \varepsilon^2)\Big] \text{ and } b_i = \Pr_{B\sim F_B^*}\Big[B\in [(i-1) \varepsilon^2, i \varepsilon^2)\Big]\]
for $i\in [n]$. Especially, let
\[s_{n+1} = \E_{S\sim F_S^*}\Big[S\cdot \indic[S \geq n\varepsilon^2]\Big]/\left(n\varepsilon^2\right) \text{ and } b_{n+1} = \E_{B\sim F_B^*}\Big[B\cdot \indic[B \geq n\varepsilon^2]\Big]/\left(n\varepsilon^2\right).\]

Since $\E[S]$ and $\E[B]$ are upper bounded by $1$, we could see that $s_{n+1},b_{n+1}\leq \varepsilon^2$. In the last, let $s = \sum_{i=1}^{n+1} s_i$ and $b = \sum_{i=1}^{n+1}$ be the normalization factors. It's also straightforward to see that $s\leq \sum_{i=1}^n s_i + s_{n+1} \leq 1 + \varepsilon^2$. Following the same argument, it also holds that $b\leq 1 + \varepsilon^2$. Now define \[r = \max_{t\in [n]} \sum_{i=1}^n (s_i/s)p_i + \sum_{i=1}^t\sum_{j=t+1}^n (s_i/s) (b_j/b)(p_j-p_i).\] We aim to verify that $(s_1/s,s_2/s,\cdots,s_{n+1}/s,b_1/b,b_2/b,\cdots,b_{n+1}/b,r)$ is a valid solution of $\mathsf{UpperOp}$.

It is easy to see the non-negativity of $s_i,b_i$ and $\sum_{i=1}^{n+1} s_i/s = \sum_{i=1}^{n+1} b_i/b = 1$. What's more, from the definition of $r$, we could see the last constraint holds. Now we only need to check the third constraint. For any $i,j\in [n]$, it holds that
\begin{align*}
    &\E_{S\sim F_S^*\atop B\sim F_B^*}\Big[\max(S, B)\indic[S\in [(i-1) \varepsilon^2, i \varepsilon^2)]\indic[B\in [(j-1) \varepsilon^2, j \varepsilon^2)]\Big] \\
    & \leq (\max(p_i, p_j) - \varepsilon / 4) s_i b_j
\end{align*}

When one of $i,j$ equals to $n+1$(we can assume $i = n+1$ w.l.o.g.), it is true that
\begin{align*}
    &\E_{S\sim F_S^*\atop B\sim F_B^*}\Big[\max(S, B)\indic[S\geq n\varepsilon^2]\indic[B\in [(j-1) \varepsilon^2, j \varepsilon^2)]\Big] \\
    & = \E_{S\sim F_S^*}[S\cdot \indic[S\geq n\varepsilon^2]]\Pr_{B\sim F_B^*}[B\in [(j-1) \varepsilon^2, j \varepsilon^2)]\\
    & = \left(n\varepsilon^2\right) s_{n+1} b_j\\
    & \leq (p_{n+1} - \varepsilon / 4) s_i b_j
\end{align*}

Finally, for the special case that $i = j= n+1$, we could see that
\begin{align*}
    &\E_{S\sim F_S^*\atop B\sim F_B^*}\Big[\max(S, B)\indic[S\geq n\varepsilon^2]\indic[B\geq n\varepsilon^2]\Big] \\
    & \leq \E_{S\sim F_S^*\atop B\sim F_B^*}\Big[B S/\left(n\varepsilon^2\right)\cdot \indic[S\geq n\varepsilon^2]\indic[B\geq n\varepsilon^2]\Big]\\
    & = \left(n\varepsilon^2\right) s_{n+1} b_{n+1}\\
    & \leq (p_{n+1} - \varepsilon / 4) s_{n+1} b_{n+1}
\end{align*}

Summing up all the inequalities above, we then get that
\begin{align*}
    &\E_{S\sim F_S^*\atop B\sim F_B^*}[\max(S, B)]
     \leq \sum_{i=1}^{n+1} \sum_{j=1}^{n+1} \left(\max(p_i,p_j) - \varepsilon / 4\right) s_i b_j
\end{align*}

This implies that
\begin{align*}
    &\sum_{i=1}^{n+1}\sum_{j=1}^{n+1} (s_i/s) (b_j/b) \max(p_i,p_j)\\
    &\geq \sum_{i=1}^{n+1}\sum_{j=1}^{n+1} s_i b_j \max(p_i, p_j) \cdot \left(1 + \varepsilon^2\right)^{-2}\\
    & \geq \left(\E_{S\sim F_S^*\atop B\sim F_B^*}[\max(S, B)] +\sum_{i=1}^{n+1}\sum_{j=1}^{n+1} \varepsilon/4 s_ib_j\right)\cdot (1 - \varepsilon^2)^2\\
    & \geq 1 + \varepsilon/8.
\end{align*}
which means that $(s_1/s,s_2/s,\cdots,s_{n+1}/s,b_1/b,b_2/b,\cdots,b_{n+1}/b,r)$ is truly a valid solution. 

Next, we give an upper bound of $r$. To start with, notice that
\begin{align}
\label{eq:upperboundES}
\begin{split}
    \sum_{i=1}^{n+1} s_i p_i &= \sum_{i=1}^n \Pr\Big[S\in [(i-1) \varepsilon^2, i \varepsilon^2)\Big]\cdot \left((i-1)\varepsilon + \varepsilon^2 + \varepsilon /4 \right) + \E[S\cdot \indic[S\geq n\varepsilon^2]] + s_{n+1}\cdot(\varepsilon^2 + \varepsilon/4)\\
    & \leq \sum_{i=1}^n \E\Big[S\cdot\indic\left[S\in [(i-1) \varepsilon^2, i \varepsilon^2)\right] \Big] + \E[S\cdot \indic[S\geq n\varepsilon^2]] + \sum_{i=1}^{n+1} s_i (\varepsilon^2 + \varepsilon/4)\\
    & \leq \E[S] + \varepsilon/2.
\end{split}
\end{align}

For the term of gain from trade, it holds that 
\begin{align}
\label{eq:upperboundGFT}
\begin{split}
    \sum_{i=1}^t \sum_{j=t+1 }^{n+1} s_i b_j (p_j - p_i) &= \sum_{i=1}^t \sum_{j=t+1}^n s_i b_j(j\varepsilon^2-i\varepsilon^2) + \sum_{i=1}^t s_i b_{n+1} \left(n\varepsilon^2 - (i-1)\varepsilon^2\right)\\
    & \leq \sum_{i=1}^t \sum_{j=t+1}^n s_i b_j((j-1)\varepsilon^2-i\varepsilon^2) + \sum_{i=1}^t s_i b_{n+1} \left(n\varepsilon^2 - i\varepsilon^2\right) + \varepsilon^2 (1 + \varepsilon^2)^2\\
    & \leq \sum_{i=1}^t \sum_{j=t+1}^{n} \E[(B-S)\cdot\indic[S\in[(i-1)\varepsilon^2,i\varepsilon^2)]]\indic[B\in[(j-1)\varepsilon^2,j\varepsilon^2]]\\
    & \quad + \E[(B-S)\cdot \indic[S\in[0,t\cdot\varepsilon^2)]\indic[B\geq t\cdot \varepsilon^2]] + \varepsilon / 2\\
    & \leq  \E[(B-S)\cdot \indic[S\in[0,t\cdot\varepsilon^2)]\indic[B\geq t\cdot \varepsilon^2]] + \varepsilon / 2.
\end{split}
\end{align}

Combining~\eqref{eq:upperboundES} and \eqref{eq:upperboundGFT}, we know that for any $t\in [n + 1]$, 
\begin{align*}
    \begin{split}
        &\sum_{i=1}^{n+1} (s_i/s)p_i + \sum_{i=1}^t\sum_{j=t+1}^{n+1} (s_i/s) (b_j/b)(p_j-p_i)\\
        & \leq \sum_{i=1}^{n+1} s_i p_i+ \sum_{i=1}^t \sum_{j=t+1}^{n+1} s_i b_j (p_j - p_i)\\
& = \E[S] + \E[(B-S)\cdot \indic[S\in[0,t\cdot\varepsilon^2)]\indic[B\geq t\cdot \varepsilon^2]] + \varepsilon\\
        & \leq \ALGW(\I, t\cdot \varepsilon^2) + \varepsilon.
    \end{split}
\end{align*}

Taking maximum over $[n + 1]$, we then get that
\begin{align*}
    r = \max_{t\in [n+1]}\sum_{i=1}^{n+1} (s_i/s)p_i + \sum_{i=1}^t\sum_{j=t}^{n+1} (s_i/s) (b_j/b)(p_j-p_i) \leq \max_{t\in [n+1]}\ALGW(\I, t\cdot \varepsilon^2) + \varepsilon \leq r^* +\varepsilon.
\end{align*}

This means that the optimal value of $\mathsf{UpperOp}$ with respect to $\{p_i\}_{i\in[n+1]}$ is at most $r^*+\varepsilon$, and this finishes our proof.

Next, we aim to show that for any $\varepsilon > 0$, there exists $0 = p_1 < p_2 <\cdots p_n$ such that $\mathsf{LowerOp}$ has an optimal value of at least $r^{*} - \varepsilon$. We first present the following lemma that supports our proof.

\begin{lemma}\label{lem:generateinstance}
For any small enough constant $\varepsilon > 0$, suppose there exists a set of supports $\{p_i\}_{i\in [n]}$ with $\{s_i\}_{i\in [n]}$ and $\{b_{i}\}_{i\in [n]}$ that satisfies the subsequent conditions:
\begin{itemize}
	\item For all $i\in [n - 1]$, $p_{i+1}-p_i \leq \varepsilon^3$.
	\item $1\leq \sum_{i=1}^n s_i, \sum_{i=1}^n b_i \leq 1 + \varepsilon^3$.
	\item $\sum_{i=1}^n \sum_{j=1}^n \max(p_i,p_j) s_i b_j \geq 1$.
\end{itemize}

Let $r$ be defined as $\frac{\max_{t\in [n]} {\sum_{i=1}^n s_i p_i+\sum_{i=1}^{t - 1} \sum_{j=t+1}^{n} s_i b_j (p_j - p_i)}}{\sum_{i=1}^n \sum_{j=1}^n \max(p_i,p_j) s_i b_j }$.
Then, an instance $\I$ exists for which no fixed-price mechanism attains a welfare exceeding $r + \varepsilon$ times the optimal welfare; that is,
\[
    \frac{\max_{p\in \mathbb{R}}\ALGW(\I, p)}{\OPTW(\I)}  \leq r+\varepsilon.
\]
\end{lemma}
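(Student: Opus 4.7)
The plan is to explicitly construct a hard instance $\I = (F_S, F_B)$ from the given data $(p_i, s_i, b_i)$ and verify that no fixed price attains welfare exceeding $(r+\varepsilon)\cdot \OPTW(\I)$. Following the construction idea from the proof of Lemma~\ref{lem:fullinfoupper}, let $S := \sum_i s_i$ and $B := \sum_j b_j$ (both in $[1,1+\varepsilon^3]$ by hypothesis), and define $F_S$ to place mass $\tilde s_i := s_i / S$ at $p_i + \delta$ and $F_B$ to place mass $\tilde b_j := b_j / B$ at $p_j$, where $\delta > 0$ is a small parameter chosen strictly below both $\min_i(p_{i+1} - p_i)$ and $\varepsilon$. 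The tie-breaking reduction of Appendix~\ref{appendix:tiebreaking} converts this discrete instance to a continuous one without affecting the welfare.

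The first step is a clean lower bound on $\OPTW(\I)$. Letting $R := \sum_{i,j} s_i b_j \max(p_i, p_j)$, which is $\geq 1$ by hypothesis, a direct computation yields
\[
  \OPTW(\I) \;\geq\; \frac{1}{SB}\sum_{i,j} s_i b_j \max(p_i, p_j) \;=\; \frac{R}{SB} \;\geq\; \frac{1}{(1+\varepsilon^3)^2} \;\geq\; 1 - O(\varepsilon^3).
\]
The second step is a uniform upper bound on $\ALGW(\I, p)$. Because $F_S$ and $F_B$ are discrete, the optimal price may be restricted to a finite candidate set derived from $\{p_t,\,p_t+\delta\}_{t\in[n]}$. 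For each candidate price $p$, a routine computation along the lines of Lemma~\ref{lem:fullinfoupper} gives, for some index $t\in[n]$ associated with $p$,
\[
  \ALGW(\I, p) \;\leq\; \frac{1}{SB}\Big( B\sum_i s_i p_i \;+\; \sum_{i\leq t-1,\,j\geq t+1} s_i b_j (p_j - p_i) \Big) + O(\delta).
\]
Using the bound $\sum_i s_i p_i \leq R/B \leq R$ (which follows by introducing $\sum_j b_j/B = 1$ and then $p_i \leq \max(p_i,p_j)$), the term $(B-1)\sum_i s_i p_i / R$ is at most $\varepsilon^3$, so by the definition of $r$,
\[
  \frac{\ALGW(\I, p)}{\OPTW(\I)} \;\leq\; r + O(\varepsilon^3 + \delta) \;\leq\; r + \varepsilon
\]
for $\delta$ chosen small enough.

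The main obstacle I anticipate is ensuring that the trade region at every candidate optimal price matches the index pattern $\{i\leq t-1,\,j\geq t+1\}$ appearing in the numerator of $r$, rather than a larger pattern such as $\{i\leq t,\,j\geq t+1\}$ that naturally arises from a pure seller-shift construction for prices lying strictly in an interval $(p_t+\delta, p_{t+1})$. I would overcome this using the closeness hypothesis $p_{i+1} - p_i \leq \varepsilon^3$: for any candidate price yielding the larger trade region, the extra contribution $\sum_{j\geq t+1} s_t b_j (p_j - p_t)$ is split via the identity $p_j - p_t = (p_j - p_{t+1}) + (p_{t+1} - p_t)$, the second summand contributing an $O(\varepsilon^3)\cdot(1+\varepsilon^3)^2$ term (using $p_{t+1} - p_t \leq \varepsilon^3$ together with the mass bounds $s_t \leq S$, $\sum_j b_j \leq B$), and the first summand being absorbed into the quantity $g(t+1) := \sum_i s_i p_i + \sum_{i\leq t,\, j\geq t+2} s_i b_j (p_j - p_i)$, which is itself at most $rR$ by the definition of $r$ applied at index $t+1$. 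Chaining these estimates bridges the discrepancy between the trade region induced by the construction and the $r$-formula, and collecting all the $O(\varepsilon^3)$ and $O(\delta)$ slack completes the proof.
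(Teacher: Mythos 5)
Your overall strategy (shift the seller atoms up by $\delta$ and bound the finitely many candidate prices) is the natural starting point, but the construction you use is exactly the one from Lemma~\ref{lem:fullinfoupper}, and for the present lemma it is genuinely insufficient. Here $r$ is defined with the trade pattern $\{i\le t-1,\ j\ge t+1\}$, while every maximal trade region of your instance has the pattern $\{i\le t,\ j\ge t+1\}$; thus the welfare at the price $p_t+\delta$ exceeds the numerator of $r$ by the full row term $D_t=s_t\sum_{j\ge t+1}b_j(p_j-p_t)$, and $s_t$ can be a constant. Your proposed fix does not close this gap: writing $p_j-p_t=(p_j-p_{t+1})+(p_{t+1}-p_t)$ and noting that $\sum_{j\ge t+2}s_tb_j(p_j-p_{t+1})$ sits inside $g(t+1)\le rR=:v_1$ only produces a second bound of size $v_1$ on one piece of the welfare, which you would then have to add to the bound $A_t\le v_1$ on the remaining piece; this yields roughly $2v_1$, not $v_1+O(\varepsilon)v_2$ (with $v_2:=\sum_{i,j}s_ib_j\max(p_i,p_j)$). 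Chaining to index $t+1$ merely trades the extra seller row for an extra buyer column $\sum_{i\le t}s_ib_{t+1}(p_{t+1}-p_i)$, which is equally large in general, so the discrepancy cannot be bridged this way.

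In fact no bookkeeping can rescue the unsmoothed construction, because the lemma is false for it. Take the grid $p_i=(i-1)\varepsilon^3$ up to $p_n=2$, seller weights $1/2$ at $0$ and $1/2$ at $1$, buyer weights $1/2$ at $1+\varepsilon^3$ and $1/2$ at $2$; all hypotheses hold, and one computes $v_1=\tfrac54+O(\varepsilon^3)$ and $v_2=\tfrac32+O(\varepsilon^3)$, so $r\approx\tfrac56$. In your instance (seller atoms shifted up by $\delta<\varepsilon^3$) the single price $1+\delta$ is accepted by both sellers and both buyers, the trade always occurs, and the fixed-price welfare equals $\E[B]=\tfrac32+o(1)=(1-o(1))\cdot\OPTW(\I)$, i.e.\ the ratio is $1-o(1)>r+\varepsilon$. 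The paper's proof avoids this by first \emph{smoothing} the seller's weights: it appends $\lceil 4/\varepsilon\rceil$ extra grid points and replaces each $s_i$ by averages $s'_j$ of $\lceil 4/\varepsilon\rceil$ consecutive weights, so every seller atom has mass at most $\varepsilon/3$ while seller values move up by only $O(\varepsilon^2)$ (using $p_{i+1}-p_i\le\varepsilon^3$); then the extra row contributes at most $(\varepsilon/3)\sum_j b_jp_j\le(\varepsilon/3)v_2$, giving $\max_p\ALGW(\I,p)\le v_1+\tfrac{\varepsilon}{2}v_2$ and hence the claimed $r+\varepsilon$. This smoothing step is the key idea missing from your proposal.
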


Before proving Lemma~\ref{lem:generateinstance}, we first illustrate how this completes our proof. Define $n = \left\lceil\varepsilon^{-6}\right\rceil + 1$, and $p_i = (i - 1)\cdot \varepsilon^3$ for $i\in [n]$. Let $(s_1,\cdots,s_n,b_1,\cdots,b_n,r')$ be the optimal solution of the optimization problem $\mathsf{LowerOp}$ with respect to $\{p_i\}_{i\in [n]}$. It is equivalent to show that there exists an instance $\I=(F_S, F_B)$ such that the optimal approximation ratio of $\I$, i.e. $\max_{x\in \mathbb{R}}\frac{\ALGW(\I, x)}{\OPTW(\I)}$, is at most $r' + \varepsilon$. Notice that the definition of $\{p_i\}_{i\in [n]}$ directly implies fulfillment of the first condition in Lemma~\ref{lem:generateinstance}. Furthermore, since it is a valid solution of $\mathsf{LowerOp}$, we can deduce that the second and the third conditions are satisfied. The optimality of $(s_1,\cdots,s_n,b_1,\cdots,b_n,r')$ implies that $r = \frac{\max_{t\in [n]} {\sum_{i=1}^n s_i p_i+\sum_{i=1}^{t - 1} \sum_{j=t+1}^{n} s_i b_j (p_j - p_i)}}{\sum_{i=1}^n \sum_{j=1}^n \max(p_i,p_j) s_i b_j }$ is at most $r'$. By applying Lemma~\ref{lem:generateinstance}, we confirm the existence of an instance exhibiting an approximation ratio no greater than $r' + \varepsilon$, thereby completing our proof of Lemma~\ref{lem:optconvergence}.

\begin{proof}[Proof of Lemma~\ref{lem:generateinstance}]

In this proof, we define $v_1 = \max_{t\in [n]} {\sum_{i=1}^n s_i p_i+\sum_{i=1}^{t - 1} \sum_{j=t+1}^{n} s_i b_j (p_j - p_i)}$, $v_2 = \sum_{i=1}^n \sum_{j=1}^n \max(p_i,p_j) s_i b_j$, and thus $r = \frac{v_1}{v_2}$. We construct the instance as follows. Let $n' = n + \left\lceil\frac{4}{\varepsilon}\right\rceil$, and $s_i = b_i = 0$ for $n< i\le n'$. Now define $\{s'_i\}$ where 
\[s_{j}' = \sum_{i=\max\left(1, j - \left\lceil\frac{4}{\varepsilon}\right\rceil+1\right)}^{j} s_i / \left\lceil\frac{4}{\varepsilon}\right\rceil.\]

It follows that 
\[\sum_{j=1}^{n'} s'_{j} = \sum_{j = 1}^n\sum_{i=\max\left(1, j - \left\lceil\frac{4}{\varepsilon}\right\rceil+1\right)}^{j} s_i / \left\lceil\frac{4}{\varepsilon}\right\rceil = \sum_{i=1}^{n'} s_i.\]

Let $s = \sum_{i=1}^{n'} s'_i$ and $b = \sum_{i=1}^{n'} b_i$ be the normalization factors. The second condition guarantees that $1\leq s,b\leq 1 + \varepsilon^3$. What's more, we could also see that
\begin{align*}
    s_j' = \sum_{i=\max\left(1, j - \left\lceil\frac{4}{\varepsilon}\right\rceil+1\right)}^{j} s_i / \left\lceil\frac{4}{\varepsilon}\right\rceil \leq (1 + \varepsilon^3) \left\lceil\frac{4}{\varepsilon}\right\rceil \leq \varepsilon /3 .
\end{align*} holds for all $j\in [n']$.

Consider the following instance $\I = (F_S, F_B)$:

\begin{align*}
\begin{split}
    		 S \sim F_{S}, S=\left\{
\begin{aligned}
&p_1 + \varepsilon^4   \quad & w.p. \quad s'_1 / s\\
& \cdots & \\
&p_{n'} + \varepsilon^4 & w.p.\quad s'_{n'} / s
\end{aligned}
\right.
\quad\quad\quad
B \sim F_{B}, B=\left\{
\begin{aligned}
&p_1    \quad & w.p. \quad b_1 / b\\
& \cdots & \\
&p_{n'}  & w.p.\quad b_{n'} / b
\end{aligned}
\right.
\end{split}
\end{align*}

First, it is straight forward to verify this is a valid distribution. We first calculate $\OPTW(\I)$:
\begin{align*}
    \begin{split}
        \OPTW(\I) &= \sum_{i=1}^{n'} \sum_{j=1}^{n'} \max(p_i + \varepsilon^4, p_j)\cdot (s_i'/s) (b_j/b)\\
        & \geq \sum_{i=1}^{n'} \sum_{j=1}^{n'} \max(p_i, p_j)\cdot (s_i'/s) (b_j/b) - \varepsilon^4\\
        & \geq \sum_{i=1}^{n'} \sum_{j=1}^{n'} \max(p_i, p_j)\left(\sum_{k=\max\left(1, i - \left\lceil\frac{4}{\varepsilon}\right\rceil+1\right)}^{i} \left(s_k / \left\lceil\frac{4}{\varepsilon}\right\rceil\right)/s\right) \cdot (b_j / b)- \varepsilon^4\\
        & \geq \sum_{i=1}^{n'} \sum_{j=1}^{n'} \max(p_i, p_j)\cdot s_i b_j / (bs) - \varepsilon^4\\
& \geq \left(1 - \varepsilon^2\right)\cdot  v_2
           \end{split}
\end{align*}where the last inequality follows from $v_2 \geq 1$.

Now consider the optimal fixed-price mechanism for the instance. As we have shown in the proof of Lemma~\ref{lem:fullinfoupper}, the optimal mechanism only need to choose price from the support of the discrete distribution. This implies that
\begin{align*}
    \begin{split}
        \max_{p\in \mathbb{R}} \ALGW(\I, p) = \max_{t\in [n]}\sum_{i=1}^{n'} (p_i + \varepsilon^4) (s_i' / s) + \sum_{i=1}^t \sum_{j=t+1}^{n'} (s_i'/s) (b_j/b) (p_j - p_i - \varepsilon^4)
    \end{split}
\end{align*}

For any $t\in [n]$, one could see that
\begin{align}
\label{eq:LowerBoundES}
    \begin{split}
        \sum_{i=1}^{n'} p_i s'_i &= \sum_{i=1}^{n'}\left(\sum_{k=\max\left(1, i - \left\lceil\frac{4}{\varepsilon}\right\rceil+1\right)}^{i} \left(s_k / \left\lceil\frac{4}{\varepsilon}\right\rceil\right)\right) p_i\\
        &\leq \sum_{i=1}^{n'}\left(\sum_{k=\max\left(1, i - \left\lceil\frac{4}{\varepsilon}\right\rceil+1\right)}^{i} \left(s_k / \left\lceil\frac{4}{\varepsilon}\right\rceil\right)\cdot \left(p_k + \left\lceil\frac{4}{\varepsilon}\right\rceil \cdot  \varepsilon^3 \right)\right)\\
        & \leq \sum_{i=1}^{n'}  (p_i + 5\varepsilon^2) s_i
    \end{split}
\end{align}where the first inequality is because that the gap between any $p_i$ and $p_{i+1}$ is at most $\varepsilon^3$.

For the term of gain from trade, it follows that 
\begin{align}
\label{eq:LowerBoundGFT}
    \begin{split}
    \sum_{i=1}^t \sum_{j=t+1}^{n'} s_i' b_j (p_j - p_i) &= \sum_{i=1}^{t-1} \sum_{j=t+1}^{n'} s_i' b_j (p_j - p_i) + \sum_{j=t+1}^{n'} s_{t}' b_j (p_j - p_i)\\
   & \leq  \sum_{i=1}^{t-1} \sum_{j=t+1}^{n'} \left(\sum_{k=\max\left(1, i - \left\lceil\frac{4}{\varepsilon}\right\rceil+1\right)}^{i} \left(s_k / \left\lceil\frac{4}{\varepsilon}\right\rceil\right)\right) b_j (p_j - p_i) + s_t' \sum_{j=1}^{n'} b_j p_j\\
   & \leq \sum_{i=1}^{t-1}\sum_{j=t+1}^{n'} s_i b_j (p_j - p_i) + \varepsilon / 3 \cdot v_2.
    \end{split}
\end{align}where we use the fact that $s_k b_j (p_j - p_i) \leq s_k b_j (p_j - p_k)$ for $j > i \geq k$, $s_t' \leq \varepsilon /3$, and $\sum_{j=1}^{n'} b_j p_j \leq \sum_{i=1}^{n'} \sum_{j=1}^{n'}\\ \max(p_i,p_j) s_i b_j = v_2$.

Again by combining the two inequalities above, we know that
\begin{align*}
    \begin{split}
        &\sum_{i=1}^{n'} (p_i + \varepsilon^4) (s_i' / s) + \sum_{i=1}^t \sum_{j=t+1}^{n'} (s_i'/s) (b_j/b) (p_j - p_i - \varepsilon^4)\\
        &\leq \sum_{i=1}^{n'} p_i s_i' + \sum_{i=1}^t \sum_{j=t+1}^{n'} s_i' b_j (p_j - p_i) + \varepsilon^4\\
& \leq \sum_{i=1}^{n'}  (p_i + 5\varepsilon^2) s_i + \sum_{i=1}^{t-1}\sum_{j=t+1}^{n'} s_i b_j (p_j - p_i) + \varepsilon / 3\cdot v_2 + \varepsilon^4\\
& \leq \sum_{i=1}^{n'} p_i s_i + \sum_{i=1}^{t-1} \sum_{j=t+1}^{n'} s_i b_j (p_j - p_i) + \varepsilon / 2 \cdot v_2
    \end{split}
\end{align*}where we apply~\eqref{eq:LowerBoundES} and \eqref{eq:LowerBoundGFT} in the second inequality. The last inequality holds since $v_2 \geq 1$, and $\sum_{i=1}^{n'} s_i \leq 1 + \varepsilon^3$.

Now taking the maximum over $t\in [n']$, we then get that 
\begin{align*}
    \max_{p\in \mathbb{R}} \ALGW(\I, p) &= \max_{t\in [n']}\sum_{i=1}^{n'} (p_i + \varepsilon^4) (s_i' / s) + \sum_{i=1}^t \sum_{j=t+1}^{n'} (s_i'/s) (b_j/b) (p_j - p_i - \varepsilon^4)\\
    & \leq \max_{t\in [n']}\sum_{i=1}^{n'} p_i s_i + \sum_{i=1}^{t-1} \sum_{j=t+1}^{n'} s_i b_j (p_j - p_i) + \varepsilon / 2\cdot v_2\\
    & = \max_{t\in [n]}\sum_{i=1}^{n} p_i s_i + \sum_{i=1}^{t-1} \sum_{j=t+1}^{n} s_i b_j (p_j - p_i) + \varepsilon / 2\cdot v_2\\
    & = v_1 + \frac{\varepsilon}{2}\cdot v_2.
\end{align*}where the second equation follows from $s_i = b_i  = 0$ when $i > n$.

Therefore, on instance $\I$, it holds that
\begin{align*}
    \frac{\max_{p\in \mathbb{R}}\ALGW(\I, p)}{\OPTW(\I)} \leq\frac{v_1 + \frac{\varepsilon}{2}\cdot v_2}{(1-\varepsilon^2)\cdot v_2} \leq r+\varepsilon.
\end{align*}
\end{proof}

\section{Proof of Theorem~\ref{thm:one-sided}}
\label{subsec:one-sided-appendix}

As discussed in Section~\ref{subsec:one-sided}, our intention is to prove Theorem \ref{thm:one-sided} utilizing the minimax theorem. However, we are unable to directly apply the theorem due to the infinite-dimensional nature of the problem. Fortunately, with the assistance of the discretization lemma (Lemma \ref{lem:discretization}), we are able to transform the problem into a finite-dimensional one, and subsequently, prove Theorem \ref{thm:one-sided} by employing the minimax theorem. The detailed proof is presented below.

We let $r^{*}$ be the optimal approximation ratio in the full prior information setting. Our goal is to demonstrate that, given solely some one-sided information, i.e., only the distribution of the seller or the buyer, there still exists some fixed-price mechanism that obtains $r^{*}$ fraction of the optimal welfare.

We initiate the proof by considering the scenario in which only the buyer's distribution $F_B$ is known. Given the distribution $F_B$, we can once again assume that $\E_{B\sim F_B}[B] = 1$. For any sufficiently small constant $\varepsilon > 0$, let $n$ be defined as $\left\lceil \frac{1}{\varepsilon^8}\right\rceil + 1$, and consider the set of support $\{p_i\}_{i\in [n]}$, where $p_i = (i - 1) \cdot \varepsilon^4$. As stated in Lemma~\ref{lem:discretization}, the construction of $\{b_i\}_{i\in [n]}$ (or $\{s_i\}_{i\in [n]}$) relies exclusively on $F_B$ (or $F_S$), so we can first construct $\{b_i\}_{i\in [n]}$ according to \Cref{lem:discretization} with the $\{p_i\}_{i\in [n]}$ defined above.
Let us now consider the following min-max optimization problem $\mathsf{BuyerFull}$:
\begin{align}
 \min_{s_1,s_2\cdots, s_n} \max_{\omega_1,\omega_2,\cdots,\omega_n} \quad & \sum_{t=1}^n \omega_t \left(\sum_{i=1}^n s_i p_i+\sum_{i=1}^{t-1}\sum_{j=t+1}^n s_ib_j(p_j-p_i)\right) - (r^* - 2\varepsilon) \cdot \sum_{i=1}^n \sum_{j=1}^n \max(p_i,p_j)s_i b_j\nonumber \\
\textsf{s.t.} \quad  & s_i \geq 0 & \forall i \in [n]\nonumber\\
& \sum_{i=1}^{n - 1} s_i \leq  1 \quad \text{ and } \quad \sum_{i=1}^n s_i \geq 1 \quad \text{ and }\quad s_n \leq c\nonumber\\
& \omega_i \geq 0 & \forall i \in [n]\nonumber \nonumber    \\
& \sum_{i=1}^n \omega_i = 1 \nonumber
\end{align} Here we allow $c$ to be an arbitrary non-negative number. We enforce an upper bound of $s_n$ to make sure that $\{s_i\}_{i\in [n]}$ lie in a compact set. We demonstrate that this min-max optimization problem $\mathsf{BuyerFull}$ has a non-negative optimal value for any constant $c \geq 0$.

\begin{lemma}\label{lem:buyerfull}
The optimal objective value of $\mathsf{BuyerFull}$ is non-negative for any constant $c \geq 0$.
\end{lemma}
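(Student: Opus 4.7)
Since $\omega$ ranges over the probability simplex and the inner objective is linear in $\omega$, the inner maximum is attained at some extreme point $e_t$, so the value of $\mathsf{BuyerFull}$ equals $\min_s\max_{t\in[n]}\bigl[A_t(s)-(r^{*}-2\varepsilon)O(s)\bigr]$, where $A_t(s):=\sum_i s_ip_i+\sum_{i<t,\,j>t}s_ib_j(p_j-p_i)$ and $O(s):=\sum_{i,j}s_ib_j\max(p_i,p_j)$. The lemma therefore reduces to the statement: for every feasible $s$, some $t\in[n]$ satisfies $A_t(s)\geq (r^{*}-2\varepsilon)\,O(s)$. I will prove this by lifting each feasible $s$ to a continuous instance and appealing to the definition of $r^{*}$.

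\textbf{Lifting $s$ to a continuous instance.} For a fixed feasible $s$, I would construct a continuous distribution $\hat F_S$ whose Lemma~\ref{lem:discretization} discretization on the grid $\{p_i\}$ reproduces the given pair $(s,b)$. Using the splitting $s_i=s_{i,\LF}+s_{i+1,\RT}$ from the proof of Lemma~\ref{lem:discretization}, I place mass $s_{i,\LF}$ just to the right of $p_i$ and mass $s_{i+1,\RT}$ just to the left of $p_{i+1}$, and put $s_n$ at $p_n$ via the tail convention, using the tie-breaking of Appendix~\ref{appendix:tiebreaking} to keep $\hat F_S$ continuous. A mild renormalization accommodates $\sum_i s_i\in[1,1+c]$. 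Calling the resulting instance $\I:=(\hat F_S,F_B)$, Lemma~\ref{lem:discretization} supplies the sandwich $A_t(s)\leq \ALGW(\I,p_t)$ (via~\eqref{eq:discretization7}) and $O(s)\geq \OPTW(\I)$ (via~\eqref{eq:discretization5}).

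\textbf{Applying full-information optimality.} By the definition of $r^{*}$ as the optimal full-prior approximation ratio, some price $p^{*}\in\mathbb{R}_{\geq 0}$ attains $\ALGW(\I,p^{*})\geq r^{*}\cdot \OPTW(\I)$. Rounding $p^{*}$ to the nearest grid point $p_t$ loses at most $O(\varepsilon^4)$ in welfare because the grid spacing is $\varepsilon^4$. Chaining the estimates,
\[
A_t(s)\;\geq\;\ALGW(\I,p_t)-\mathrm{err}_1\;\geq\; r^{*}\OPTW(\I)-O(\varepsilon^4)-\mathrm{err}_1\;\geq\; r^{*}O(s)-\mathrm{err}_1-\mathrm{err}_2-O(\varepsilon^4),
\]
where $\mathrm{err}_1=\ALGW(\I,p_t)-A_t(s)$ and $\mathrm{err}_2=r^{*}\bigl(O(s)-\OPTW(\I)\bigr)$ are the two discretization slacks. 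With the choices $p_i=(i-1)\varepsilon^4$, $n=\lceil\varepsilon^{-8}\rceil+1$, and the cap $s_n\leq c$, one can show $\mathrm{err}_1+\mathrm{err}_2+O(\varepsilon^4)\leq 2\varepsilon\cdot O(s)$, yielding $A_t(s)\geq (r^{*}-2\varepsilon)\,O(s)$ and hence the lemma.

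\textbf{Main obstacle.} The central difficulty is that the bounds from Lemma~\ref{lem:discretization} go the ``wrong'' direction for a direct ratio argument: $A_t(s)$ is a lower bound on the continuous welfare while $O(s)$ is an upper bound on the continuous opt. Consequently, the proof must quantify the edge terms dropped in the proof of Lemma~\ref{lem:discretization} (sums such as $s_{t,\RT}\sum_{j>t}b_j(p_j-p_t)$ and $\sum_{i<t}s_ib_{t,\LF}(p_t-p_i)$) and show they are $O(\varepsilon)\cdot O(s)$, leveraging both the fine spacing $\varepsilon^4$ and the tail cap $s_n\leq c$ to control the contribution of mass near $p_n$. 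A separate case is needed when $O(s)$ is close to zero, where one checks that $A_t(s)$ is itself small enough that the inequality holds trivially. Once these error bounds are established, the $2\varepsilon$ slack in the statement is more than enough to absorb all discretization loss.
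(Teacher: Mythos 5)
Your reduction to the per-$s$ statement (some $t$ has $A_t(s)\geq (r^*-2\varepsilon)O(s)$) is exactly the contrapositive the paper works with, but the route you propose to prove it has a genuine gap: the reverse-direction error bounds you assert are not just unproven, they are false for the construction and the choice of $t$ you describe. The slack in \eqref{eq:discretization7} at a price $p_t$ contains the dropped terms $\sum_{i<t}s_i\,b_{t,\LF}(p_t-p_i)$ (and the slack from $\Pr[B\geq p_t]\leq \sum_{j>t}b_j+b_{t,\LF}$), and these have nothing to do with the grid spacing: if $F_B$ concentrates most of its mass inside the single cell $[p_t,p_{t+1})$ (say a near-atom at $p_t+\varepsilon^4/2$ with $\E[B]=1$, seller near $0$), then $b_{t,\LF}\approx b_{t+1,\RT}\approx 1/2$, the continuous instance $(\hat F_S,F_B)$ has $\ALGW(\I,p_t)\approx 1$ while $A_t(s)\approx 1/2$ and $A_{t+1}(s)\approx 0$, yet $O(s)\approx 1$. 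So $\mathrm{err}_1$ is a constant fraction of $O(s)$ at both grid points adjacent to any continuous optimal price near the buyer's mass, and ``round $p^*$ to the nearest grid point'' fails; the inequality only holds at some other $t'$ (pricing below the buyer's heavy cell), and your proposal gives no rule for finding it. Relatedly, leaning on the cap $s_n\leq c$ to control tail error cannot work, since $c$ is an arbitrary constant and the bound must be uniform in $c$; the paper simply drops that constraint and handles large $s_n$ by noting that then $\sum_i s_ip_i\geq s_np_n$ already dominates $(r^*-2\varepsilon)O(s)$ because $r^*\leq 0.75$.

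The paper's proof avoids the wrong-direction problem entirely by never pairing $s$ with the original continuous $F_B$. After showing a violating $s^*$ must have $s^*_n\leq\varepsilon^3$ (so $1\leq\sum_i s^*_i\leq 1+\varepsilon^3$), it invokes Lemma~\ref{lem:generateinstance}, which builds a \emph{synthetic} hard instance directly from the numbers $(s^*,b)$: the buyer is the discrete distribution on the grid $\{p_i\}$, the seller is shifted up by $\varepsilon^4$ and smoothed over windows of $\lceil 4/\varepsilon\rceil$ grid points so every seller atom has mass at most $\varepsilon/3$. For that instance the inequalities go the needed way with only $O(\varepsilon)\cdot v_2$ loss: the best fixed price is effectively on the shifted grid, buyer mass at $p_t$ never trades at price $p_t+\varepsilon^4$ (eliminating exactly the $b_{t,\LF}$ slack that kills your argument), and the smoothing controls the $i\leq t$ versus $i\leq t-1$ discrepancy on the seller side. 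A violating $s^*$ then yields an instance on which no fixed price beats $r^*-\varepsilon$, contradicting the definition of $r^*$. To repair your write-up you would essentially have to reprove this construction (or cite Lemma~\ref{lem:generateinstance} in place of your lifting step); the two one-sided bounds \eqref{eq:discretization5} and \eqref{eq:discretization7} plus nearest-point rounding cannot be patched into a proof.
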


\begin{proof}
To prove the non-negativity for any constant $c \geq 0$, we simply eliminate the  constraint that $s_n \leq c$ and show that the program has a non-negative objective value even after dropping the upper bound constraint of $s_n$, as removing a constraint on $s_n$ can only decrease the objective value.

	We prove the lemma by way of contradiction. Given $\{s_i\}_{i\in [n]}$, it is clear that 
	\[\max_{\omega \in \Delta(n)} \sum_{t=1}^n \omega_t \left(\sum_{i=1}^n s_i p_i+\sum_{i=1}^{t-1}\sum_{j=t+1}^n s_ib_j(p_j-p_i)\right) = \max_{t\in [n]} \sum_{i=1}^n s_i p_i+\sum_{i=1}^{t-1}\sum_{j=t+1}^n s_ib_j(p_j-p_i),\]where $\Delta(n)$ is the probability simplex over $[n]$.
	
Suppose $\mathsf{BuyerFull}$ has a negative optimal objective value, then there exists $\{s^*_i\}_{i\in [n]}$ such that \begin{equation}\label{eq:buyerfullop1}
\max_{t\in [n]} \sum_{i=1}^n s^*_i p_i+\sum_{i=1}^{t-1}\sum_{j=t+1}^n s^*_ib_j(p_j-p_i) < (r^* - 2\varepsilon) \cdot \sum_{i=1}^n \sum_{j=1}^n \max(p_i,p_j) s^*_i b_j.
\end{equation}
	
Notice that $\{b_i\}_{i\in [n]}$ is defined according to Lemma~\ref{lem:discretization}, which ensures that \[1 \leq \sum_{i = 1}^n b_i \leq 1 + \frac{\E[B]}{p_n} \leq 1 + \varepsilon^3. \]
	
Let us examine $s^*_n$ first. Note that 
\[\max_{t\in [n]} \sum_{i=1}^n s^*_i p_i+\sum_{i=1}^{t-1}\sum_{j=t+1}^n s^*_ib_j(p_j-p_i) \geq \sum_{i=1}^{n}s^*_i p_i,\]
and 
\[\sum_{i=1}^n \sum_{j=1}^n \max(p_i, p_j) s^*_i b_j \leq \sum_{i=1}^n s^*_ip_i + \sum_{i=1}^n b_ip_i = \sum_{i=1}^n s^*_ip_i + 1\]
	
If {$s^*_n \geq {\varepsilon^3}$}, notice that $\sum_{i=1}^n s^*_i p_i \geq s^*_n p_n \geq \frac{1}{\varepsilon}$ is sufficiently large, while $r^{*}$ does not exceed $0.75$. This means that if {$s^*_n \geq \varepsilon^3$}, \eqref{eq:buyerfullop1} could never happen. Thus, we get that $\sum_{i=1}^n s^*_i \leq \sum_{i=1}^{n-1}s^*_i + s^*_n \leq 1 + {\varepsilon^3}$ and {$\sum_{i=1}^n s^*_i \geq 1$}. Finally, it is easy to see that $\sum_{i=1}^n \sum_{j=1}^n \max(p_i,p_j) s^*_i b_j \geq \sum_{j=1}^n p_j b_j = 1$, and the gap between $p_i$ and $p_{i+1}$ is at most $\varepsilon^4$. Hence, we may invoke Lemma~\ref{lem:generateinstance}, which provides the existence of an instance $\mathcal{I}$ such that
\[ \frac{\max_{p\in \mathbb{R}}\ALGW(\I, p)}{\OPTW(\I)} \leq \frac{\max_{t\in [n]} \sum_{i=1}^n s^*_i p_i+\sum_{i=1}^{t-1}\sum_{j=t+1}^n s^*_ib_j(p_j-p_i)}{\sum_{i=1}^n \sum_{j=1}^n \max(p_i,p_j) s^*_i b_j} + \varepsilon < r^*- \varepsilon.\]

However, this contradicts the fact that $r^{*}$ is the optimal approximation ratio in the full prior information setting. Thus, we prove that the optimal value of $\mathsf{BuyerFull}$ is non-negative.
\end{proof}

Notice that the min-max optimization problem $\mathsf{BuyerFull}$ is bilinear with respect to $\{s_i\}_{i\in [n]}$ and $\{\omega_i\}_{i\in [n]}$. Furthermore, it is clear that the feasible solution spaces for both ${\omega}$ and ${s}$ are convex and compact. Consequently, by employing the minimax theorem, the following max-min optimization problem $\mathsf{BuyerOnly}$ possesses an identical optimal objective value to that of $\mathsf{BuyerFull}$, which is non-negative as we proved above.
\begin{align}
\max_{\omega_1,\omega_2,\cdots,\omega_n} \min_{s_1,s_2\cdots, s_n}  \quad & \sum_{t=1}^n \omega_t \left(\sum_{i=1}^n s_i p_i+\sum_{i=1}^{t-1}\sum_{j=t+1}^n s_ib_j(p_j-p_i)\right) - (r^* - 2\varepsilon) \cdot \sum_{i=1}^n \sum_{j=1}^n \max(p_i,p_j)s_i b_j\nonumber \\
\textsf{s.t.} \quad  & s_i \geq 0 & \forall i \in [n]\nonumber\\
& \sum_{i=1}^{n - 1} s_i \leq  1 \quad \text{ and } \quad \sum_{i=1}^n s_i \geq 1 \quad \text{ and } \quad s_n \leq c\nonumber\\
& \omega_i \geq 0 & \forall i \in [n]\nonumber \nonumber    \\
& \sum_{i=1}^n \omega_i = 1 \nonumber
\end{align}

Consider the following mechanism, given solely the buyer's distribution $F_B$. Without loss of generality we can assume that $\E_{B\sim F_B}[B] = 1$. We generate the discretized support sets $\{p_i\}_{i\in [n]}$ and $\{b_i\}_{i\in [n]}$ in the same manner as previously described. Subsequently, we solve the max-min optimization problem $\mathsf{BuyerOnly}$ with respect to $\{p_i\}_{i\in [n]}$ and $\{b_i\}_{i\in [n]}$ while setting $c$ to be $10$. Denote the optimal solution of {the max player of} this max-min optimization problem as $\left\{\omega_i^{*}\right\}_{i\in [n]}$. We then select $p_i$ as the price with probability $\omega_i^*$.

We now demonstrate that this mechanism attains an approximation ratio of no less than $(r^* - 2\varepsilon)$. We establish this by employing a proof by contradiction. Suppose there exists a seller's distribution $F_S$ so that our mechanism has an approximation ratio lower than $r^* - 2\varepsilon$ on the instance $\I = (F_S, F_B)$. We generate the set $\{s_i\}_{i\in [n]}$ as described in Lemma~\ref{lem:discretization} with respect to $\{p_i\}_{i\in [n]}$ and the distribution $F_S$ {, and we also choose $\{b_i\}_{i\in [n]}$ that correspond to the sets of numbers described in Lemma~\ref{lem:discretization}.} Therefore, the following holds
\begin{align}
\sum_{i=1}^n s_i \geq 1 \text{ and } \sum_{i=1}^{n - 1} s_i &\leq 1.\label{eq:one-sidedfaq} \\
    \sum_{i=1}^n s_i p_i+\sum_{i=1}^{t-1}\sum_{j=t+1}^n s_ib_j(p_j-p_i) &\leq \ALGW(\I, p_t)\label{eq:one-sided1} \\
    \sum_{i = 1}^n \sum_{j=1}^n \max(p_i, p_j) s_i b_j &\geq \OPTW(\I).\label{eq:one-sided2}
\end{align}

Now let us consider $s_n$. Suppose $s_n \geq 10$. Notice that Lemma~\ref{lem:discretization} demonstrates that  
\[\sum_{i = 1}^n s_i \leq 1 + \frac{\E[S]}{p_n}.\]
Combining this with the fact that $s_n \geq 10$, we obtain that 
\[\E[S] \geq (s_n - 1)\cdot p_n \geq 9\varepsilon^{-4}.\]

Therefore, on this instance, any fixed-price mechanism has an approximation ratio of at least
\[\frac{\E[S]}{\E[\max(S, B)]} \geq \frac{\E[S]}{\E[S] + \E[B]} \geq \frac{9\varepsilon^{-4}}{9\varepsilon^{-4} + 1} \geq r^{*}\]
where the last inequality holds since we know that $r^{*}$ is at most $0.75$, while $\varepsilon^{-4}$ is a large enough number. However, this contradicts the claim that the mechanism has an approximation lower than $r^{*} - 2\varepsilon$ on this instance, and thus we prove that $s_n \leq 10$.  Combining this inequality with~\eqref{eq:one-sidedfaq}, we demonstrate that $\{s_i\}_{i\in [n]}$ is a feasible solution to $\mathsf{BuyerOnly}$ when $c = 10$.

Given that our mechanism achieves a fraction less than $r^* - 2\varepsilon$ of the optimal welfare, it follows that
\begin{equation}
\sum_{t=1}^n \omega_t^* \ALGW(\I, p_t) < (r^* - 2\varepsilon) \OPTW(\I)   \label{eq:one-sided3}
\end{equation}

Combining~\eqref{eq:one-sided1},\eqref{eq:one-sided2} and \eqref{eq:one-sided3}, we have that 
\[\sum_{t=1}^n \omega_t^* \left(\sum_{i=1}^n s_i p_i+\sum_{i=1}^{t-1}\sum_{j=t+1}^n s_ib_j(p_j-p_i)\right) -(r^* - 2\varepsilon) \sum_{i = 1}^n \sum_{j=1}^n \max(p_i, p_j) s_i b_j < 0.\]

Observe that $\{\omega_i^*\}$ represents the optimal solution of the max-min optimization problem $\mathsf{BuyerOnly}$, and $\{s_i\}_{i\in [n]}$ is a feasible solution for the inner minimization problem with $c = 10$. This contradicts the assertion that $\mathsf{BuyerOnly}$ possesses a non-negative optimal objective value. Consequently, we deduce that the mechanism utilizing solely the buyer's distribution $F_B$ achieves an approximation ratio of $r^* - 2\varepsilon$. By allowing $\varepsilon \rightarrow 0$, we demonstrate that when considering only the buyer's distribution, the optimal approximation ratio is identical to the optimal ratio obtainable with full prior information.

Next,  we establish the case, where only the seller's distribution $F_S$ is known, by essentially the same argument. Without loss of generality, we assume that the seller's distribution $F_S$ satisfying $\E_{S\sim F_S} [S] = 1$. We define $n$ as $\left\lceil \frac{1}{\varepsilon^{20}} \right\rceil+ 1$ and $p_i = (i - 1)\cdot\varepsilon^{10}$ for some small enough constant $\varepsilon > 0$. We construct a set of numbers $\{s_i\}_{i\in [n]}$ as described in Lemma~\ref{lem:discretization}. Given $\{p_i\}_{i\in [n]}$ and $\{s_i\}_{i\in [n]}$, we define the following min-max optimization problem $\mathsf{SellerFull}$:
\begin{align}
 \min_{b_1,b_2\cdots, b_n} \max_{\omega_1,\omega_2,\cdots,\omega_n} \quad & \sum_{t=1}^n \omega_t \left(\sum_{i=1}^n s_i p_i+\sum_{i=1}^{t-1}\sum_{j=t+1}^n s_ib_j(p_j-p_i)\right) - (r^* - 2\varepsilon) \cdot \sum_{i=1}^n \sum_{j=1}^n \max(p_i,p_j)s_i b_j\nonumber \\
\textsf{s.t.} \quad  & b_i \geq 0 & \forall i \in [n]\nonumber\\
& \sum_{i=1}^{n - 1} b_i \leq  1 \quad \text{ and } \quad \sum_{i=1}^n b_i \geq 1  \quad \text{ and } \quad b_n \leq c\nonumber\\
& \omega_i \geq 0 & \forall i \in [n]\nonumber \nonumber    \\
& \sum_{i=1}^n \omega_i = 1 \nonumber
\end{align}
$c > 0$ is an arbitrary positive constant.

We first argue that $\mathsf{SellerFull}$ has a non-negative objective value.
\begin{lemma}\label{lem:SellerFull}
	The optimal value of $\mathsf{SellerFull}$ is non-negative for any constant $c > 0$.
\end{lemma}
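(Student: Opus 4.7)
The plan is to mirror the proof of Lemma~\ref{lem:buyerfull} with the roles of seller and buyer swapped, but the analogy will not be perfect and the middle step requires a new ingredient. First I would drop the constraint $b_n \leq c$, as removing an upper bound on the min player can only decrease the optimum, and then argue by contradiction. Suppose some feasible $\{b^*_i\}_{i\in[n]}$ attains a strictly negative objective. Since the inner maximand is linear in $\omega$, I can replace $\max_\omega$ by $\max_{t\in[n]}$, giving
\[
\max_{t\in [n]} \left(\sum_{i=1}^n s_i p_i + \sum_{i=1}^{t-1}\sum_{j=t+1}^n s_i b^*_j(p_j-p_i)\right) < (r^* - 2\varepsilon) \sum_{i=1}^n \sum_{j=1}^n \max(p_i,p_j)\, s_i b^*_j. \quad (\ast)
\]
By Lemma~\ref{lem:discretization}, applied to $F_S$ with $\E[S]=1$ on the grid $p_i=(i-1)\varepsilon^{10}$, $n=\lceil\varepsilon^{-20}\rceil+1$, I may use $\sum_i s_i p_i = 1$, $1 \leq \sum_i s_i \leq 1 + \varepsilon^{10}$, and $\sum_{i\leq t} s_i \geq \Pr[S < p_t]$.

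The key step will be to show $b^*_n \leq \varepsilon^3$, and this is the only place where the Buyer-side argument must be modified. In the proof of Lemma~\ref{lem:buyerfull}, the hypothesis $s^*_n \geq \varepsilon^3$ forced $\sum_i s^*_i p_i$ (and hence the trivial $t=1$ lower bound on the left of the Buyer analog of $(\ast)$) to be enormous, which already contradicted the right side via a crude upper bound. Here the analogous trivial bound is only $\sum_i s_i p_i = 1$, independent of $b^*$, so I must exhibit a specific~$t^*$ that captures most of the trade when $b^*_n$ is large. I would take $t^*$ to be the smallest index with $p_{t^*} \geq \varepsilon^{-5}$: Markov's inequality yields $\Pr[S \geq p_{t^*}] \leq \varepsilon^5$, and combining this with \eqref{eq:discretizationNew} and the continuity of $F_S$ gives $\sum_{i<t^*} s_i \geq 1 - O(\varepsilon^5)$. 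A short calculation then produces
\[
\sum_i s_i p_i + \sum_{i<t^*,\, j > t^*} s_i b^*_j(p_j - p_i) \;\geq\; \bigl(1 - O(\varepsilon^5)\bigr)\sum_{j>t^*} b^*_j p_j \;-\; 1,
\]
while $\max(p_i,p_j) \leq p_i + p_j$ together with $\sum_{j<n} b^*_j \leq 1$ yields $\sum_{i,j}\max(p_i,p_j)\, s_i b^*_j \leq O(\varepsilon^{-5}) + \bigl(1+O(\varepsilon^5)\bigr)\sum_{j>t^*} b^*_j p_j$. Substituting both into $(\ast)$ and using $r^* \leq 3/4$, I obtain $\sum_{j>t^*} b^*_j p_j \leq O(\varepsilon^{-5})$, whereas $b^*_n > \varepsilon^3$ would force $\sum_{j>t^*} b^*_j p_j \geq b^*_n p_n \geq \varepsilon^{-7}$, a contradiction for small~$\varepsilon$.

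Once $b^*_n \leq \varepsilon^3$ is established, $\sum_i b^*_i \leq \sum_{i<n} b^*_i + b^*_n \leq 1 + \varepsilon^3$, and the third hypothesis $\sum_{i,j}\max(p_i,p_j)\, s_i b^*_j \geq 1$ of Lemma~\ref{lem:generateinstance} follows from $(\ast)$ combined with the trivial bound $\sum_i s_i p_i = 1$ on the left (which already forces the right side above $1/(r^*-2\varepsilon) > 1$). Thus $(\{s_i\},\{b^*_i\})$ satisfies all hypotheses of Lemma~\ref{lem:generateinstance} on the grid $\{p_i\}$ (whose gap $\varepsilon^{10}$ is certainly $\leq \varepsilon^3$), and by $(\ast)$ the associated ratio $r$ in that lemma is strictly less than $r^* - 2\varepsilon$. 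The lemma then produces an instance whose optimal approximation ratio is at most $r + \varepsilon < r^* - \varepsilon$, contradicting the definition of $r^*$. The main obstacle, as flagged, is the new quantile/Markov argument used to bound $b^*_n$; everything else is a direct mirror of the Buyer case.
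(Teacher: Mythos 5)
Your proposal is correct and follows essentially the same route as the paper's proof: contradiction after dropping the cap on $b_n$, linearity in $\omega$ to reduce to the best $t$, a Markov-inequality threshold index (the paper uses $t'=\lceil\varepsilon^{-11}\rceil+2$ with $p_{t'-1}\geq\varepsilon^{-1}$, you use $p_{t^*}\geq\varepsilon^{-5}$) showing that a large $b^*_n$ would let the $t^*$-term of the left side capture nearly all of $\sum_{j>t^*}b^*_jp_j$, contradicting $r^*\leq 0.75$, and then an application of Lemma~\ref{lem:generateinstance} to contradict the optimality of $r^*$. The only deviations are cosmetic (your derivation of $\sum_{i,j}\max(p_i,p_j)s_ib^*_j\geq 1$ via $(\ast)$ instead of directly from $\sum_i s_ip_i=1$ and $\sum_j b^*_j\geq 1$, and the residual $-b^*_n$ term in your lower bound, which is absorbed since $b^*_n\leq\varepsilon^{10}\sum_{j>t^*}b^*_jp_j$).
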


\begin{proof}
	The proof is nearly identical to the proof of Lemma~\ref{lem:buyerfull}, while the only distinction is the argument used to upper bound the value of $b_n$. Again, we first drop the upper bound constraint of $b_n$ and prove the claim by way of contradiction. Suppose $\mathsf{SellerFull}$ has a negative optimal objective value, then there exists a set of $\{b^*_i\}_{i\in [n]}$ such that

\begin{equation}\label{eq:sellerfullop1}
\max_{t\in [n]} \sum_{i=1}^n s_i p_i+\sum_{i=1}^{t-1}\sum_{j=t+1}^n s_ib^*_j(p_j-p_i) < (r^* - 2\varepsilon) \cdot \sum_{i=1}^n \sum_{j=1}^n \max(p_i,p_j) s_i b^*_j.
\end{equation}
	
	Observe that $\{s_i\}_{i\in [n]}$ are generated according to Lemma~\ref{lem:discretization}, so we get that
	\[1\leq \sum_{i = 1}^n s_i \leq 1 + \frac{\E[S]}{p_n} \leq 1 + \varepsilon^3.\]
	
	Now let us consider $b^*_n$. Suppose $b^*_n \geq \varepsilon^3$. Let $t'$ be $\left\lceil\frac{1}{\varepsilon^{11}}\right\rceil$ + 2. From  property~\eqref{eq:discretizationNew} in Lemma~\ref{lem:discretization}, 
it is clear that 
	\[\sum_{i = 1}^{t' - 1} s_i \geq \Pr_{S\sim F_S}[S< p_{t' - 1}] = 1 - \Pr_{S\sim F_S}[S \geq p_{t' - 1}]\geq 1 - \varepsilon,\] where the last inequality follow from Markov's inequality as $p_{t'-1}\geq \frac{1}{\varepsilon}$.

	Therefore, it holds that
	\begin{align*}
		\sum_{i=1}^n s_i p_i+\sum_{i=1}^{t'-1}\sum_{j=t'+1}^n s_ib^*_j(p_j-p_i)
		& =1 + \sum_{j = t' + 1}^n b_j^*p_j\cdot \left(\sum_{i=1}^{t' - 1} s_i\right) - \sum_{i=1}^{t' - 1} s_i p_i\left(\sum_{j=t'+1}^n b^*_j\right)\\
		&\geq 1 + \sum_{j = t' + 1}^n b^*_jp_j\cdot \left(1 - \varepsilon\right) - \left(1 + b_n^*\right)\\ 
		& \geq \left(1 - \varepsilon\right)\cdot \sum_{j = t' + 1}^n b^*_jp_j - b_n^*\\
		& \geq \left(1 - 2\varepsilon\right) \cdot \sum_{j=t'+1}^n b_j^* p_j
	\end{align*} The first inequality follows from the fact that $\sum_{j=t'+1}^n b_j^* = \sum_{j=t'+1}^{n - 1} b_j^* + b_n^* \leq 1 + b_n^* $, and the last inequality is because that $p_n$ is no less than $\varepsilon^{-10}$, and thus is significantly larger than $1$.

  For the optimal welfare, we can upper bound it as follows.
	\begin{align*}
		\begin{split}
			\sum_{i=1}^n \sum_{j=1}^n \max(p_i,p_j) s_ib_j^* & \leq \sum_{i=1}^n p_i s_i + \sum_{j=1}^n p_j b_j^*\\
				& \leq 1 + \sum_{j=1}^{t'}p_{t'}b_j^* + \sum_{j=t'+1}^{n} p_j b_j^*\\
				& \leq 1 + 2\varepsilon^{-1} + \sum_{j = t' + 1}^n b_j^*p_j.
		\end{split}
	\end{align*}
	
	However, notice that $\sum_{j=t'+1}^n b^*_j p_j \geq b^*_n p_n \geq \varepsilon^{-7}$, which is significantly larger than $1 + 2\varepsilon^{-1}$, while $r^{*}$ is at most $0.75$. Thus, \eqref{eq:sellerfullop1} could never be true when $b^*_n \geq \frac{1}{\varepsilon^3}$, hence $1\leq \sum_{i = 1}^n b^*_i \leq 1 + \varepsilon^3$. Furthermore, it is also easy to see that $\sum_{i=1}^n \sum_{j = 1}^n \max(p_i,p_j) s_i b^*_j \geq \sum_{i=1}^n p_i s_i  = 1$, and  the gap between $p_{i + 1} - p_{i}\leq \varepsilon^3$ for any $i\in [n - 1]$. Consequently, we can apply Lemma~\ref{lem:generateinstance}, which establishes the existence of an instance $\mathcal{I}$ such that:
\[ \frac{\max_{p\in \mathbb{R}}\ALGW(\I, p)}{\OPTW(\I)} \leq \frac{\max_{t\in [n]} \sum_{i=1}^n s_i p_i+\sum_{i=1}^{t-1}\sum_{j=t+1}^n s_ib^*_j(p_j-p_i)}{\sum_{i=1}^n \sum_{j=1}^n \max(p_i,p_j) s_i b^*_j} + \varepsilon < r^*- \varepsilon.\]

However, this result contradicts the fact that $r^{*}$ represents the optimal approximation ratio within the full prior information setting, and thus demonstrates that the optimal value of $\mathsf{BuyerFull}$ must be non-negative.
	
\end{proof}

Given the fact that $\mathsf{BuyerFull}$ is non-negative, we are almost done with our proof. Similarly, by applying the Minimax theorem, we can deduce that the following max-min optimization problem $\mathsf{BuyerOnly}$ has the \emph{non-negative} identical objective value as $\mathsf{BuyerFull}$ for any constant $c > 0$.
\begin{align}
\max_{\omega_1,\omega_2,\cdots,\omega_n} \min_{b_1,b_2\cdots, b_n}  \quad & \sum_{t=1}^n \omega_t \left(\sum_{i=1}^n s_i p_i+\sum_{i=1}^{t-1}\sum_{j=t+1}^n s_ib_j(p_j-p_i)\right) - (r^* - 2\varepsilon) \cdot \sum_{i=1}^n \sum_{j=1}^n \max(p_i,p_j)s_i b_j\nonumber \\
\textsf{s.t.} \quad  & b_i \geq 0 & \forall i \in [n]\nonumber\\
& \sum_{i=1}^{n - 1} b_i \leq  1 \quad \text{ and } \quad \sum_{i=1}^n b_i \geq 1 \quad \text{ and }\quad  b_n \leq c \nonumber\\
& \omega_i \geq 0 & \forall i \in [n]\nonumber \nonumber    \\
& \sum_{i=1}^n \omega_i = 1 \nonumber
\end{align}

We next apply almost the same argument as the case  where the buyer's distribution is known. The only difference is how we deal with the case where $b_n > c$.

Similarly, let us consider the following mechanism where only the seller's distribution $F_S$ is known. Without loss of generality we can assume that $\E_{S\sim F_S}[S] = 1$,
and we use $F_S$ to generate the set $\{s_i\}_{i\in [n]}$ according to Lemma~\ref{lem:discretization}, and solve $\mathsf{SellerOnly}$ with respect to $\{p_i\}_{i\in [n]}$ and $\{s_i\}_{i\in [n]}$ with $c$ to be $\varepsilon^{-1}$ to get the optimal solution of the max player, denoted as $\left\{\omega_i^*\right\}_{i\in [n]}$. The mechanism simply chooses the price $p_i$ with probability $\omega_i^*$.

We now aim to illustrate that this mechanism achieves an approximation of at least $r^* -  5\varepsilon$. Let us again prove this by contradiction. Suppose there exists a buyer's distribution $F_B$ so that this mechanism has an approximation ratio lower than $r^* - 5\varepsilon$ on this particular instance $\I = (F_B, F_S)$. According to Lemma~\ref{lem:discretization}, we could correspondingly get a set $\{b_i\}_{i\in [n]}$ such that the following properties hold:
\begin{align}
\sum_{i = 1}^n b_i \geq 1 \text{ and } \sum_{i=1}^{n-1} b_i &\leq 1. \label{eq:seleronesided-1}\\
  \sum_{i=1}^n s_i p_i+\sum_{i=1}^{t-1}\sum_{j=t+1}^n s_ib_j(p_j-p_i) &\leq \ALGW(\I, p_t)\label{eq:seleronesided-2} \\
    \sum_{i = 1}^n \sum_{j=1}^n \max(p_i, p_j) s_i b_j &\geq \OPTW(\I).\label{eq:seleronesided-3}
\end{align}

It is clear that if $b_n \leq \varepsilon^{-1}$ holds, we can straightforwardly use the similar argument to the case for the buyer and show that $\{b_i\}_{i\in [n]}$ is indeed a feasible solution for the inner minimization problem with $c = \varepsilon^{-1}$, and thus reach a contradiction to the fact that the min-max optimization problem has a non-negative objective value. Therefore, the final thing left is to show that $b_n \leq \varepsilon^{-1}$. 

To prove this, consider the following specific set $\left\{b_i'\right\}_{i\in [n]}$ where $b_{i}' = 0$ for $i< n$ and $b_n' = 1$. It is clear that $\left\{b_i'\right\}_{i\in [n]}$ is a feasible solution for the inner minimization problem. As $\{\omega_i^*\}_{i\in [n]}$ is the optimal solution for the max player and has a non-negative objective value, we obtain that
\begin{align}\label{eq:sellerside-4}
    1 + \sum_{t=1}^{n-1}\omega_t^* \sum_{i=1}^{t-1}s_i (p_n - p_i) - (r^* - 2\varepsilon)\sum_{i=1}^n s_i \cdot  p_n \geq 0.
\end{align}

	Now suppose $b_n > \varepsilon^{-1}$, we could see that 
\begin{align}\begin{split}
    \label{eq:sellerside-5}
    \sum_{i=1}^n s_i p_i + \sum_{t=1}^n \omega_t^* \sum_{i=1}^{t-1} \sum_{j=t+1}^n (p_j - p_i) s_i b_j &\geq b_n  \sum_{t=1}^{n-1} \omega_t^* \sum_{i=1}^{t-1} (p_n - p_i) s_i\\
    & \geq (1 - \varepsilon) b_n \left(1 + \sum_{t=1}^{n - 1} \omega_t^* \sum_{i=1}^{t-1}(p_n - p_i) s_i\right)\\
    & \geq (1 - \varepsilon) (r^* - 2\varepsilon)\cdot b_n \sum_{i=1}^n s_i p_n 
\end{split}
\end{align} The second inequality is because that $\sum_{t=1}^{n-1}\omega_t^* \sum_{i=1}^{t-1}s_i (p_n - p_i) \geq (r^*-2\varepsilon) \sum_{i=1}^n s_i p_n -1 \geq 0.5 p_n \geq 0.5 \varepsilon^{-10} \gg \varepsilon^{-1}$ and the last inequality follows from~\eqref{eq:sellerside-4}.

Furthermore, notice that $\sum_{i=1}^{n-1} b_i \leq 1$ and $\sum_{i=1}^{n} s_i \leq 1 + \frac{\E[S]}{p_n} \leq 1 + \varepsilon^{10}$. It follows that 
\begin{align}\label{eq:sellerside-6}
    \sum_{i=1}^n \sum_{j=1}^{n - 1} s_i b_j \max(p_i, p_j) \leq \left(\sum_{i=1}^n s_i\right) \cdot \left(\sum_{j=1}^{n-1} b_j\right) \cdot p_n \leq 2 \varepsilon^{-10}.
\end{align}

However, $b_n > \varepsilon^{-1}$ and $\sum_{i=1}^n s_i \geq 1$ imply that 
\begin{align}\label{eq:sellerside-7}
    b_n \sum_{i=1}^n s_i  p_n \geq \varepsilon^{-11}.
\end{align}

Combining~\eqref{eq:sellerside-6} and \eqref{eq:sellerside-7}, we get that 
\begin{align}\label{eq:sellerside-8}
    b_n \sum_{i=1}^n s_i  p_n \geq (1 - 2\varepsilon) \sum_{i=1}^n \sum_{j=1}^n s_i b_j \max(p_i, p_j).
\end{align}

We further relax the RHS of~\eqref{eq:sellerside-5} using~\eqref{eq:sellerside-8}.
\begin{align}\label{eq:sellerside-9}
       \sum_{i=1}^n s_i p_i + \sum_{t=1}^n \omega_t^* \sum_{i=1}^{t-1} \sum_{j=t+1}^n (p_j - p_i) s_i b_j &\geq (r^{*} - 5\varepsilon) \sum_{i=1}^n \sum_{j=1}^n s_i b_j \max(p_i, p_j).
\end{align}

Putting~\eqref{eq:seleronesided-2},\eqref{eq:seleronesided-3} and \eqref{eq:sellerside-9} together, we obtain that 
\[\sum_{t=1}^n \omega_t^* \ALGW(\I, p_t) \geq  (r^* - 5\varepsilon) \OPTW(\I).\]

However, this contradicts the assumption that this mechanism achieves an approximation ratio lower than $r^{*} - 5\varepsilon$ on this instance. Thus, we show that $b_n \leq \varepsilon^{-1}$ always holds.

Finally, letting $\varepsilon \rightarrow 0$, we then complete our proof of Theorem~\ref{thm:one-sided}. 
\section{Proof of Theorem~\ref{thm:PartialMecha}}
\label{appendix:partial}

\paragraph{Seller's distribution mean $\E[S]$ is known.} We start by addressing the case in which only the mean of the seller's value, $\E[S]$, is known. The mechanism $\Mecha_S$ takes $\E[S]$ as input and randomly picks a number $x\sim U[0,3]$. Then $\Mecha_S$ sets the price as $x\cdot \E[S]$. As discussed in Section~\ref{sec:partial}, in order to demonstrate that $\Mecha_S$ achieves an approximation ratio of $\frac23$, it suffices to verify that $\inf_{\I = (F_S,F_B)\atop \E[S] = 1}\E_{q\sim \Mecha_{S}}\left[\ALG(q, \I) - \frac23\cdot \OPT(\I)\right]$ is non-negative.

We first aim to demonstrate that, in order to verify non-negativity, it is sufficient to consider two-point distributions for the seller and single-point distributions for the buyer. The intuition for focusing solely on such instances is rather straightforward. Consider the simple case in which the support is discrete. In this case, fixing the buyer's (or the seller's) distribution, the linear program to find the worst seller's (or the buyer's) distribution only has $2$ (or $1$) non-trivial constraints. This implies that there is an optimal solution, i.e., a worst-case distribution, that is supported on $2$ (or $1$) points. However, given that the support is, in fact, continuous, a more rigorous argument is necessary to validate this assertion. We now present it below.

Fix any distributions of the seller $F_S$, define $g(x)$ as the contribution to the objective when the buyer's value is $x$, i.e.,
\[g(x) = \E_{q\sim \Mecha_S\atop S\sim F_S}\left[S + \indic[S\leq q \leq x]\cdot (x - S) - \frac23 \max(S, x)\right].\]

This means that for any distribution $F_B$, it holds that 
\[\E_{q\sim \Mecha_{S}}\left[\ALG(q, \I) - \frac23\cdot \OPT(\I)\right] = \int g(x) {d} F_B(x),\] where $F_B(x)$ is the c.d.f. of distribution $F_B$ and $\I = (F_S, F_B)$ represents the instance.

Therefore, if there exists some distribution $F_B$ so that $\E_{q\sim \Mecha_{S}}\left[\ALG(q, \I) - \frac23\cdot \OPT(\I)\right]$ is negative, then there exists some $x_0\in \mathbb{R}_{\geq 0}$ such that $g(x_0) < 0$. Let $\I'$ represent the instance in which the seller's distribution remains to be $F_S$ and the buyer's distribution is a single-point distribution at $x_0$. It is clear that $\E_{q\sim \Mecha_{S}}\Big[\ALG(q, \I') - \frac23\cdot \OPT(\I')\Big] = g(x_0) < 0$. This means that for any distribution $F_S$, 
\[\inf_{F_B}\E_{q\sim \Mecha_{S}}\big[\ALG(q, \I) - \frac23\cdot \OPT(\I)\big]\] is non-negative if an only if for any single-point distribution $F_B$, $\E_{q\sim \Mecha_{S}}\left[\ALG(q, \I) - \frac23\cdot \OPT(\I)\right]$ is non-negative. This suggests that it suffices to examine all single-point distributions for the seller.

Now, we fix any distribution of the buyer $F_B$, and define $h(x)$ as the contribution when seller's value is $x$, i.e.,
\[h(x) = \E_{q\sim \Mecha_S\atop B\sim F_B}\left[x + \indic[x\leq q\leq B]\cdot\left(B - x\right) - \frac23\max(x, B)\right].\]

As the price $q$ is sampled from a continuous distribution $\Mecha_S$, and $F_B$ is a single-point distribution, we know that $h(x)$ is continuous. What's more, for any distribution $F_S$, it holds that 
\[\E_{q\sim \Mecha_{S}}\left[\ALG(q, \I) - \frac23\cdot \OPT(\I)\right] = \int h(x) {d} F_S(x) = \E_{S\sim F_S}\big[h(S)\big]\]

To show that two-point distributions can always achieve the minimum, define $r^{*}$ be the infimum of the following optimization problem:
\begin{align}\label{eq:rstaropt}
\underset{y,z,p\in \mathbb{R}_{\geq 0}}{\inf} \quad &p\cdot h(y) + (1 - p)\cdot h(z) \nonumber\\
\text{subject to:} \quad & py + (1 - p) z = 1  \nonumber \\
 & 0\leq p\leq 1.
\end{align}

Now define $k$ as $\sup_{y\in [0, 1)} \frac{r^{*} - h(y)}{1 - y}$. We first confirm the existence of a supremum, i.e., $k$ is a finite number here. This is because that if we define $p_y$ as  $\frac{1}{2 - y}$, it holds that $p_y \cdot y + (1 - p_y)\cdot 2 = 1$. Note that $r^*$ is the infimum of the optimization problem~\eqref{eq:rstaropt}, and this directly implies that $p_y\cdot h(y) + (1-p_y)\cdot h(2) \geq r^*$. After multiplying $\frac{1}{1-y}$ on both sides of the inequality and then reformulating it, we obtain that \[\frac{r^* - h(y)}{1 - y} \leq \frac{h(2) - h(y)}{2 - y}.\] This clearly illustrates that $\frac{r^* - h(y)}{1 - y}$ is at most $\frac{h(2) - h(y) }{2-y} \leq |h(2)| + |h(y)|$. Furthermore, given that $h(y)$ is a continuous function in $[0,1]$, it follows that $|h(y)|$ is bounded within the same interval. Thus, this means that  $\frac{r^* - h(y)}{1 - y}$  is upper bounded by a constant for any $y\in [0, 1)$, thereby confirming the existence of the supremum.

For any small enough constant $\varepsilon > 0$, let $L(x)$ be the line that go through the point $(1, r^{*} - \varepsilon)$ with a slope of $k - \varepsilon$, i.e. $L(x) = (k - \varepsilon)(x - 1) + r^{*} - \varepsilon$. We first argue that $L(x)$ is  a lower bound of $h(x)$.

\begin{lemma}
For any $x \in \mathbb{R}_{\geq 0}$, $L(x) \leq h(x)$.
\end{lemma}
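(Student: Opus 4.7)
The plan is to split on the position of $x$ relative to $1$ and, in each regime, exploit either the definition of $k$ directly or feasibility of a two-point seller distribution in the infimum defining $r^{*}$.

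First, I would handle the easy cases $x=1$ and $x\in[0,1)$. For $x=1$ the ``two-point'' distribution with $p=1$ and $y=z=1$ is feasible for the optimization \eqref{eq:rstaropt}, so $r^{*}\le h(1)$, and then $L(1)=r^{*}-\varepsilon\le h(1)$. For $x\in[0,1)$ the definition $k=\sup_{y\in[0,1)}\tfrac{r^{*}-h(y)}{1-y}$ gives $r^{*}-h(x)\le k(1-x)$, i.e.\ $h(x)\ge r^{*}+k(x-1)$. Comparing with $L(x)$ yields $h(x)-L(x)\ge \varepsilon(1-(x-1))\cdot\text{(sign bookkeeping)}$; a quick calculation shows $h(x)-L(x)\ge \varepsilon\cdot x\ge 0$.

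The main case is $x>1$, and this is where the two-point construction does the work. For any $y\in[0,1)$, take the seller distribution that places mass $p=\tfrac{x-1}{x-y}$ at $y$ and mass $1-p=\tfrac{1-y}{x-y}$ at $x$; its mean equals $1$, so it is feasible for \eqref{eq:rstaropt}. Hence
\[
\frac{(x-1)\,h(y)+(1-y)\,h(x)}{x-y}\;=\;p\,h(y)+(1-p)\,h(x)\;\ge\;r^{*}.
\]
Rearranging (using $1-y>0$) gives
\[
h(x)\;\ge\;r^{*}+(x-1)\cdot\frac{r^{*}-h(y)}{1-y}.
\]
Since $x-1>0$, taking the supremum over $y\in[0,1)$ on the right-hand side preserves the inequality, so $h(x)\ge r^{*}+(x-1)k$. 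Comparing with $L(x)=(k-\varepsilon)(x-1)+r^{*}-\varepsilon$ yields $h(x)-L(x)\ge \varepsilon(x-1)+\varepsilon=\varepsilon x\ge 0$.

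The only subtlety I anticipate is ensuring the supremum in the definition of $k$ can legitimately be passed through the inequality obtained from the two-point construction; but since the coefficient $x-1$ is strictly positive in the nontrivial case and the sup is finite (as argued in the paragraph preceding the lemma), this step is clean. The three cases together establish $L(x)\le h(x)$ for every $x\in\mathbb{R}_{\ge 0}$.
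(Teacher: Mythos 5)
Your proof is correct and follows essentially the same route as the paper: the same case split at $x=1$, the same use of the definition of $k$ for $x\in[0,1)$, and the same key observation that the two-point distribution supported on $\{y,x\}$ with mean $1$ (weights $\frac{x-1}{x-y}$ and $\frac{1-y}{x-y}$) is feasible in \eqref{eq:rstaropt} when $x>1$. The only difference is cosmetic: you obtain $h(x)\ge r^*+k(x-1)$ directly by taking the supremum over all $y\in[0,1)$, whereas the paper fixes a near-maximizer $y'$ with $\frac{r^*-h(y')}{1-y'}\ge k-\varepsilon$ and argues by contradiction; both give $h(x)-L(x)\ge \varepsilon x\ge 0$.
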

\begin{proof}

First, it is clear that \[L(1) = r^{*} - \varepsilon < r^{*} \leq h(1),\]
and this implies that the inequality holds when $x = 1$.

As $k$ is the supremum of  $\frac{r^{*} - h(y)}{1 - y}$ over the interval $[0, 1)$, we know that there exists a $y'\in [0, 1)$, where $\frac{r^{*} - h(y')}{1 - y'}$ is at least $k - \varepsilon$. Suppose there exists some $x > 1$ such that $h(x) < \frac{r^{*} - h(y')}{1 - y'} (x - 1) + r^{*}$. Notice that $\left(y', x, \frac{x - 1}{x - y'}\right)$ is a valid solution for the optimization problem~\eqref{eq:rstaropt}, and it holds that 
\[h\left(y'\right)\frac{x - 1}{x - y'} + h\left(x\right) \frac{1 - y'}{x - y'} < h\left(y'\right)\frac{x - 1}{x - y'} + \left(\frac{r^{*} - h(y')}{1 - y'} (x - 1) + r^{*}\right)\frac{1 - y'}{x - y'} = r^{*}\]
which contradicts the fact that $r^{*}$ is the infimum of~\eqref{eq:rstaropt}. This implies that for any $x > 1$, \[h(x) \geq  \frac{r^{*} - h(y')}{1 - y'} (x - 1) + r^{*} \geq (k - \varepsilon) (x - 1) + r^{*} - \varepsilon = L(x).\]

Finally we consider the case where $x \in [0, 1)$. According to the definition of $k$, it follows that $\frac{r^{*} - h(x)}{1 - x} \leq k$ for all $x\in [0, 1)$. This implies that $h(x) \geq k(x - 1) + r^{*}$ for all $x\in [0, 1)$. As $x\in [0, 1)$, $x - 1$ lies within the interval $[-1, 0)$.  Thus
\[h(x)\geq k(x - 1) + r^{*} \geq (k - \varepsilon)(x - 1) + r^{*} - \varepsilon = L(x).\]
\end{proof}

Therefore, for any distribution $F_S$ satisfying $\E_{S\sim F_S}[S] = 1$, we could see that 
\[\E_{S\sim F_S}[h(S)]\geq \E_{S\sim F_S}[L(S)] = (k - \varepsilon)\cdot \E_{S\sim F_S}[(S - 1)] + r^{*} - \varepsilon = L(1) \geq r^{*} - \varepsilon.\]

This means that $\E_{S\sim F_S}\left[h(S)\right]$ is at least $r^{*} - \varepsilon$ for any distribution $F_S$ such that $\E_{S\sim F_S}[S] = 1$. What's more, by the definition of $r^{*}$, we can also see that there always exists a two-point distribution so that $\E_{S} [h(S)]$ is at most $r^{*} + \varepsilon$. Taking $\varepsilon \rightarrow 0$, we then show that it suffices to prove that $\E_{S} [h(S)]$ is non-negative for any two-point distribution $F_S$.

To sum up, we have argued that it suffices to show that 
\[\E_{q\sim \Mecha_{S}}\left[\ALG(q, \I) - \frac23\cdot \OPT(\I)\right]\geq 0 \] for all instances $\I$ that have a single-point buyer distribution and a two-point seller distribution. In our following proof, we assume that the value of the buyer is always $y$, and the seller's value is $x$ with probability $p$, and $\frac{1 - xp}{1 - p}$ with probability $1 - p$, where $x \in [0, 1]$ and $p\in [0, 1)$. Notice here the mean of the seller is scaled to $1$ and $\frac{1 - xp}{1 - p}$ is at least $1$. Recall that the price $q$ is chosen uniformly from the interval $[0, 3]$. We now break the problem into the following different cases, and argue that $\E_{q\sim \Mecha_{S}}\left[\ALG(q, \I) - \frac23\cdot \OPT(\I)\right]$ is non-negative in each of the cases:
\begin{itemize}
    \item $y\leq x$. Notice that when $y\leq x$, both $\ALG(q, \I)$ and $\OPT(\I)$ are $1$ since the trade never happens. 
    \item $x < y \leq \frac{1 - xp}{1 - p}$. In this case, we can see that $\E_{q\sim \Mecha_S}\left[\ALG(q, \I)\right] = 1 + (y-x)p\cdot \frac{\min(y, 3) - x}{3}$  and $\OPT(\I) = 1 + yp - xp$. Thus, we need to show that \[\E_{q\sim \Mecha_{S}}\left[\ALG(q, \I) - \frac23\cdot \OPT(\I)\right] = \frac13\left(1 + (y - x)\cdot p \cdot \left(\min(y,3) - x{-2}\right)\right)\] is non-negative. When $y\geq 3$, by some simple calculations, it holds that 
    \begin{align*}
    \E_{q\sim \Mecha_{S}}\left[\ALG(q, \I) - \frac23\cdot \OPT(\I)\right] = \frac13\left(1 + (y - x) p (1 - x)\right) \geq \frac13,
    \end{align*} where the last inequality follows from the non-negativity of $1-x$, $y - x$ and $p$. When $y < 3$, we can see that \begin{align*}
    \E_{q\sim \Mecha_{S}}\left[\ALG(q, \I) - \frac23\cdot \OPT(\I)\right] = \frac13\left(1 + (y - x) p (y - x - 2)\right) \geq 0,
    \end{align*}where the last inequality is because that $(y - x)(y - x - 2) \geq -1$ and $p \in [0, 1]$. 
    \item $x \leq \frac{1 - xp}{1 - p} \leq y$. In this case, \small {$\E_{q\sim \Mecha_S}\left[\ALG(q, \I)\right] = 1 + (y-x)p \frac{\min(y, 3) - x}{3} + \left(y - \frac{1 - xp}{1 - p}\right)\cdot (1 - p)\cdot \frac{\min(y, 3) - \min\left(\frac{1 - xp}{1 - p}, 3\right)}{3}$} and $\OPT(\I) = y$. We first assume $y \geq \frac{1-xp}{1-p}\geq 3$. Notice that $\frac{1-xp}{1-p}\geq 3$ implies that $(3 - x) p \geq 2$. Thus, 
    
    \begin{align*}
         \E_{q\sim \Mecha_{S}}\left[\ALG(q, \I) - \frac23\cdot \OPT(\I)\right] &= 1 + \frac13(y - x) p (3 - x) - \frac23 y\\
         & \geq \frac23 (y - x) + 1 - \frac23 y = 1 - \frac23 x \geq \frac13.
    \end{align*}

    We now consider the case where $y \geq3 
    \geq \frac{1-xp}{1-p}$. We know that  
    \begin{align*}        
    \E_{q\sim \Mecha_{S}}\left[\ALG(q, \I) - \frac23\cdot \OPT(\I)\right] &=1 + \frac13(y - x) p (3 - x)  + \frac13 \left(y - \frac{1 - xp}{1 - p}\right)(1 - p)\left(3 - \frac{1 - xp}{1 - p}\right)- \frac23 y\\
    & = \frac{p(x^2-2x+3)-2}{3(1-p)} + 1\geq \frac13,
    \end{align*} where the second equality follows from the equation that $\frac13p (3 - x) + \frac13(1 - p)\left(3 - \frac{1 - xp}{1 - p}\right) = \frac23$ and the last inequality is because $x^2 - 2x + 3\geq 2$ and also $p\in [0, 1]$. Finally, when $y\leq 3$, it holds that  
    \begin{align*}        
    \E_{q\sim \Mecha_{S}}\left[\ALG(q, \I) - \frac23\cdot \OPT(\I)\right] &= 1 + \frac13 p (y-x)^2 + \frac13 (1 - p)\left(y-\frac{1 - xp}{1 - p}\right)^2 - \frac23 y\\
    & \geq \frac{p(x - 1)^2}{3(1-p)} \geq 0,
    \end{align*} where the first inequality is from the fact that $1 + \frac13 p (y-x)^2 + \frac13 (1 - p)\left(y-\frac{1 - xp}{1 - p}\right)^2 - \frac23 y$ is minimized at $y = 2$.
\end{itemize}
Consequently, we demonstrate that $\Mecha_S$ achieves an approximation ratio of $\frac23$. 

\paragraph{Buyer's distribution mean $\E[B]$ is known.} We now prove that $\Mecha_B$ also has an approximation ratio of $\frac23$. Following the same argument, we could see that it suffices to prove that 
\[\E_{q\sim \Mecha_{B}}\left[\ALG(q, \I) - \frac23\cdot \OPT(\I)\right] \] is non-negative for all instances that have a single-point distribution for the seller and a two-point distribution for the buyer. Similarly, we assume that the seller's value is always $y$, and the buyer's value is $x$ with probability $p$ and $\frac{1 - xp}{1 - p}$ with probability $1 - p$ where $x\in [0,1]$ and $p\in [0, 1)$. Recall that the price $q$ is chosen from $[0, 2]$ according to the following cumulative distribution function $F(x)$:\begin{align*}
\begin{split}
F(x)=\left\{
\begin{aligned}
&\frac{x}{3-3x}  \quad & 0 \leq x\leq \frac12&\\
&(4x - 1) / 3  \quad & \frac12 <  x\leq \frac23\\
&(x + 1) / 3  \quad & \frac23 < x\ \leq 2\\
& 1 \quad & x\geq 2
\end{aligned}
\right.
\end{split}
\end{align*}

Now let us consider the following cases.
\begin{itemize}
    \item $x \leq \frac{1 - xp}{1 - p} \leq y$. This case is trivial, as both $\ALG(q, \I)$ and $\OPT(\I)$ are $y$.
    \item $x \leq y \leq \frac{1 - xp}{1 - p}$. In this case, we know that  $\E_{q\sim \Mecha_S} \left[\ALG(q, \I)\right] = y + (1 - p)\left(\frac{1 - xp}{1 - p} - y\right)\cdot\left(F\left(\frac{1 - xp}{1 - p}\right) - F(y)\right)$ and $\OPT(\I)$ is $1 - xp + yp$. Thus
    
    \begin{align*}
        \E_{q\sim \Mecha_{B}}\left[\ALG(q, \I) - \frac23\cdot \OPT(\I)\right] & = y \left(1 + F(y) - F\left(\frac{1 - xp}{1 - p}\right)\right) + (1 - xp + yp) \left(F\left(\frac{1 - xp}{1 - p}\right) - F(y) - \frac23\right)
\end{align*}
    
   \begin{enumerate}
   \item We first examine the scenario where $\frac{1 - xp}{1 - p} \geq 2$. Under these conditions, the function $F\left(\frac{1 - xp}{1 - p}\right)$ is equal to 1. Thus we get that in this scenario, \begin{align*}
        \E_{q\sim \Mecha_{B}}\left[\ALG(q, \I) - \frac23\cdot \OPT(\I)\right] & = y F(y) + (1 - xp + yp) \left(\frac13- F(y)\right)
\end{align*} 

	When $y \leq \frac{1}{2}$, the function $F(y)$ is upper bounded by $\frac{1}{3}$, which means that $\frac13 - F(y)$ is non-negative. Thus, we get that  \begin{align*}
        \E_{q\sim \Mecha_{B}}\left[\ALG(q, \I) - \frac23\cdot \OPT(\I)\right] & \geq yF(y) + (y - yp + yp)\left(\frac13 - F(y)\right) = \frac13 y \geq 0, 
    \end{align*}

    where the first inequality is because that $\frac{1 - xp}{1 - p} \geq y$ implies $1 - xp \geq y - yp$.
    
    When$y \geq \frac{1}{2}$, $\frac{1}{3} - F(y)$ is always non-positive. Since $y - x \geq 0$,  $(1 - xp + yp) \left(\frac{1}{3}- F(y)\right)$ is minimized when $p = 1$ for any fixed $x$ and $y$.\footnote{{Although in our definition we do not allow $p$ to be $1$, we can nevertheless lower bound $(1 - xp + yp) \left(\frac{1}{3}- F(y)\right)$  by setting $p$ to $1$.}} As a result, it can be concluded that 
    \begin{align*}
        \E_{q\sim \Mecha_{B}}\left[\ALG(q, \I) - \frac23\cdot \OPT(\I)\right] & \geq y F(y) + (1 - x + y) \left(\frac13- F(y)\right)\\
        & = \left(F(y) - \frac13\right)\cdot x + \frac13 (y + 1) - F(y)\\
        & \geq \frac13 (y + 1) - F(y) \geq 0.
    \end{align*} The second inequality arises from $F(y) \geq \frac13$ when $y\geq \frac12$, and the last inequality is follows from the observation that $\frac13(y + 1)\geq F(y)$ for all $y\in \mathbb{R}_{\geq 0}$.

\item We now proceed to the case where $\frac{1 - xp}{1 - p} < 2$. As $x \leq 1$, we know that $\frac{1 - xp}{1 - p}$ is never less than $1$, and consequently, $F\left(\frac{1 - xp}{1 - p}\right) = \left(\frac{1 - xp}{1 - p} + 1\right)/3$. Depending on the value of $y$, there are three distinct cases, which we present below. It suffices to demonstrate that all these terms have non-negative values. We postpone the proof of their non-negativity to Appendix~\ref{subsec:nonnega}.

\begin{table}[H]
\begin{center}
 \begin{tabularx}{0.77\textwidth}{@{}lML@{}}
 \toprule
 {Value of $y$} & \multicolumn{1}{c}{$\E_{q\sim \Mecha_{B}}\left[\ALG(q, \I) - \frac23\cdot \OPT(\I)\right] = $}
                              & \multicolumn{1}{l}{}\\ 
 \midrule
  $y\in \left[0, \frac12\right)$              & \frac13\left(pxy + \frac{p(1-x)(2-p-px)}{1 - p}- \frac{p(1-x)}{1 - y}\right)
                      & eq:Case11 \\
  $y\in \left[\frac12, \frac23\right)$              & \frac13\left(\frac{(1-px)^2}{1-p} + (5px-4)y+4(1-p)y^2\right)
                      & eq:Case12 \\
  $y\in \left[\frac23, 2\right)$ & \frac{y (1 + y)  + p^2 (y - x + 2) (y - x)   - p \left( y \left(3 + 2(y - x)\right)-2\right)-1}{3 (1 - p)} & eq:Case13 \\
  \bottomrule
  \end{tabularx}
  \end{center}
\end{table}
\end{enumerate}

    \item $y\leq x \leq \frac{1 - xp}{1 - p}$.  We get that  $\E_{q\sim \Mecha_B}\left[\ALG(q, \I)\right] = y + (1 - p)\left(\frac{1 - xp}{1 - p} - y\right)\cdot\left(F\left(\frac{1 - xp}{1 - p}\right) - F(y)\right) + p(x-y)\left(F(x)-F(y)\right)$ and $\OPT(\I)$ is exactly $1$.

Consequently, we must demonstrate that for any $y \leq x \leq \frac{1-xp}{1-x}$, the following term is always non-negative:
\begin{align*}
	y + (1 - p)\left(\frac{1 - xp}{1 - p} - y\right)\cdot\left(F\left(\frac{1 - xp}{1 - p}\right) - F(y)\right) + p(x-y)\left(F(x)-F(y)\right) - \frac23.
\end{align*}

To establish this, we consider the intervals in which the values of $x$,$y$ and $\frac{1-xp}{1-p}$ falls. Depending on the values of $x,y$ and $\frac{1-xp}{1-p}$, $F(x), F(y)$ and $F\left(\frac{1-xp}{1-p}\right)$ has different expressions, and we prove the non-negativity for each term. We present the expressions to demonstrate their non-negativity in Table~\ref{table:Discussion} and provide the proof in~\Cref{sec:nng for discussion table}.

\begin{table}[h]
\centering
\caption{Different Cases with Different Values}
\label{table:Discussion}
\begin{tabular}{|c|c|cc|}
\hline
                  &  & \multicolumn{2}{c|}{$y\in \left[0,\frac12\right)$}    \\ \hline
\multirow{4}{*}{$\frac{1-xp}{1-p}\geq 2$} & $x\in \left[0,\frac12\right)$ & \multicolumn{2}{c|}{\inlineeq{\frac{1}{3} - p(x - y)\left(1 - \frac{x}{3(1 - x)}\right) - \frac{y}{3}}\label{eq:Case21}}    \\ \cline{2-4} 
                  & $x\in \left[\frac12,\frac23\right)$ & \multicolumn{2}{c|}{\inlineeq{\frac{1}{3}\left(1- 4p\left(1 - x\right)\left(x - y\right) - y\right)}\label{eq:Case22}}    \\ \cline{2-4} 
                  & $x\in \left[\frac23, 1\right]$ & \multicolumn{2}{c|}{\inlineeq{\frac13 \left(1 - p (2 - x) (x - y) - y)\right)}\label{eq:Case24}}    \\ \hline
\multirow{4}{*}{$\frac{1-xp}{1-p}< 2$} & $x\in \left[0,\frac12\right)$ & \multicolumn{2}{c|}{\inlineeq{\frac{p\left(1 + p(x - y)(1 - x - x^2) + y - x(1 + x)y - 4(1 - x)x\right)}{3(1 - p)(1 - x)}}\label{eq:Case31}}    \\ \cline{2-4} 
                  & $x\in \left[\frac12,\frac23\right)$ & \multicolumn{2}{c|}{\inlineeq{\frac{p\left(1 + (4 - 3p)x^2 + 2(1 - p)y - x(4 + 3y - p(2 + 3y))\right)}{3(1 - p)}}\label{eq:Case32}}    \\ \cline{2-4} 
                  &$x\in \left[\frac23, 1\right]$  & \multicolumn{2}{c|}{\inlineeq{\frac{p (1 - x)^2}{3 (1 - p)}}\label{eq:Case34}}    \\ \hline
                  &  & \multicolumn{1}{c|}{$y\in \left[\frac12,\frac23\right)$} &  $y\in \left[\frac23, 1\right]$\\ \hline
\multirow{4}{*}{$\frac{1-xp}{1-p}\geq 2$} & $x\in \left[0,\frac12\right)$ & \multicolumn{1}{c|}{\xmark} & \xmark \\ \cline{2-4} 
                  & $x\in \left[\frac12,\frac23\right)$ & \multicolumn{1}{c|}{\inlineeq{\frac13 \left(2 - 4 p (1 - x) (x - y) + y (4y - 5)\right)}\label{eq:Case23}} & \xmark \\ \cline{2-4} 
                  & $x\in \left[\frac23, 1\right]$ & \multicolumn{1}{c|}{\inlineeq{\frac13 \left(2 - p (2 - x) (x - y) + y (4y - 5)\right)}\label{eq:Case25}} & \inlineeq{\frac13 \big(y^2 - p(2-x)(x-y)\big)  }\label{eq:Case26} \\ \hline
\multirow{4}{*}{$\frac{1-xp}{1-p}< 2$} & $x\in \left[0,\frac12\right)$ & \multicolumn{1}{c|}{\xmark} & \xmark \\ \cline{2-4} 
                  & $x\in \left[\frac12,\frac23\right)$ & \multicolumn{1}{c|}{\begin{footnotesize}\inlineeq{\frac{2p^2x\left(1 - \frac{3}{2}x\right) + (1-2y)^2 - 4px\left(1 - x\right) + (6-3x-4y)py - 2y\left(1 - \frac{3}{2}x\right)p^2}{3(1 - p)}}\label{eq:Case33}\end{footnotesize}} & \xmark  \\ \cline{2-4} 
                  & $x\in \left[\frac23, 1\right]$ & \multicolumn{1}{c|}{\inlineeq{\frac{1 + p (-2 + x) x}{{3}(1 - p)} - {\frac{1}{3}\cdot\left(4 y - 4 y^2\right)}}\label{eq:Case35}} & \inlineeq{\frac{y(1 + y) -1 + p(2 + (x - 2)x - y - y^2)}{3(1 - p)}}\label{eq:Case36} \\ \hline
\end{tabular}
\end{table}

\end{itemize}

 \paragraph{$\frac{2}{3}$ upper bounds the approximation ratio.} We now prove that $\frac23$ is the upper bound when we only know the mean of the buyer's value or the seller's value. We first consider the case when the mean of the seller's value is known. We introduce two instances with  $\E[S]=1$. As only $\E[S]$ is known, any mechanism must choose the price from the same distribution for the two different instances. Given any mechanism, let $P_S$ be the distribution of prices when $\E[S] = 1$, and $r_S$ be the approximation ratio of this mechanism. The two instances are as follows:

\begin{itemize}
    \item The values for both agents are deterministic. The seller's value is always $1$, and the buyer's value is always  $H$, which should be think of as a large number that approaches infinity. In order to obtain $r_S$ fraction of the optimal welfare in this instance when $H\rightarrow \infty$, the condition $\Pr_{p\sim P_S}[p \geq 1] \geq r_S$ must be satisfied. 
    \item The seller's value is $0$ with probability $1-\varepsilon$ and $1/\varepsilon$ with probability $\varepsilon$, and the buyer has a deterministic value of $1 - \varepsilon$. As $\varepsilon\rightarrow 0$, it is clear that $\OPT(\I)$ has a limit of $2$, while $\E\left[\ALG(q,\I)\right]$ has a limit of $1 + \Pr_{p\sim P_S}[0\le p < 1]$. Hence, $\Pr_{p\sim P_S}[0\leq p < 1] \geq 2r_S - 1$ must hold to achieve an $r_S$-approximate in this instance.
\end{itemize}

Notice that $\Pr_{p\sim P_S}[p \geq 1] + \Pr_{p\sim P_S}[0\leq p < 1]  = 1$, which implies that $3r_S - 1\leq 1$. Since such inequality holds for all mechanisms, this indicates that the upper bound for any mechanism given only $\E[S]$ is at most $\frac23$.

Finally, we prove the upper bound when only the mean of the buyer's value is known. Similarly, we again present two instances with identical $\E[B] = 1$ and demonstrate that no mechanisms can achieve an approximation ratio exceeding $\frac23$ in both instances. For any mechanism that only uses the information of $\E[B]$, we let $P_B$ be the distribution of prices when $\E[B] = 1$, and $r_B$ be the approximation ratio of this mechanism. The two instances can be described as follows:

\begin{itemize}
    \item The values for both agents are deterministic. The seller's value is always $0$, and the buyer's value is always $1$. In this instance, the trade must happen with probability at least $r_B$, which means that $\Pr_{p\sim P_B}[0\leq p\leq 1] \geq r_B$.
    \item The buyer's value is $0$ with probability $1-\varepsilon$ and $1/\varepsilon$ with probability $\varepsilon$, and the seller has a deterministic value of $1 + \varepsilon$. If we let $\varepsilon$ approach $0$, it is clear that $\OPT(\I)$ has a limit of $2$, while $\E\left[\ALG(q,\I)\right]$ has a limit of $1 + \Pr_{p\sim P_B}[p > 1]$. Thus, $\Pr_{p\sim P_B}[p > 1] \geq 2r_B - 1$ must hold to achieve an $r_B$-approximation in this instance.
\end{itemize}

Again, $\Pr_{p\sim P_B}[p > 1] + \Pr_{p\sim P_B}[0\leq p \leq 1]  = 1$ implies that $r_B$ is at most $\frac23$. This concludes our proof. 

\subsection{Proof of Non-negativity}
\label{subsec:nonnega}
We demonstrate the non-negativity of each term individually.

\begin{itemize}
\item[\ref{eq:Case11}:] It suffices to prove that $pxy + \frac{p(1-x)(2-p-px)}{1 - p}- \frac{p(1-x)}{1 - y}$ is always non-negative. We could see that 
	\begin{align*}
		\begin{split}
			pxy + \frac{p(1-x)(2-p-px)}{1 - p}- \frac{p(1-x)}{1 - y} &\geq \frac{p(1-x)(2-p-px)}{1 - p}- \frac{p(1-x)}{1 - y}\\
			& \geq \frac{p(1-x)(2-p-px)}{1 - p}- 2p(1-x)\\
			& = {\frac{p(1-x)(p-px)}{1-p}}\\
   &{={\frac{p^2(1-x)^2}{1-p}}}\\
			& \geq 0,
		\end{split}
	\end{align*}
 where the first inequality follows from the non-negativity of $pxy$, and the second inequality is because that the term $\frac{p(1-x)}{1 - y}$ is maximized at $y = \frac12$, since the value of $y$ is at most $\frac12$. The last inequality holds {as $p<1$.}

\item[\ref{eq:Case12}:] Our goal is to demonstrate the non-negativity of $4(1-p)y^2 + (5px-4)y + \frac{(1-px)^2}{1-p}$. {Note that \begin{align*}
	\begin{split}
		4(1-p)y^2 + (5px-4)y + \frac{(1-px)^2}{1-p} &= \frac{\left(2(1-p)y-(1-px)\right)^2}{(1-p)}+pxy\\
		& \geq 0.
	\end{split}
\end{align*} 
The inequality follows from the non-negativity of $pxy$ and the fact that $p<1$.
}

\item[\ref{eq:Case13}:] We need to show that $y (1 + y)  + p^2 (y - x + 2) (y - x)   - p \left( y \left(3 + 2(y - x)\right)-2\right)-1$ is non-negative. We first rearrange the terms to get the quadratic form with respect to $y$, i.e.
\begin{align*}
	&y (1 + y)  + p^2 (y - x + 2) (y - x)   - p \left( y \left(3 + 2(y - x)\right)-2\right)-1 \\&= (1-p)^2 y^2 + (1-p)\left(1 - 2 p (1 - x)\right)y+ p \left(2 - p (2 - x) x\right) - 1.
\end{align*}
	
	When we do not have any restrictions on $y$, we observe	 that the minimum is attained at $y = \frac{2p(1 - x) - 1}{2(1-p)}$.  We first consider the case when $\frac{2p(1 - x) - 1}{2(1-p)} \geq \frac23$. From this, we first get that $-6px \geq 7 - 10p$. As $-6px$ is always non-positive, $7 - 10p$ must also be at most $0$, meaning that $p\geq \frac{7}{10}$. Taking $y$ to be the minimum point, we get that 
	\begin{align*}
		 (1-p)^2 y^2 + (1-p)\left(1 - 2 p (1 - x)\right)y+ p \left(2 - p (2 - x) x\right) - 1 &\geq -p^2 + 3p-px- \frac54\\
		 & \geq -p^2 + 3p + \frac76 - \frac{10}{6}p - \frac54\\
		 & \geq 0.
		 	\end{align*} where the last inequality follows from the fact that the quadratic function $-p^2 + 3p + \frac76 - \frac{10}{6}p - \frac54$ remains positive over the interval $[\frac{7}{10},1]$.

	When $\frac{2p(1 - x) - 1}{2(1-p)} < \frac23$, as $y$ is at least $\frac23$,  the minimum is attained at $y = \frac23$. In this case, it holds that 
	\begin{align*}
		(1-p)^2 y^2 + (1-p)\left(1 - 2 p (1 - x)\right)y+ p \left(2 - p (2 - x) x\right) - 1 &\geq \frac{1}{9} \left( (16 - 30x + 9x^2) p^2 + (-2 + 3x)4p +1  \right)\\
  &{= \frac{1}{9}\left((3x-2)(3x-8)p^2 + 4(3x-2)p+1)\right)}\\
  & {= \frac{1}{9}\left((3x-2)(3x-8)\left(p+\frac{2}{3x-8}\right)^2 + 1-\frac{4(3x-2)}{3x-8}\right)}\\
  & {= \frac{1}{9}\left((3x-2)(3x-8)\left(p+\frac{2}{3x-8}\right)^2 +\frac{9x}{8-3x}\right)}
\geq 0.
	\end{align*}
{The final inequality is because $x\in\left[0,\frac{2}{3}\right]$.}

 \subsection{Non-negativity of the Terms in~\Cref{table:Discussion}}~\label{sec:nng for discussion table}
\item[\ref{eq:Case21}:] Notice that $\left(1 - \frac{x}{3(1 - x)}\right)$ remains non-negative for $x\in [0, \frac12]$. Consequently, the term is minimized at $p = 1$.
\begin{align*}
\frac{1}{3} - p(x - y)\left(1 - \frac{x}{3(1 - x)}\right) - \frac{y}{3} &\geq     \frac{1}{3} - (x - y)\left(1 - \frac{x}{3(1 - x)}\right) - \frac{y}{3}\\
& = \frac{1}{3(1-x)} \left(1 + 2y - x\left(4-4x + 3y\right)\right){=\frac{1}{3(1-x)} \left((1-2x)^2+2y-3xy\right)}.
\end{align*}{The final term in the inequality above is non-negative because $x \leq \frac12$ and $y\geq 0$ implies that $2y - 3xy \geq 0$.}

\item[\ref{eq:Case22}:]  We need to verify the non-negativity of $1- 4p\left(1 - x\right)\left(x - y\right) - y$. Notice that
\begin{align*}
    1- 4p\left(1 - x\right)\left(x - y\right) - y&\geq 1- 4\left(1 - x\right)\left(x - y\right) - y\\
    & \geq 1- 4\left(1 - \left(1 + y\right) / 2\right)\left( \left(1 + y\right)/ 2{-y}\right) - y\\
    & = y (1 - y) \geq 0,
\end{align*}where the first inequality is derived from that the term is minimized at $p = 1$, and the second inequality is because that the term is minimized when $x = \frac{(1 + y)}{2}$.

\item[\ref{eq:Case23}:] We here aim to ensure that $2 - 4 p (1 - x) (x - y) + y (4y - 5)$ is non-negative.
\begin{align*}
    2 - 4 p (1 - x) (x - y) + y (4y - 5) &\geq 2 - 4(1 - x) (x - y) + y (4y - 5)\\ 
            & \geq  2 - 4\left(1 - \left(1 + y\right) / 2\right) \left(\left(y+1\right)/2 - y\right) + y (4y - 5)\\
            & = {3\left(y-\frac{1}{2}\right)^2+\frac{1}{4}}\geq 0.
\end{align*}where the first inequality and the second inequality holds due to similar reasons as in \eqref{eq:Case22}. The term is minimized at $p = 1$ and $x = (1 + y) / 2$.

\item[\ref{eq:Case24}:] We aim to show that $1 - p (2 - x) (x - y) - y$ is at least $0$. It is clear to see that 
\begin{align*}
\begin{split}
    1 - p (2 - x) (x - y) - y &\geq 1 - (2 - x)(x - y) - y \\
                        & \geq 1 - (2 - 1)\cdot (1 - y) - y\\
                        & = 0,
\end{split} 
\end{align*} where the first inequality once again follows from the fact that this term is minimized at $p = 1$. For the second inequality, note that if we do not have any constraints on $x$, the term would be minimized at $x = (2 + y) / 2 \geq 1$. However, since the value of $x$ is at most $1$, we find  that this term actually attains its minimum at $x = 1$.

\item[\ref{eq:Case25}:] It is sufficient to prove that $2 - p (2 - x) (x - y) + y (4y - 5)$ is always non-negative. Observe that if we consider $y$ as a constant, this term is nearly identical to \eqref{eq:Case24}, except for constants. Consequently, it also reaches the minimum at $p = 1$ and $x = 1$.
\begin{align*}
2 - p (2 - x) (x - y) + y (4y - 5) 
& \geq 2 -  (2 - 1) (1 - y) + y (4y - 5)\\
& = \left(1-2y\right)^2 \geq 0.
\end{align*} 

\item[\ref{eq:Case26}:] The term for which we aim to ensure non-negativity is $y^2 - p(2 - x) (x - y)$. Similar to \eqref{eq:Case24}, it is evident that the term attains its minimum at $p = 1$ and $x = 1$.
\begin{align*}
    y^2 - p(2 - x) (x - y) &\geq y^2 - (1 - y)\\
    & = y^2 + y - 1 \geq 0.
\end{align*} where the last inequality follows from the non-negativity of $y^2 + y - 1$ on $\left[\frac23, 1\right]$.

\item[\ref{eq:Case31}:] As $1 - x$, $1 - p$, $p$ are always non-negative, we only need to prove that the value of $1 + p(x - y)(1 - x - x^2) + y - x(1 + x)y - 4(1 - x)x$ is at least $0$. Since $1 - x - x^2$ is non-negative for $x\in \left[0, \frac12\right)$, both $p(x-y)(1-x-x^2)$ and $(1 - x - x^2)y$ are always non-negative, and thus the following holds:
\begin{align*}
	1 + p(x - y)(1 - x - x^2) + y - x(1 + x)y - 4(1 - x)x & \geq 1 + \left(1 - x - x^2\right)y - 4(1 - x)x\\
	& \geq 1 - 4(1 - x)x\\
	& \geq 0.
\end{align*}where the last inequality dues to the non-negativity of $4x^2-4x+1$.

\item[\ref{eq:Case32}:] We need show that $1 + (4 - 3p)x^2 + 2(1 - p)y - x\left(4 + 3y - p(2 + 3y)\right)$ is always non-negative. We first reorganize the term and get $1 + (4 - 3 p) x^2 + (2 (1 - p) - 3 x (1 - p)) y - x (4 - 2 p)$. As $x\leq \frac23$, we could see that $2 (1 - p) - 3 x (1 - p)$ is always non-negative. Thus,
\begin{align*}
	1 + (4 - 3 p) x^2 + (2 (1 - p) - 3 x (1 - p)) y - x (4 - 2 p) &\geq (4 - 3 p) x^2 - (4-2p) x + 1\\
	& \geq  1 -\frac{(4-2p)^2}{4(4-3p)}\\
	& {=\frac{p-p^2}{(4-3p)} }\geq 0
\end{align*} where the second inequality is derived from that the quadratic term obtains its minimum at $x = -\frac{4-2p}{2(4-3p)}$. The final inequality is because $p-p^2\geq 0$ for $p\in [0, 1]$.

\item[\ref{eq:Case33}:] We aim to demonstrate that $2p^2x\left(1 - \frac{3}{2}x\right) + 1 - 4y + 4y^2 - 4px\left(1 - x\right) + 6py - 2y\left(1 - \frac{3}{2}x\right)p^2 - 3pxy - 4py^2$ is always non-negative when $x,y\in \left[\frac12,\frac23\right)$ and $p\in [0, 1)$. Notice that $x < \frac23$ implies that $1 - \frac23x$ is positive in the interval. Therefore, we only need to establish that $1 - 4y + 4y^2 - 4px\left(1 - x\right) + 6py - 2y\left(1 - \frac{3}{2}x\right)p^2 - 3pxy - 4py^2$ is non-negative. Fixing $x,y$, we could see that this is a quadratic term with respect to $p$, and the coefficient of $p^2$ is $-2y\left(1-\frac23x\right)$ which is non-positive. Therefore, the minimum of this quadratic term is achieved at the boundary, i.e., $p = 0$ or $p = 1$. When $p = 0$, the expression is equivalent to
\[4y^2 - 4y + 1 = (2y - 1)^2 \geq 0.\] When $p = 1$, the expression becomes {\[1+x^2\left(4-\frac{9}{2}y\right)+x(-4+3y)=(2x - 1)^2+3x\left(1-\frac{3}{2}x\right)y\geq 0.\] The inequality is because $3x(1-\frac{3}{2}x)$ is non-negative for $x\leq \frac{2}{3}$. }

\item[\ref{eq:Case34}:] This case is trivial, as $p\in [0,1)$.

\item[\ref{eq:Case35}:] This one is also easy. First, notice that $(-2 + x) x \geq -1$. Therefore, it holds that
\begin{align*}
	\frac{1+p(x-2)x}{1-p} + 4y^2 - 4y& \geq \frac{1-p}{1-p} + 4y^2 - 4y\\
	& = (2y - 1)^2 \geq 0.
\end{align*}

\item[\ref{eq:Case36}:] It suffices to show that $-1 + y(1 + y) + p(2 + (-2 + x)x - y - y^2)$ is always non-negative. Once again, using the fact that $(-2 + x) x \geq -1$, we observe that
\begin{align*}
	-1 + y(1 + y) + p(2 + (-2 + x)x - y - y^2) &\geq y(1 + y) - 1+(1-y-y^2)p\\
	& \geq y(1 + y) - 1+ (1-y-y^2) = 0
\end{align*} where the second inequality arises from the observation that $(1-y-y^2)$ remains negative when $y\in \left[\frac23, 1\right]$, and thus the expression attains the minimum value at $p = 1$.

\end{itemize}

\section{Missing Proofs in Section~\ref{sec:Sample}}
\label{appendix:SampleAppendix}

\subsection{Mechanisms over Buyer's information} \label{appendix:NonbuyerInfo}
In the sample setting, we only consider mechanisms over seller's information. We do not consider quantile or order statistics mechanisms over buyer's information since it is impossible to get any constant approximation with these family of mechanisms.

\begin{theorem}
\label{thm:NonBuyerInfo}
No quantile mechanism over buyer's distribution or order statistic mechanism over only buyer's samples can achieve a constant fraction of the optimal welfare.
\end{theorem}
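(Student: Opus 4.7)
The plan is to first reduce the order-statistic claim to the quantile claim by invoking the equivalence from Section~\ref{subsec:Connection}: an order statistic mechanism with $N$ samples from $F_B$ and selection distribution $P$ is equivalent to the quantile mechanism over $F_B$ whose quantile distribution has density $q(x) = \sum_{i=1}^{N} P_i f_N^i(x)$. This density is a Bernstein polynomial and hence absolutely continuous on $[0, 1]$. So it suffices to show that for every quantile distribution $Q$ on $[0,1]$ and every $\eta > 0$, there is an instance $\I = (F_S, F_B)$ with $\ALGW(\Mecha, \I)/\OPTW(\I) \le \eta$.

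Given any $Q$, I would use a three-scale construction for $F_B$ that exploits the fundamental asymmetry of the problem (the item starts with the seller, so $\E[S]$ is a trivial welfare floor, useful only when $\E[S]$ vanishes relative to $\OPTW$). Let $\gamma := Q(\{1\})$. By continuity from above of $Q$ (noting $\bigcap_{\varepsilon > 0}(1-\varepsilon, 1) = \emptyset$), there exists $\varepsilon > 0$ with $Q((1-\varepsilon, 1)) \le \eta/16$; choose $\rho > 0$ small enough that also $Q((1-\varepsilon-\rho, 1-\rho]) \le \eta/8$. Let $F_B$ place mass $1-\varepsilon-\rho$ on $0$, mass $\varepsilon$ on $H_1 := 1/\varepsilon$, and mass $\rho$ on $H_2 := \eta/(2\rho)$; let $F_S$ be a point mass at $\mu := \eta/8$. (Apply the tie-breaker reduction of Appendix~\ref{appendix:tiebreaking} to restore continuity if needed.)

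A direct computation gives $\OPTW(\I) = \mu(1-\varepsilon-\rho) + \varepsilon H_1 + \rho H_2 \ge 1 + \eta/2$. The welfare at the three candidate prices is $\mu$ (at price $0$, no trade occurs since $\mu > 0$), at most $1 + \eta/2 + \mu$ (at $H_1$, trade occurs with probability $\varepsilon + \rho$), and at most $\mu + \eta/2$ (at $H_2$, trade occurs with probability $\rho$). Since the mechanism picks these prices with probabilities $Q(I_0), Q(I_1), Q(I_2)$ for the intervals $I_0 := [0, 1-\varepsilon-\rho]$, $I_1 := (1-\varepsilon-\rho, 1-\rho]$, $I_2 := (1-\rho, 1]$, and since $Q(I_1) \le \eta/8$, the total expected welfare is at most $\mu + Q(I_1)(1+\eta/2) + Q(I_2)\eta/2 \le \eta$, so the ratio is at most $\eta/\OPTW(\I) \le \eta$, as required.

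The key obstacle is the edge case $\gamma > 0$, where $Q$ places a point mass at the extreme quantile $x = 1$; a naive two-point $F_B$ fails here because the mechanism then targets its maximum support point with probability $\ge \gamma$ and welfare at that price captures essentially the entire optimum. The three-point design circumvents this by inserting a ``decoy'' maximum at $H_2$ tuned so that trading at $H_2$ yields only $\eta/2$ of welfare, while the true welfare-bearing mass sits at the middle scale $H_1$, which lies outside the top-quantile band the mechanism is forced to target. The crucial quantitative input is that $Q((1-\varepsilon, 1)) \to 0$ as $\varepsilon \to 0$ regardless of the size of the atom at $1$, which lets us cleanly separate the three scales. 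For order statistic mechanisms, where $Q$ is absolutely continuous and thus $\gamma = 0$, the argument simplifies and the mass at $H_2$ becomes unnecessary.
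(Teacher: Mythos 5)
Your proposal is correct in substance but follows a genuinely different route from the paper. For the sample-based part, you reduce order statistic mechanisms over buyer samples to quantile mechanisms over $F_B$ via the Bernstein-density equivalence; this suffices for the theorem as stated, but the paper proves something stronger there: its argument applies to \emph{arbitrary} mechanisms that see $N$ buyer samples, via an indistinguishability argument (with probability $1-\varepsilon/4$ all samples equal $0$, so the mechanism behaves as on the all-zeros input and cannot locate the hidden high buyer value), not just order statistic ones. For the quantile part, the paper first removes any mass of $Q$ at quantile $1$ w.l.o.g.\ (a price at the buyer's $1$-quantile never trades when the distribution is continuous) and then uses a two-point buyer against a constant seller; you instead keep the atom at $1$ and neutralize it with a third ``decoy'' support point $H_2$ calibrated to carry only $\eta/2$ of welfare, which is a nice alternative that avoids the w.l.o.g.\ step, at the cost of a three-scale construction. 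One parameter-hygiene point needs tightening: the interval $I_1=(1-\varepsilon-\rho,1-\rho]$ is \emph{not} contained in $(1-\varepsilon,1)$, so if $Q$ has an atom at $1-\varepsilon$ of mass exceeding $\eta/16$, then no small $\rho$ achieves $Q(I_1)\le \eta/8$ as stated; you also implicitly need $H_2>H_1$, i.e.\ $\rho<\eta\varepsilon/2$, so that $H_2$ really sits at the top quantile band. Both are repaired by first fixing $\delta>0$ with $Q((1-\delta,1))\le \eta/8$ (possible since these sets shrink to the empty set) and then choosing any $\varepsilon,\rho>0$ with $\varepsilon+\rho\le\delta$ and $\rho<\eta\varepsilon/2$, which forces $I_1\subseteq(1-\delta,1)$; with that adjustment your welfare accounting and the final bound $\ALGW\le\eta\le\eta\cdot\OPTW$ go through exactly as written.
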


\begin{proof}
    
We first show that there is no constant approximation quantile mechanism $Q$ over buyer's distribution. Remind that $F_B$ and $F_S$ respectively stand for the distribution of the buyer and the seller, and we will also use $Q$ to denote the corresponding distribution over the buyer's quantile.

To start with, we can assume that distribution $Q$ does not have point mass at $1$. That's because if we set the $1$-quantile of the buyer's distribution, i.e. $F_B^{-1}(1)$, as the price, we have $\Pr_{B\sim F_B}[B \geq F_B^{-1}(1)] = 0$. This means that the trade will never happen under such price and thus this price will not increase the welfare. Therefore, if we move this probability mass to other values, the welfare and also the approximation ratio will not decrease, and we prove that this assumption is with out loss of generality.

Now, for an arbitarily small $\varepsilon > 0$, we will show that there is no $\varepsilon$-approximation quantile mechanism over buyer's distribution. For any quantile mechanism $Q$ over buyer's distribution, we construct the following set \[
\mathcal{X} = \{{t\in [0,1]}\given \Pr_{x\sim Q}[x\ge t] \le \varepsilon / 2\}
\]

Since there is no point mass at $1$, this set will contain some $t\in \mathcal{X} $ and $ t \neq 1$. Consider the following instance $\I = (F_S, F_B)$:

\begin{align*}
\begin{split}
    		 B \sim F_{B}, B=\left\{
\begin{aligned}
&0   \quad & w.p. \quad t\\
&H & w.p.\quad 1 - t
\end{aligned}
\right.
\quad\quad\quad
S \sim F_{S}, S=
\begin{aligned}
&\varepsilon / 2  \quad & w.p. \quad 1\\
\end{aligned}
\end{split}
\end{align*}

where $H = \frac{1}{1 - t}$ is a large enough number. 

In this instance, the intuition is that all the welfare is hide at some very little probability of the buyer, and we must make sure that the trade is very likely to happen when the buyer has a very high value. However, since we don't know the value of the seller, it is hard for us to make sure that $p\ge S$ which means that this trade will not happen. 

Remind that we define $\OPTW(\I)$ as the optimal welfare, i.e., $\E_{S\sim F_S, B\sim F_B}[\max(S, B)]$ against instance $\I = (F_S, F_B)$ and $\ALGW(Q, \I)$ as the welfare of mechanism $Q$ against instance $\I$. Formally speaking, we have that \[\OPTW(\I) \ge H \cdot (1 - t) = 1\]

and also 

\begin{align*}
\begin{split}
\ALGW(Q, \I)  &= \E[S] + \E_{p\sim F_B^{-1}(Q)}[(B - S)\cdot \indic[B\ge p\ge S]]\\
			&\le \varepsilon/2 + \E_{p\sim F_B^{-1}(Q)}[B \cdot \indic[p\ge S]]\\
			&\le \varepsilon/2 + H \cdot (1 - t)\cdot \varepsilon / 2 = \varepsilon
\end{split}
\end{align*}

where the last inequality holds since $ p = F_B^{-1}(x)\ge S$ is equivlent to $x \ge t$ where $x$ is drawn from $Q$, and this happends w.p. at most $\varepsilon/2$ by the definition of $t$. So for every distribution $Q$ over buyer's quantile, we find an instance $\I$ so that $\ALGW(Q, \I) \le \varepsilon \cdot \OPTW(\I)$, which completes the first part of our proof.

Next we aim to show that for any $\varepsilon > 0, N > 0$, there is no $\varepsilon$-approximation mechanisms using only $N$ samples from the buyer.

First, for any mechanisms $\Mecha$ using $N$ samples from the buyer, it can be formallized as a mapping 
\[
f: \mathbb{R}_{\ge 0}^N \mapsto \Delta\left( \mathbb{R}_{\ge 0} \right)
\]

where $f(x_1, x_2, \cdots, x_N)$ stands for the distribution of price selected by this mechanism after receiving $N$ samples $(x_1, x_2, \cdots, x_N)$. Let $D$ be $f(0, 0, \dots, 0)$, which is the distribution of the price if this mechanism sees $N$ samples all with value $0$. Similarly, we consider the following set:
\[
\mathcal{H}' = \{{t\in \mathbb{R}_{\ge 0}}\given \Pr_{x\sim D}[x\ge t] \le \varepsilon / 2\}
\]

Again we know this set is non-empty, so let $t$ be any real positive number in the set $\mathcal{H}'$. Therefore, we could construct an instance $\I = (F_S, F_B)$ satisfying
\begin{align*}
\begin{split}
    		 B \sim F_{B}, B=\left\{
\begin{aligned}
&0   \quad & w.p. \quad (1-\varepsilon/4)^{1/N}\\
&H & w.p.\quad 1 - (1-\varepsilon/4)^{1/N	}
\end{aligned}
\right.
\quad\quad\quad
S \sim F_{S}, S=
\begin{aligned}
&t + 1 \quad & w.p. \quad 1\\
\end{aligned}
\end{split}
\end{align*}

where $H > \frac{t + 1}{\varepsilon / 4\cdot (1 - (1 - \varepsilon / 4)^{1/N}}$ is a large enough number.

In this instance, we can see that with just $N$ samples, no mechanism can distinguish this instance with another instance whose buyer always have a value of $0$. Therefore, it can not get the welfare hidden at the buyer. Formally speaking:
\[
\OPTW(\I) \ge H \cdot \left(1 - (1-\varepsilon/4)^{1/N}\right) > \frac{t+1}{\varepsilon / 4}
\]

To calculate $\ALGW(\Mecha, \I)$, we consider the case when all the samples are zero and the case when there is at least one non-zero number in the samples. In the latter case, the probability that at least one sample is non-zero is at most $1 -\left( (1 - \varepsilon / 4)^{1/N}\right)^N = \varepsilon / 4$ which is negligible. In the former case, since $\Pr_{p\sim f(0, 0, \dots, 0)}[p \ge t]\le \varepsilon / 2$, the trade happends w.p. at most $\varepsilon / 2$. Therefore, we could expand $\ALGW(\Mecha, \I)$ into:
\begin{align*}
	\begin{split}
\ALGW(\Mecha, \I)  &\le  \E_{p\sim f(0, 0, \dots, 0)}[S + (B - S)\cdot \indic[B \ge p \ge S]] \cdot \Pr[\text{All } N \text{ samples are } 0]\\
	& \quad + \OPTW(\I) \cdot \Pr[\text{at least } 1 \text{ sample is not } 0]\\
	& \le (t + 1+ \E[B \cdot\indic[p\ge S]]) \cdot 1  + \OPTW(\I) \cdot(\varepsilon / 4)\\
	& \le \OPTW(\I) \cdot (\varepsilon / 4 + \varepsilon / 2 + \varepsilon / 4)\\
	& = \varepsilon \cdot \OPTW(\I)
	\end{split}
\end{align*}

where the second inequality holds since 
$ t + 1\le (\varepsilon / 4) \cdot \OPTW(\I)$,
$\Pr_{p\sim f(0, 0,\dots,0)}[p\ge S] \le \varepsilon / 2$ and 
$\E_{B\sim F_B}[B]\le \E_{S\sim F_S, B\sim F_B}[\max(S, B)] = \OPTW(\I)$.

And this finishes our proof.
\end{proof}

\subsection{Proof of Lemma~\ref{lem:smallsamplekey}}
\label{appendix:proofofsmallsamplekey}

The proof here is quite straight forward. As we show in Section~\ref{subsec:Connection}, each order statistic mechanism corresponds to a quantile mechanism. Thus $\mathcal{C}(\mathcal{P}(Q))$ is exactly the approximation ratio of the order statistic mechanism $Q$. What's more, we could see that the $\Delta_N$ enumerates all possible order statistic mechanisms with $N$ samples. Therefore, this directly implies that $\argmax_{Q\in \Delta_N} \mathcal{C}(\mathcal{P}(Q))$ is the optimal order statistic mechanism with $N$ samples.

\subsection{Proof of Lemma~\ref{lem:symratio}}
\label{appendix:proofofsymratio}

Fix an instance $\I = (F, F)$, recall that $S$ and $B$ are the random variables respectively indicating the value of the seller and the buyer. Define $\ALG$ to be the random variable which indicates the welfare of our mechanism in the realization, which is $S + (B - S) \cdot \indic[B\ge p \ge S]$ where $p$ is the price chosen by our quantile mechanism $Q$. Similarly, let $\OPT$ be the random variable which indicates the optimal welfare in the realization, which is $\max(B, S)$.

To prove Lemma~\ref{lem:symratio}, we introduce the following lemma.

\begin{lemma}\label{lem:symcalc}
For any quantile mechanism $Q$, let $\ALG$ and $\OPT$ respectively be the random variables indicating the welfare of the mechanism $Q$ and the optimal welfare in the realization. Let $r$ be 
\[\min_{\I = (F, F)}\inf_{x\in [0,1)} \frac{\Pr[\ALG \ge F^{-1}(x)]}{\Pr[\OPT\ge F^{-1}(x)]}\] where $F(x)$ is the cumulative distribution function of distribution $F$, and $F^{-1}(x)$ is the quantile function. The quantile mechanism $Q$ is at least $r$-approximate. 
\end{lemma}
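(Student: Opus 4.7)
The plan is to establish Lemma~\ref{lem:symcalc} using the layer-cake (tail-integral) formula for non-negative random variables. Fix an arbitrary instance $\I = (F, F)$. Since both $\ALG$ and $\OPT$ are non-negative, we have
\begin{equation*}
    \E[\ALG] = \int_0^{\infty} \Pr[\ALG \ge t]\, dt \qquad \text{and} \qquad \E[\OPT] = \int_0^{\infty} \Pr[\OPT \ge t]\, dt.
\end{equation*}
By the definition of $r$ as a minimum over instances, the inequality $\Pr[\ALG \ge F^{-1}(x)] \ge r \cdot \Pr[\OPT \ge F^{-1}(x)]$ holds for this particular $\I$ and every $x \in [0, 1)$. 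If we can upgrade this quantile-indexed bound to the pointwise bound $\Pr[\ALG \ge t] \ge r \cdot \Pr[\OPT \ge t]$ for every $t \ge 0$, then integrating both sides with respect to $t$ immediately yields $\E[\ALG] \ge r \cdot \E[\OPT]$, which is precisely the $r$-approximation claim.

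The remaining task is to lift the hypothesis from the quantile thresholds $F^{-1}(x)$ to all thresholds $t \ge 0$. For $t$ above the support of $F$ (so $F(t) = 1$) we have $\Pr[\OPT \ge t] = 0$ and the inequality is trivial. For $t$ below the support (so $F(t) = 0$) we have $\Pr[\OPT \ge t] = 1$; moreover, since $\ALG \ge S$ pointwise and $S \ge t$ almost surely, we also have $\Pr[\ALG \ge t] = 1 \ge r$, using that $\ALG \le \OPT$ pointwise forces $r \le 1$. For $t$ with $F(t) \in (0, 1)$, set $x = F(t)$, so that $F^{-1}(x) \le t$. If $F^{-1}(x) = t$, the hypothesis applies directly; otherwise $F^{-1}(x) = t' < t$, in which case $F$ must be constant on $[t', t]$. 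The continuity of $F$ then forces $\Pr[B \in [t', t]] = \Pr[S \in [t', t]] = 0$, and since $\ALG$ and $\OPT$ almost surely take values in $\{B, S, \max(B, S)\}$, both tail probabilities are constant on $[t', t]$. Applying the hypothesis at $x$ thus delivers the desired inequality at $t$.

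The main conceptual obstacle lies in the flat-piece case: when $F$ has flat intervals, the quantile family $\{F^{-1}(x) : x \in [0, 1)\}$ misses certain thresholds $t$, and one must verify that no extra probability mass of $\ALG$ or $\OPT$ can accumulate on the missed intervals. The continuity assumption on $F$ is precisely what rules this out, by eliminating atoms of $B$ and $S$ at the endpoints of the flat interval. Once the pointwise tail inequality is in hand, integrating via Fubini (as encoded in the layer-cake identity) completes the proof.
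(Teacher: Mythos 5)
Your proof is correct and follows essentially the same route as the paper's: both rest on the layer-cake identity $\E[\ALG]=\int_0^\infty \Pr[\ALG\ge t]\,dt$ together with the quantile-indexed tail comparison $\Pr[\ALG\ge F^{-1}(x)]\ge r\cdot\Pr[\OPT\ge F^{-1}(x)]$. The only difference is mechanical — the paper substitutes $t=F^{-1}(z)$ inside the tail integral (asserting $F^{-1}$ is continuous and increasing), while you lift the inequality to every threshold $t\ge 0$ before integrating; your explicit handling of flat pieces of $F$ (gaps in the support, where $F^{-1}$ can jump) is if anything more careful than the paper's substitution step.
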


\begin{proof}
	We have \[\Pr[\ALG\ge F^{-1}(x)] \ge r\cdot \Pr[\OPT\ge F^{-1}(x)]\] for all $x\in [0, 1]$ and quantile function $F^{-1}(x)$.
	
	Without loss of generality, we could assume the distribution has a support over $[0, a]$. Notice that since we assume the distribution is continuous w.l.o.g. in the sample setting, $F^{-1}(x)$ is a continuous and increasing function over $[0, 1]$ and $F^{-1}(0) = 0, F^{-1}(1) = a$, so we have
	\begin{align*}
		\ALGW(\I, Q) &= \E[\ALG]\\
			& = \int_{0}^a \Pr[\ALG \ge x] \mathop{dx}\\
			& = \int_{0}^1 \Pr[\ALG \ge F^{-1}(z)] \mathop{d F^{-1}(z)}\\
			& \ge \int_{0}^1 r \cdot \Pr[\OPT \ge F^{-1}(z)] \mathop{d F^{-1}(z)}\\
			& = r\cdot \int_{0}^a \Pr[\OPT \ge x] \mathop{dx} = r\cdot \E[\OPT]\\
			& = r\cdot \OPTW(\I)
	\end{align*}holds for any instance $\I = (F, F)$, which implies that quantile mechanism $Q$ is at least $r$-approximate.

\end{proof}

With Lemma~\ref{lem:symcalc}, we are able to give a lower bound of approximation ratio for any quantile function $Q$.

Fixing the buyer and seller's distribution $F$, we only need to calculate the term $\Pr[\ALG \ge F^{-1}(x)]$ and $\Pr[\OPT \ge F^{-1}(x)]$. The event $\OPT\ge F^{-1}(x)$ happens if and only if either $B$ or $S$ is greater than $F^{-1}(x)$. Thus,
\begin{equation}
\label{eq:termOPT}
\Pr[\OPT\ge F^{-1}(x)] = 1-x^2	
\end{equation}

The event $\ALG\ge F^{-1}(x)$ happens if and only if one of the following conditions is satisfied:
\begin{itemize}
	\item $S\ge F^{-1}(x)$
	\item $p\le F^{-1}(x)$, $S\le p$  and $B\ge F^{-1}(x)$. Here $S\le p\le B$, thus the trade takes place, and $B\ge F^{-1}(x)$.
	\item $p> F^{-1}(x)$, $S\le F^{-1}(x)$ and $B\ge p$. Since $S\le p\le B$, the seller trades the item to the buyer,and we have $B\ge p\ge F^{-1}(x)$.
\end{itemize}

Note that these three events are disjoint, so we could calculate the probability for each event to happen and add them up.

For the first event, $\Pr[S\ge F^{-1}(x)] = 1 - x$.

For the second event, we just enumerate the quantile of $p$. Suppose the quantile of $p$ is $t$, which means that $F^{-1}(t) = p$. Then, we have $\Pr[S\le p] = t$ and $\Pr[B\ge F^{-1}(x)] = 1-x$. Thus, this event takes place w.p. $\int_{[0,x]} t(1-x)\, dQ(t)$ where $Q(t)$ is the c.d.f. of distribution $Q$.

For the third event, we use the same idea. Suppose the quantile of $p$ is $t$, we have $\Pr[S\le F^{-1}(x)] = x$ and $\Pr[B\ge p] = 1 - t$. Therefore, this event happens w.p. $\int_{(x,1]}(1-t)x\, dQ(t)$.

By adding the terms above up, we have:
\begin{equation}
\label{eq:termALG}
	\Pr[\ALG\ge F^{-1}(x)] = {\int_{[0,x]} t(1-x) \, d Q(t) + \int_{(x, 1]} (1-t)x\, dQ(t) + (1-x)}
\end{equation}

Therefore, combining Lemma~\ref{lem:symcalc} and Equation~\eqref{eq:termOPT} and \eqref{eq:termALG}, we have that for any quantile mechanism $Q$ with c.d.f. $Q(x)$, the minimum of the following optimization problem lower bounds the approximation ratio of the quantile mechanism $Q$.
\begin{align*}
\begin{split}    
& \min_{\I = (F, F)}  \inf_{x\in [0,1)} \frac{\int_0^x q(t)\cdot t(1-x)\mathop{dt} + \int_x^1 q(t)\cdot (1-t)x\mathop{dt} + (1-x)}{1-x^2} \\
 & = \inf_{x\in [0,1)}  \frac{\int_0^x q(t)\cdot t(1-x)\mathop{dt} + \int_x^1 q(t)\cdot (1-t)x\mathop{dt} + (1-x)}{1-x^2}
\end{split}
\end{align*} where the equality holds since we could see that the term is independent from $F(x)$. 

We now left to show that the approximation ratio of $Q$ is also upper bounded by $r$. It suffices to show that for any $\varepsilon > 0$ there exists some instance $\I = (F, F)$ such that  \[\ALGW(Q, I) \le (r + \varepsilon) \cdot \OPTW(\I) \]

First, Recall Equation~\eqref{eq:termOPT} and~\eqref{eq:termALG}. We could see that both the term $\Pr[\ALG\ge F^{-1}(x)]$ and the term $\Pr[\OPT\ge F^{-1}(x)]$ are independent of the distribution $F$. Thus, \[r = \min_F\inf_{x\in [0,1)} \frac{\Pr[\ALG \ge F^{-1}(x)]}{\Pr[\OPT\ge F^{-1}(x)]} = \inf_{x\in [0,1)}  \frac{\int_0^x q(t)\cdot t(1-x)\mathop{dt} + \int_x^1 q(t)\cdot (1-t)x\mathop{dt} + (1-x)}{1-x^2}\]

Suppose the optimum of $ \inf_{x\in [0,1)}  \frac{\int_0^x q(t)\cdot t(1-x)\mathop{dt} + \int_x^1 q(t)\cdot (1-t)x\mathop{dt} + (1-x)}{1-x^2}$ is attained at $x^*$. Consider the following instance $\I = (F, F)$ satisfying
\begin{align*}
\begin{split}
    		 v \sim F, v=\left\{
\begin{aligned}
&U[0, r \cdot (1 - x^*) \cdot \varepsilon/2]   \quad & w.p. \quad x^*\\
&U[1, 1 + \varepsilon/2]   & w.p.\quad 1 - x^*
\end{aligned}
\right.
\end{split}
\end{align*}

Notice that in this instance, The event $\indic[\ALG \ge F^{-1}(x^*)]$ is equivalent to $\indic\Big[\ALG \in [1, 1 + \varepsilon/2]\Big]$. Such argument also holds for $\OPT$. Thus, we can see that
\begin{align*}
\begin{split}
	\ALGW(Q, \I) &= \E[\ALG] \\
	&   \le \Pr[\ALG \in [1, 1 + \varepsilon/2 ]]\cdot (1 + \varepsilon/2) + \Pr[\ALG \in [0, r \cdot (1 - x^*) \cdot \varepsilon/2]] \cdot (r \cdot (1 - x^*) \cdot \varepsilon/2)\\
		& \leq \Pr[\ALG \ge F^{-1}(x^*)]\cdot ( 1 + \varepsilon/2) + r \cdot (1 - x^*) \cdot \varepsilon/2 \\
		 &= r\cdot \Pr[\OPT\ge F^{-1}(x^*)]\cdot (1 + \varepsilon/2) + r \cdot (1 - x^*) \cdot \varepsilon/2\\
		 & \leq r\cdot \E[\OPT]\cdot (1 + \varepsilon / 2) + r\cdot (1 - x^*) \cdot \varepsilon/2\\
		 & \leq (r + \varepsilon) \cdot \E[\OPT] =  (r + \varepsilon) \cdot \OPTW(\I)
\end{split}
\end{align*} where the second equation uses the fact that  $r\cdot \Pr[\OPT\ge F^{-1}(x^*)] = \Pr[\ALG \ge F^{-1}(x^*)]$ since $r$ is attained at $x^*$.

Since the above holds for every $\varepsilon > 0$, this completes our proof of Lemma~\ref{lem:symratio}.
\subsection{Proof of Theorem~\ref{thm:symoptorderstatistic}}
\label{appendix:proofsymoptorderstatistic}

The proof of Theorem~\ref{thm:symoptorderstatistic} is directly a combination of Lemma~\ref{lem:smallsamplekey} and Lemma~\ref{lem:symratio}. From Lemma~\ref{lem:symratio}, we know that the approximation ratio of a quantile mechanism $Q$ with c.d.f. $Q(x)$ is exactly
\[ \inf_{x\in [0,1)} \frac{\int_{[0,x]} q(t)t(1-x) \, d t + \int_{(x, 1]} q(t)(1-t)x\, dt + (1-x)}{1-x^2}. \]
where we use $q(x)\, dx$ instead of $dQ(x)$ since we have a continuous probability density function.

We could see that the set of all distributions over $[N]$ is actually $\Big\{\{P_i\}_{i\in [n]}~|~P_i\geq 0\text{ and } \sum_{i=1}^n P_i = 1\Big\}$ where distribution $P$ would choose $i$ w.p. $P_i$. What's more, the probabiltiy density function of the quantile of such order statistic mechanism is exactly $p(x) = \sum_{i=1}^n P_i \pdf(x)$ where $\pdf(x) = N \binom{n-1}{i-1} \cdot x ^{i-1}\cdot(1-x)^{N-i}$. Therefore, combining it with Lemma~\ref{lem:smallsamplekey}, it follows that Theorem~\ref{thm:symoptorderstatistic} holds.

\subsection{Analysis of Empirical Risk Minimization Mechanism}
\label{appendix:ERM}

In this section, we give the upper bounds of the Empirical Risk Minimization mechanism. 

When $N = 1$, the empirical distribution is a one-point distribution, so any price $x\in \mathbb{R}_{\ge 0}$ is optimal. Therefore, we consider the following instance $\I = (F, F)$ s.t.

\begin{align*}
\begin{split}
    		 x \sim F, x=\left\{
\begin{aligned}
&0   \quad & w.p. \quad 1 - 1/H\\
&H \quad & w.p. \quad 1/H
\end{aligned}
\right.
\end{split}
\end{align*}

Since any price is optimal for ERM, we can assume that it will always select $H + 1$ so that the trade will never take place. Therefore taking $H\rightarrow \infty$ in this instance, the approximation ratio tends to $0.5$.

When $N = 2$, the empirical distribution is a two-point distribution. Suppose the two samples are $X_1 \le X_2$, we can see that any price in $[X_1, X_2)$ is optimal for this empirical distribution. Thus, we can assume that it will always select $X_1$. So, ERM is equivalent to an order statistic mechanism that always selects the smallest sample. Therefore, we can solve the following optimization problem: 
\[\inf_{x\in [0, 1)}\frac{\frac13x^3-x^2-\frac13x+1}{1-x^2} = \frac{2}{3}\]

Then, by Lemma~\ref{lem:symratio}, we know that there exists an instance such that ERM achieves exactly $\frac{2}{3}$ approximation. 

Now we are in the case that $N = 3$. By some calculations, we know that the second smallest sample will always be an optimal choice. Therefore, ERM is equivalent to an order statistic mechanism that always selects the second smallest sample when $N = 3$. Similarly, we can calculate that 

\[\inf_{x\in [0, 1)}\frac{\frac12x^4-x^3-\frac12x+1}{1-x^2} = \frac{3}{4}\]

Applying Lemma~\ref{lem:symratio} again, there is an instance such that ERM has an approximation ratio of exactly $\frac{3}{4}$.

Our proof strategy changes when the number of samples is greater than $5$. We consider a particular instance $\I = (F, F)$ and calculate the performance of ERM on such instance. 

\begin{align*}
\begin{split}
    		 x \sim F, x=\left\{
\begin{aligned}
&0   \quad & w.p. \quad (\sqrt2-1)\cdot(1 - \frac{1}{n})\\
&\frac1n  \quad & w.p. \quad 2 - \sqrt2\\
&1 \quad & w.p. \quad \frac{1}{n}(\sqrt2-1)
\end{aligned}
\right.
\end{split}
\end{align*}

Actually, this is a counterexample appeared in \cite{kang2022fixed}. They show that 
\[\OPTW(\I) = \frac{4(\sqrt2-1)}{n} - \frac{4\sqrt2 -1	}{n^2}\]

Since we will let $n \rightarrow \infty$, we will ignore the $O(\frac1{n^2})$ terms in the following calculation. Notice that $\Pr[x = 1] = O(\frac{1}{n})$, the probability that there is at least $1$ sample with value $1$ is negligible, so we will also assume that all samples are $0$ or $\frac{1}{n}$. Recall that there will be a tie-breaker coordinate drawn uniformly from $[0, 1]$ for each variable, and we will compare the tie-breaker coordinate if they have the same value. Now suppose there are $k_1$ samples with value $0$ and $k_2$ samples with value $\frac{1}{n}$. We know that the largest $0$ or the smallest $\frac{1}{n}$ is an optimal price for the empirical distribution when $k_1, k_2\neq 0$. Now, if we choose the largest $0$, as the price $p$, the expected welfare is:
\begin{align}
\begin{split}	
	W_1(k_1) = &\Pr[S = 0] \cdot \Pr[S < p] \cdot E[B] + \Pr\left[S = \frac{1}{n}\right] \cdot \frac{1}{n} + \Pr[S = 1] \cdot 1\\
	= &(\sqrt2-1)\cdot \frac{k_1}{k_1+1} \left((2-\sqrt2+\sqrt2-1)\cdot \frac1n\right) + (2-\sqrt2)\cdot \frac{1}{n} + \frac{1}{n}\cdot (\sqrt2-1)
\end{split}
\end{align}

if we choose the smallest $\frac1n$, as the price $p$, the expected welfare is:
\begin{small}
\begin{align}
\begin{split}	
	W_2(k_2) = &\Pr[S = 0] \cdot E[B \text{ \& } B > p] + \Pr\left[S = \frac{1}{n}\right] \cdot \left(\frac{1}{n} +\Pr[S < p]\cdot \E[B \text{ \& } B = 1]\right)  + \Pr[S = 1] \cdot 1\\
	=  &(\sqrt2-1)\cdot  \left((2-\sqrt2)\cdot \frac{1}{n}\cdot \frac{k_2}{k_2+1} + \frac{1}{n}(\sqrt2-1))\right) + (2-\sqrt2)\cdot \left(\frac{1}{n} + \frac{1}{k_2+1} \cdot(\sqrt2-1)\frac1n\right) + \frac{1}{n}\cdot (\sqrt2-1)
\end{split}
\end{align}
\end{small}
When all the samples have the same value, any price is optimal. Similar to the case when $N = 1$, the trade may never happen, so the expected welfare is $\frac{1}{n}$ in this case. 

Now, when there are $5$ samples, suppose the ERM will choose the largest $0$ when there are $1 \sim 3$ samples with value $0$, and choose the smallest $\frac1n$ when there are $1$ samples with value $\frac1n$. The expected welfare of ERM when $N = 5$ is :
\begin{small}
\begin{align*}
	\begin{split}
		\sum_{i=1}^3 (\sqrt2-1)^i (2-\sqrt2)^{5-i} \binom{5}{i} \cdot W_1(i) + \sum_{i=1}^1 (\sqrt2-1)^{5-i} (2-\sqrt2)^{i} \binom{5}{i} \cdot W_2(i) + (p^5 + (1-p)^5)\cdot \frac1n
	\end{split}
\end{align*}
\end{small}

Now compare it to the optimal:
\[\OPTW(\I) = \frac{4(\sqrt2-1)}{n} - \frac{4\sqrt2 -1	}{n^2}\]

By numerical calculations, the ratio is $\approx 0.76$ as $n\rightarrow \infty$.

Similarly, when there are $10$ samples, suppose the ERM will choose the largest $0$ when there are $1 \sim 6$ samples with value $0$, and choose the smallest $\frac1n$ when there are $1\sim 3$ samples with value $\frac1n$. The expected welfare of ERM when $N = 10$ is :
\begin{small}
\begin{align*}
	\begin{split}
		\sum_{i=1}^6 (\sqrt2-1)^i (2-\sqrt2)^{10-i} \binom{10}{i} \cdot W_1(i) + \sum_{i=1}^3 (\sqrt2-1)^{10-i} (2-\sqrt2)^{i} \binom{10}{i} \cdot W_2(i) + (p^{10} + (1-p)^{10})\cdot \frac1n
	\end{split}
\end{align*}
\end{small}

By calculations, the approximation ratio is $\approx 0.80$ as $n\rightarrow \infty$.

\subsection{Proof of Lemma~\ref{lem:asymratio}}
\label{appendix:proofasymratio}

Now we fix the quantile mechanism $Q$ and suppose its c.d.f. is $Q(x)$. Let $r$ be \[\min_{x\in [0,1]}{\int_{[0,x]}  t\, dQ(t)  +  1-x}.\]

\cite{blumrosen2021almost} already prove that $r$ is the lower bound the approximation ratio. We are only left to show that it is also the upper bound.	We aim to show that for any $\varepsilon > 0$, there exists an instance $\I = (F_S, F_B)$ such that
	\[\ALGW(\I, Q)\le (r + \varepsilon) \cdot \OPTW(\I)\]
	
	Now suppose the optimization problem above achieves its minimum at $x^{*}$. Consider the following instance $\I = (F_S, F_B)$.
	
\begin{align*}
\begin{split}
    		 S \sim F_{S}, S=\left\{
\begin{aligned}
&U[0,\varepsilon]  \quad & w.p. \quad x^{*} \\
&H + \varepsilon & w.p.\quad 1 - x^{*} 
\end{aligned}
\right.
\quad\quad\quad
B \sim F_{B}, B=
\begin{aligned}
&H\quad & w.p. \quad 1\\
\end{aligned}
\end{split}
\end{align*}
	
	where $H > 1$ is a sufficiently large number. We can see that in this instance, 
	\[\OPTW(\I) \geq H\]
	
	Now we compute the expected welfare for our mechanism. When its price has a quantile $t$ smaller than or equal to $x^*$, the trade will happen with probability exactly $t$. When the quantile of its price is greater than $x^*$, the trade will never happen. 
	\begin{align*}
		\ALGW(\I) &= \E[S] + \E[(B - S)\indic[B\le p\le S]]\\
		& \le  x^* \varepsilon + (H + \varepsilon)(1-x^*) + \int_{[0, x^*]} tH \,dQ(t)\\
		& \le H\cdot\left( 1-x^{*} +\int_{[0, x^*]} t\,dQ(t)\right) + \varepsilon \\
		& \le H\cdot r + \varepsilon \le H\cdot(r + \epsilon)\\
		& \le (r + \varepsilon)\cdot \OPTW(\I)
	\end{align*} where the third inequality follows from $r = \int_{[0, x^*]} t\, dQ(t) + (1 - x^*)$.
	
Therefore, we could find an instance $\I = (F_S, F_B)$ such that $\ALGW(Q, \I) \le (r + \varepsilon) \cdot \OPTW(\I)$ for any small enough $\varepsilon > 0$, and this concludes our proof.

\subsection{Proof of Theorem~\ref{thm:asymoptorderstatistic}}
\label{appendix:proofasymoptorderstatistic}

The proof of Theorem~\ref{thm:asymoptorderstatistic} is nearly identical to Theorem~\ref{thm:symoptorderstatistic}. From Lemma~\ref{lem:asymratio}, we know that the approximation ratio of a quantile mechanism $Q$ with a continuous p.d.f. $q(x)$ is exactly
\[  \mathcal{C}(p)=\min_{x\in [0,1]} {\int_0^x q(t) t\, dt  +  1-x}. \]
where we use $q(x)\, dx$ instead of $dQ(x)$ since we have a continuous probability density function.

Again, we know that the set of all distributions over $[N]$ is actually $\Big\{\{P_i\}_{i\in [n]}~|~P_i\geq 0\text{ and } \sum_{i=1}^n P_i = 1\Big\}$ where distribution $P$ would choose $i$ w.p. $P_i$. What's more, the probabiltiy density function of the quantile of such order statistic mechanism is exactly $p(x) = \sum_{i=1}^n P_i \pdf(x)$ where $\pdf(x) = N \binom{n-1}{i-1} \cdot x ^{i-1}\cdot(1-x)^{N-i}$. Therefore, combining it with Lemma~\ref{lem:smallsamplekey}, it follows that Theorem~\ref{thm:asymoptorderstatistic} holds.
\subsection{Proof of Lemma~\ref{lem:convert}}
\label{appendix:proofconvert}

Before we give the proof, we first introduce some notations and lemmas about Bernstein that may useful to our proof.

\begin{definition}[Stochastic Bernstein Polynomials]
The stochastic Bernstein polynomial of degree $n$ for a continous function $f$ on $[0, 1]$ is defined as

\[\left(B_n^X f\right)(t) = \sum_{k=0}^n f(X_k) p_{n,k}(t)\]

in which $X_0, X_1, \cdots X_n$ are the order statistics of $(n + 1)$ independent copies
of the random variable uniformly distributed in $[0, 1]$, and,

\[p_{n,k}(t) = \binom{n}{k} x^{k} (1-x)^{n-k}, 0\le k\le n, 0\le t\le 1\]

\end{definition}

Now fix the continous function $q(x)$ we aim to approximate, define $\omega(h)$ as the following function:

\[\omega(h) = \sup_{0\le x, y\le 1\atop|x-y|\le h} |{q}(x) - {q}(y)|\]

Now we can introduce the lemma in \cite{DBLP:journals/moc/SunWZ21} that help us approximate the function ${q}(x)$ by order statistics.

\begin{lemma}[Theorem~2.11 In \cite{DBLP:journals/moc/SunWZ21}]
\label{lem:Bernstein}
Let $\varepsilon > 0$ and $f\in C[0, 1]$ be given. Suppose that $\omega\left(\frac{1}{\sqrt{n}}\right) < \varepsilon / 6.2$. Then the following inequality holds true:

\begin{align*}
	\Pr\left[\left\| B_n^X  f - f \right\|_{\infty} > \varepsilon \right] \le 2(n+1)\exp\left(-\frac{2\varepsilon^2}{\omega^2\left(\frac{1}{\sqrt{n}}\right)}\right)
\end{align*}

\end{lemma}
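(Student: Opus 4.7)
The plan is to reduce the uniform deviation $\|B_n^X f - f\|_\infty$ to a classical Bernstein approximation plus a union bound over the fluctuations of the $n+1$ order statistics $X_k$ about their means $k/(n+1)$. The factor $2(n+1)$ in the stated bound strongly suggests exactly a union bound over $n+1$ single-coordinate events, and the exponent $2\varepsilon^2/\omega^2(1/\sqrt{n})$ matches a Hoeffding-type tail evaluated at a carefully chosen threshold.

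First, for every fixed $t\in[0,1]$ I would use the identity $\sum_k p_{n,k}(t)=1$ to write
\[
B_n^X f(t) - f(t) \;=\; \sum_{k=0}^{n} \bigl(f(X_k)-f(t)\bigr)\, p_{n,k}(t),
\]
and apply $|f(X_k)-f(t)| \leq \omega(|X_k-k/(n+1)|) + \omega(|k/(n+1)-t|)$ by subadditivity of the modulus of continuity. The deterministic piece $\sum_k \omega(|k/(n+1)-t|)\,p_{n,k}(t)$ is the standard Bernstein-approximation error: Cauchy--Schwarz plus the binomial variance identity $\sum_k (k/n-t)^2 p_{n,k}(t) = t(1-t)/n \leq 1/(4n)$ and the linear-growth bound $\omega(\lambda h) \leq (1+\lambda)\omega(h)$ together give a uniform estimate of order $\omega(1/\sqrt{n})$ with explicit constant.

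Second, I would control the stochastic piece $\sum_k \omega(|X_k - k/(n+1)|)\,p_{n,k}(t)$ via concentration of order statistics. Using the classical identity $\{X_k > s\} = \{\mathrm{Binomial}(n{+}1,s) < k\}$, Hoeffding's inequality yields
\[
\Pr\!\left[\,|X_k - k/(n+1)| > \delta\,\right] \;\leq\; 2\exp\!\bigl(-2(n+1)\delta^2\bigr)
\]
for every $k$. A union bound over $k=0,1,\ldots,n$ gives a bad-event probability of at most $2(n+1)\exp(-2(n+1)\delta^2)$, and on the complementary good event one has $\sum_k \omega(|X_k-k/(n+1)|)\,p_{n,k}(t) \leq \omega(\delta)$ uniformly in $t$. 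Combining both pieces, on the good event $\|B_n^X f - f\|_\infty \leq \omega(\delta) + C\,\omega(1/\sqrt{n})$ for an explicit constant $C$.

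Third, I would calibrate $\delta = \varepsilon/\bigl(\omega(1/\sqrt{n})\sqrt{n+1}\bigr)$ so that $2(n+1)\delta^2 = 2\varepsilon^2/\omega^2(1/\sqrt{n})$, which produces exactly the exponent in the stated bound. It remains to verify that with this choice the deterministic estimate $C\omega(1/\sqrt{n}) + \omega(\delta)$ is at most $\varepsilon$ under the hypothesis $\omega(1/\sqrt{n}) < \varepsilon/6.2$. The main obstacle is precisely this constant-tracking step: one must apply the growth bound $\omega(\lambda h) \leq (1+\lambda)\omega(h)$ to express $\omega(\delta)$ in terms of $\omega(1/\sqrt{n})$, then optimize so that the sum $\omega(\delta) + C\omega(1/\sqrt{n})$ lies below $\varepsilon$ with equality essentially achieved when $\omega(1/\sqrt{n}) = \varepsilon/6.2$. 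The ratio $6.2$ arises from balancing these two contributions, and making this balance rigorous --- rather than the union bound or Hoeffding step, which are routine --- is the technically delicate part of the argument.
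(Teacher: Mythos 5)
This lemma is not proved in the paper at all: it is imported verbatim as Theorem~2.11 of the cited reference (Sun--Wu--Zhang, \emph{Math.\ Comp.}\ 2021), so there is no in-paper argument to compare against, and what you are offering is a blind re-derivation of that external result. Your skeleton --- split $B_n^X f-f$ into the deterministic Bernstein error plus the fluctuation of the order statistics about fixed centers, control the latter through the order-statistic/binomial duality, Hoeffding, and a union bound over the $n+1$ indices (which is indeed where the prefactor $2(n+1)$ would come from) --- is a sensible first attempt, but the calibration step, exactly the one you defer as ``delicate,'' does not go through, and not merely for lack of bookkeeping.

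Concretely: with your choice $\delta=\varepsilon/\bigl(\omega(1/\sqrt{n})\,\sqrt{n+1}\bigr)$ the exponent matches, but the good event no longer certifies $\|B_n^X f-f\|_\infty\le\varepsilon$. Take $f(x)=Mx$, so $\omega(h)=Mh$: then $\omega(\delta)=\varepsilon\sqrt{n}/\sqrt{n+1}\approx\varepsilon$ already, and adding the deterministic term $C\,\omega(1/\sqrt{n})$ (with your Cauchy--Schwarz estimate $C=3/2$, and under the hypothesis this can be as large as roughly $C\varepsilon/6.2$) pushes the guaranteed bound to about $1.2\varepsilon>\varepsilon$. Conversely, if you shrink $\delta$ so that $\omega(\delta)+C\,\omega(1/\sqrt{n})\le\varepsilon$, then in the same example $\delta\le(\varepsilon-C\,\omega(1/\sqrt{n}))/M$, and the union-bound exponent $2(n+1)\delta^2$ is at most about $2\,(1-C/6.2)^2\,\varepsilon^2/\omega^2(1/\sqrt{n})$, i.e.\ $c\,\varepsilon^2/\omega^2(1/\sqrt{n})$ with $c$ strictly below $2$ (about $1.1$--$1.4$ depending on the constant you use for the classical Bernstein error); there is no value of $\delta$ satisfying both requirements near the threshold $\omega(1/\sqrt{n})\approx\varepsilon/6.2$. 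Since the middle order statistics genuinely fluctuate at the Gaussian scale $\exp(-2n\delta^2)$, per-coordinate Hoeffding with a uniform threshold plus a union bound cannot be sharpened enough to recover the stated constant $2$ in the exponent together with the threshold $6.2$; the actual proof must exploit something more (for instance, concentration of the weighted sum $\sum_k\bigl(f(X_k)-f(k/n)\bigr)p_{n,k}(t)$ itself, whose weights have small $\ell_2$ norm, or a maximal/DKW-type inequality for the uniform empirical process), so your argument as written only yields a weaker bound of the same shape. Two smaller slips, harmless by comparison: with your indexing $X_k$ is the $(k+1)$-st smallest of $n+1$ uniforms, so its mean is $(k+1)/(n+2)$ rather than $k/(n+1)$, and the duality should read $\{X_k>s\}=\{\mathrm{Bin}(n+1,s)\le k\}$.
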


We are now ready to prove Lemma~\ref{lem:convert}.

We first present our mechanism $P$. For some instance $\I = (F_S, F_B)$, suppose there are $n$ samples $X_1\le X_2\le \cdots \le X_n$ drawn from the distribution. We draw another $n$ samples $Y_1 \le Y_2\le \cdots \le Y_n$ uniformly and independently from $[0, 1]$. Let $s = \sum_{i=1}^n q(Y_i)$ be the sum. Then our mechanism will choose $X_i$ with probability ${q}(Y_i) / s$

Now, let 
\[g(x) = \sum_{i=1}^n q(Y_i)f_n^i(x)/s \quad \text{where}~ f_n^i(x) = n\binom{n - 1}{i-1}\cdot x^{i-1}(1-x)^{n-i}\]

be the corresponding probability density function of the order statistic mechanism $P$. It suffices to prove that with high probability \[|g(x) - q(x)| \leq \varepsilon \quad \forall x\in [0, 1]\]

To prove this, we introduce an intermediate function $h(x)$:

\[h(x) = \sum_{i=1}^n q(Y_i) \cdot \binom{n-1}{i-1}x^{i-1}(1-x)^{N-i} = \sum_{i=1}^n {q}(Y_i) p_{n-1,i-1}(x)\]

As we can see, $h$ is the stochastic Bernstein polynomial of ${q}$ with degree $n - 1$ and $\omega\left(\frac{1}{\sqrt{n - 1}}\right) \leq \varepsilon /100$. Applying lemma~\ref{lem:Bernstein}, we know that 
\begin{equation}\label{eq:hqdis}
\Pr\left[\left\|h - q\right\|_{\infty} > \varepsilon / 4\right] \le 2n \exp\left(-\frac{\varepsilon^2}{8 \cdot \omega^2\left(\frac{1}{\sqrt{n - 1}}\right)} \right)\le \varepsilon 
\end{equation} where the last inequality comes from the assumption in the statement of lemma.

Thus, we only need to show that the difference between $h$ and $g$ is small. First we have
\begin{align*}
    \begin{split}
        g(x) &=  \sum_{i=1}^N \frac{{q}(Y_i)}{S} f_n^i(x) \\
            & = \sum_{i=1}^N {q}(Y_i) \binom{n-1}{i-1}x^{i-1}(1-x)^{n-i} \frac{n}{s}\\
            & = \frac{n}{s} \cdot h(x)
    \end{split}
\end{align*}

So it is equivalent to prove that $n$ and $s$ are close w.h.p. We have the following lemma.

\begin{lemma}\label{lem:Bernsteinhelp}
\begin{align*}
\begin{split}
\Pr\left[\left(1 - \frac{\varepsilon}{4M}\right)\cdot n\le s \le  \left(1 + \frac{\varepsilon}{4M}\right) \cdot n\right] \ge 1 - \varepsilon
\end{split}
\end{align*}
\end{lemma}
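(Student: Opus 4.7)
The first observation is that the sum $s = \sum_{i=1}^n q(Y_i)$ is symmetric in its arguments, so although $Y_1,\ldots,Y_n$ are the order statistics of a uniform sample, $s$ has exactly the same distribution as $\sum_{i=1}^n q(U_i)$ where $U_1,\ldots,U_n$ are i.i.d.\ uniform on $[0,1]$. Since $q$ is the probability density function of the quantile mechanism $Q$, a distribution on $[0,1]$, we have $\int_0^1 q(x)\,dx = 1$, hence $\E[q(U_i)] = 1$ and in particular $\E[s] = n$. Moreover, each term $q(U_i)$ lies in the bounded range $[0, M]$ by the definition $M = \max_{x\in[0,1]} q(x)$, so $\E[q(U_i)^2] \le M \cdot \E[q(U_i)] = M$ and therefore $\Var(q(U_i)) \le M$.

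Next, the plan is to apply a standard concentration inequality to this sum of i.i.d.\ bounded random variables at the target deviation $t = \varepsilon n/(4M)$. Either a direct Hoeffding bound or a Bernstein / multiplicative Chernoff bound (using the variance estimate just derived) produces an inequality of the form $\Pr[|s-n| \ge \varepsilon n/(4M)] \le 2\exp(-c\,\varepsilon^2 n / M^{\alpha})$ for a suitable constant $c>0$ and small integer $\alpha$. Calibrating the choice of inequality so that the resulting exponent is dominated by the one in hypothesis~\eqref{eq:con3} of Lemma~\ref{lem:convert}, namely $\exp(-\varepsilon^2 n/(48 M^2)) \le \varepsilon/2$, yields $\Pr[|s-n| \ge \varepsilon n/(4M)] \le \varepsilon$.

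Rewriting $|s-n| \le \varepsilon n/(4M)$ as the two-sided inequality $(1-\varepsilon/(4M))\,n \le s \le (1+\varepsilon/(4M))\,n$ then immediately gives the conclusion of the lemma. The only subtlety is matching the constants in the exponent to hypothesis~\eqref{eq:con3}; this is a calibration issue rather than a conceptual one. I do not foresee a genuine obstacle, since after the reduction from order statistics to i.i.d.\ uniform samples the entire argument collapses to a single application of a textbook tail bound.
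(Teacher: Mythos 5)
Your proposal is correct and takes essentially the same route as the paper: the paper likewise notes that $s$ is a sum of $n$ i.i.d.\ $[0,M]$-bounded variables with $\E[s]=n$ (the order statistics being a permutation of i.i.d.\ uniforms), applies a Chernoff-type bound at relative deviation $\varepsilon/(4M)$, and invokes hypothesis~\eqref{eq:con3} to conclude the failure probability is at most $\varepsilon$. Your caveat about calibrating constants is apt rather than a defect, since the paper's own stated exponents $\varepsilon^2 n/(48M^2)$ and $\varepsilon^2 n/(32M^2)$ are loose (standard multiplicative Chernoff or Bernstein for $[0,M]$-bounded summands gives a power $M^3$ in the denominator), a discrepancy that only affects constants and powers of $M$ absorbed by the choice of $n$ in the downstream applications.
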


We first use the lemma to continue our proof, before proving the lemma itself. We know that $|g(x) - h(x)|\leq \varepsilon/2 ~ \forall x\in [0, 1]$ if $n \left(1-\frac{\varepsilon}{4M}\right)\leq s \le n \left(1 + \frac{\varepsilon} {4M}\right)$ and $h(x) < 2M ~ \forall x\in [0, 1]$. Therefore, 
\begin{small}
\begin{align}\label{eq:ghdis}
\begin{split}
   \Pr[|g(x) - h(x)| \geq \varepsilon/2 ~\exists x\in [0, 1]] &\le \Pr\left[s > n \left(1 + \frac{\varepsilon}{4M}\right)\right] + \Pr\left[s < n \left(1 - \frac{\varepsilon}{4M}\right)\right] +\Pr[\exists_{x\in [0, 1]} h(x) > 2M]\\
   & \le 2\varepsilon
\end{split}
\end{align}
\end{small}

where the second inequality is from Lemma~\ref{lem:Bernsteinhelp} and the fact that with probability at least $1 - \varepsilon$, $\|h - {q}\| \le \varepsilon / 4$ and ${q}$ has a maximum of $M$ on $[0, 1]$.

Now combining inequality~\eqref{eq:hqdis} and \eqref{eq:ghdis}, we know that with probability at least $1 - 3\varepsilon$, we have:
\[|g(x) - {q}(x)|\le |g(x) - h(x)| + |h(x) - {q}(x)| \leq \varepsilon\]

Since such probability is strictly greater than $0$, we know that there exists some order statistic mechanism $P$ over $N$ samples such that $|g(x)  - q(x) | \leq \varepsilon  ~ \forall x\in [0, 1]$. Also we show that our construction will find such order statistics with high probability. Finally, as we assumed in the statement, it holds that
\[\mathcal{C}(P) \geq \mathcal{C}(Q) - c\cdot |g - q|_{\infty} \geq \mathcal{C}(Q) - c\varepsilon\]

This completes the proof of Lemma~\ref{lem:convert}.

\begin{proof}[Proof of Lemma~\ref{lem:Bernsteinhelp}]
    We know that $\E[s] = n$. Notice that $s$ is the sum of $n$ i.i.d. random variables ranging in $[0, M]$. Therefore, by Chernoff bound, it holds that
    \begin{align*}
        \begin{split}
            \Pr\left[\left(1 - \frac{\varepsilon}{4M}\right)\cdot n\le s \le  \left(1 + \frac{\varepsilon}{4M}\right) \cdot n\right] &\geq 1-  \exp\left(-\frac{\varepsilon^2 n}{48M^2}\right) -  \exp\left(-\frac{\varepsilon^2 n}{32M^2}\right)\\
            & \geq 1 - \varepsilon
        \end{split}
    \end{align*}where the last inequality is from the property in the statement.
\end{proof}

\subsection{Proof of Lemma~\ref{lem:symturning}}
\label{appendix:proofofsymturning}

let $Q'$ be the quantile mechanism with p.d.f. $\tilde{q}(x)$ satisfying

\begin{align*}
\begin{split}
\tilde{q}(x) = \left\{
\begin{aligned}
&0 \quad & x\in \left[0, \frac{\sqrt{2}}{2} - \frac{1}{32}\varepsilon\right]\bigcup \left[\frac{\sqrt{2}}{2}+\frac{1}{32}\varepsilon, 1\right]\\
&(32/\varepsilon)^2 \cdot \left(x - \frac{\sqrt2}{2} + \frac{1}{32}\varepsilon\right)  & x\in \left[\frac{\sqrt2}{2} - \frac{1}{32}\varepsilon, \frac{\sqrt2}{2}\right]\\
&-(32/\varepsilon)^2\cdot \left(x - \frac{\sqrt2}{2} - \frac{1}{32}\varepsilon\right)   & x\in \left[\frac{\sqrt2}{2}, \frac{\sqrt2}{2} + \frac{1}{32}\varepsilon\right]\\
\end{aligned}
\right.
\end{split}
\end{align*}

One could see that the quantile mechanism $\tilde{Q}$ would choose a price with its quantile in $\left[\frac{\sqrt2}{2} - \frac{1}{32}\varepsilon, \frac{\sqrt2}{2} + \frac{1}{32}\varepsilon\right]$. 

Fix a price $p$ and an instance $\I = (F, F)$, $\alg$ is the random variable indicating the welfare for the fixed price $p$ in the realzation, i.e. $S + (B - S)\indic[B\ge p\ge S]$ where $B,S$ are drawn independently from $F$. Thus, it suffices to show that for any instance $\I = (F, F)$, the following holds when $\frac{\sqrt2}2 - \frac{1}{32}\varepsilon \le t\leq \frac{\sqrt2}2 + \frac{1}{32}\varepsilon$.
 \[	\E\left[\alg\given p = F^{-1}(t)\right] \ge \left(\frac{2+\sqrt2}{4} - \varepsilon / 2\right) \cdot \OPTW(\I)\]

To prove the approximation ratio, we need to use Lemma~\ref{lem:symratio}. Notice that the term $\E[\ALG\given p = F^{-1}(t)]$ could be understood as the expected welfare of a quantile mechanism that always selects the $t$-quantile as the price, so Lemma~\ref{lem:symratio} could also be applied to analyze the ratio between ${\E\left[\alg\given p = F^{-1}(t)\right]}$ and ${\OPTW(\I)}$. 
For $x\ge t$, we know that
\begin{align*}
\begin{split}
\inf_{x\in [t, 1)}\frac{(1-x)+t(1-x)}{1-x^2} =	\frac{1 + t}{2}
\end{split}
\end{align*}
	
When $x<t$, it holds that
\begin{align*}
\begin{split}
\min_{x\in [0, t]}\frac{(1-x)+x(1-t)}{1-x^2} = \frac{1 + \sqrt{1 -t^2}}{2}
\end{split}
\end{align*}

As we can see, $\frac{1 + t}{2}\ge \frac{1 + \sqrt{1 -t^2}}{2}$ when $1 \ge t\ge \frac{\sqrt2}2$ and $\frac{1 + t}{2}\le \frac{1 + \sqrt{1 -t^2}}{2}$ when $\frac{\sqrt{2}}{2}\ge t\ge 0$. By Lemma~\ref{lem:symratio}, we know that for any instance $\I = (F, F)$,

\begin{align*}
\begin{split}
 \E\left[\alg\given p = F^{-1}(t)\right] \ge \left\{
\begin{aligned}
&\frac{1 + \sqrt{1 -t^2}}{2} \cdot \OPTW(\I)  \quad & t\in \left[\frac{\sqrt2}2, 1\right] \\
&\frac{1 + t}{2} \cdot \OPTW(\I)  & t\in \left[0, \frac{\sqrt2}2\right]
\end{aligned}
\right.
\end{split}
\end{align*}

Now if $\frac{\sqrt2}2 - \frac{1}{32}\varepsilon \le t \le \frac{\sqrt2}2$, it holds that

\begin{align*}
\begin{split}
\E\left[\alg\given p = F^{-1}(t)\right] &\ge\frac{1 + t}{2} \cdot \OPTW(\I) \\
	& \ge \left(\frac{2 + \sqrt2}{4} - \varepsilon / 2\right) \cdot \OPTW(\I)
\end{split}
\end{align*}.

For $\frac{\sqrt2}2 \le t\le \frac{\sqrt2}2 + \frac{1}{32}\varepsilon$, we have
\begin{align*}
\begin{split}
\E\left[\alg\given p = F^{-1}(t)\right] &\ge\frac{1 + \sqrt{1 -t^2}}{2} \cdot \OPTW(\I) \\
& \ge \left(\frac{2 + \sqrt2}{4} - \varepsilon / 2\right) \cdot \OPTW(\I).
\end{split}
\end{align*}

Finally, by some simple calculation, it is easy to see that  $\omega\left(c \cdot \varepsilon^{3.5}\right) \leq 32^2 c\cdot \varepsilon^{1.5}$ and  $\max_{x\in[0,1]} \tilde{q}(x) \leq 32/\varepsilon$ hold. This concludes the proof.

\subsection{Proof of Lemma~\ref{lem:symclose}}
\label{appendix:proofsymclose}

Suppose $|q_1(x) - q_2(x)| \leq c$ for all $x\in [0, 1]$. We could see that
\begin{align*}
 &\frac{\int_0^x (q_1(t) - q_2(t))t(1-x) \, d t + \int_x^1 (q_1(t) - q_2(t)) (1-t)x\, dt}{1-x^2}\\
 & \geq  - \frac{\int_0^x c t(1-x) \, d t + \int_x^1 c (1-t)x\, dt }{1-x^2}\\
 & \geq - \frac{\int_0^x c t(1-x) \, d t + \int_x^1 c (1-x)x\, dt }{1-x^2}\\
 & = - \frac{\int_0^x c t \, d t + \int_x^1 c x\, dt }{1+x} \geq -c.
\end{align*} holds for all $x\in [0, 1]$.

Taking the infimum over $[0, 1)$, this directly implies that $\mathcal{C}(Q_1) - \mathcal{C}(Q_2) \geq - c$.

\subsection{Proof of Lemma~\ref{lem:asymturning}}
\label{appendix:proofasymturning}

let $\tilde{Q}$ be the quantile mechanism with p.d.f. $\tilde{q}(x)$ satisfying
\begin{align*}
\begin{split}
\tilde{q}(x) = \left\{
\begin{aligned}
&0 \quad & x\in \left[0, \frac{1}{e} - \frac{\left(1 - e^{-1}\right)}{\left(e-\frac12\varepsilon\right)}\varepsilon\right]\\
    &\frac{(e-\frac12\varepsilon)^2}{\varepsilon (1-e^{-1})} \left(x - \left(\frac{1}{e} - \frac{\left(1 - e^{-1}\right)}{\left(e-\frac12\varepsilon\right)}\varepsilon\right)\right)   & x\in \left[\frac{1}{e} - \frac{\left(1 - e^{-1}\right)}{\left(e-\frac12\varepsilon\right)}\varepsilon, \frac{1}{e}\right]\\
&\frac{1}{x} - \frac{1}{2}\varepsilon   & x\in \left[\frac{1}{e}, 1\right]\\
\end{aligned}
\right.
\end{split}
\end{align*}

By Lemma~\ref{lem:asymratio}, the approximation ratio of $\tilde{Q}$ can be computed as
\begin{align*}
\begin{split}
	\min_{x\in [0,1]}\int_0^x \tilde{q}(t)\cdot t\mathop{dt}  +  1-x &\ge \min_{x\in [0,1]}\int_{\frac{1}{e}}^x \left(\frac{1}{t} - \frac{1}{2}\varepsilon\right)\cdot t\mathop{dt}  +  1-x\\
	&\geq \min_{x\in [0,1]}\int_{\frac{1}{e}}^x \frac{1}{t} \cdot t\mathop{dt}  +  1-x - \frac{1}{2}\varepsilon\\
	& = 1 - \frac{1}{e} - \frac{1}{2} \varepsilon.
\end{split}
\end{align*}

Finally, it is also straightforward to check that $\omega\left(c\cdot \varepsilon^{-2.5}\right) \leq e^2(1-e^{-1})^{-1}c^{0.5}\cdot\varepsilon^{1.5} \leq 100c^{0.5} \cdot \varepsilon^{1.5}$ and $\max_{x\in[0,1]} \tilde{q}(x)$ is at most $2e$.

\subsection{Proof of Lemma~\ref{lem:asymclose}}
\label{appendix:proofasymclose}

Suppose $|q_1(x) - q_2(x)| \leq c$ for all $x\in [0, 1]$. We could see that
\begin{align*}
 &\left(\int_0^x q_1(t)t\,dt + (1 - x) \right)- \left(\int_{0}^x q_2(t)t\, dt + (1 - x)\right) \\
 & \geq  - \int_{0}^x ct\, dt  \geq -c.
\end{align*} holds for all $x\in [0, 1]$.

Taking the minimum over $[0, 1)$, this directly implies that $\mathcal{C}(Q_1) - \mathcal{C}(Q_2) \geq - c$.

\section{Details of Numerical Experiments}
\label{appendix:experiments}

\subsection{Symmetric Instance}
\label{appendix:symmetricexperiment}

We now present the details of numerical experiments for the symmetric instances $\I = (F, F)$. We first formally write down the optimization problem that indicates the optimal order statistic mechanisms. As we proved in Section~\ref{subsec:symsmallsample}, suppose the following optimization problem $\mathsf{PO}$ achieves its maximum $\opt_{\mathsf{PO}}$ at $(P_1^*, P_2^*, \dots, P_N^*)$. Let $P^*$ be the distribution over $[N]$ such that $\Pr_{x\sim P^*}[x = i] = P_i^*$. Then, $P^*$ is the optimal order statistic mechanism in the symmetric setting with $N$ samples and its approximation ratio is $\opt_{\mathsf{PO}}$. Notice that since the c.d.f. of the order statistic mechanism $Q(x)$ is differentiable, we use $q(t)\, dt$ instead of $dQ(t)$ for ease of computation. Here $q(x)$ is the probality density function of mechanism $Q$.

\begin{center}
\begin{tabular}{|c|}
\hline 
The Optimization Problem $\mathsf{PO}$ \\
\hline
\parbox{13cm}{
\begin{align}
\max_{P_1,\dots, P_N} \quad
 \min_{x\in [0,1]} \quad &\frac{\int_0^x q(t)\cdot t(1-x)\mathop{dt} + \int_x^1 q(t)\cdot (1-t)x\mathop{dt} + (1-x)}{1-x^2}\nonumber \\
\textsf{s.t.} \quad  &
q(x) = \sum_{i=1}^N P_i \cdot \pdf(x)\nonumber\\
& \pdf(x) = N\cdot \binom{N-1}{i-1}\cdot x^{i-1} (1-x)^{n-i} \quad \forall i\in [N] \nonumber\\
&\sum_{i=1}^n P_i = 1 \label{eq:PiSum1}\\
&P_i \geq 0 \quad \forall i\in [N] \label{eq:PiNonZero}
\end{align}
} \\
\hline 
\end{tabular}
\end{center}

Now we aim to solve the optimization problem $\mathsf{PO}$ numerally with different numbers of samples $N$. For the inner minimization problem, it must be solved accurately so that it precisely reflect the approximation ratio of the order statistics mechanism. We use binary search to find its optimum. When we need to check whether
\[\inf_{x\in [0,1)} \frac{\int_0^x q(t)\cdot t(1-x)\mathop{dt} + \int_x^1 q(t)\cdot (1-t)x\mathop{dt} + (1-x)}{1-x^2} \ge r\]

It is equivalent to check if 
\[{\int_0^x q(t)\cdot t(1-x)\mathop{dt} + \int_x^1 q(t)\cdot (1-t)x\mathop{dt} + (1-x)} - r(1-x^2) \ge 0 \quad \forall x \in [0, 1]\]

Notice that $q(t)$ is a polynomial of degree $N$, thus we only need to find the minimum of a single-variable polynomial over $[0, 1]$, and this could be efficiently done by finding the roots of its derivatives. We do the binary search for $100$ times, so the error caused by binary search is at most $2^{-100}$, which is much smaller than the floating point errors and can be ignored. Then we use some empirical algorithms to search for parameters in the outer maximization problem. Unlike the inner minimization problem, we do not need to get an exact optimum of the outer maximization problem since it only reflects the ratio of the order statistic mechanism we have found. The code can be found at our \href{https://github.com/BilateralTradeAnonymous/On-the-Optimal-Fixed-Price-Mechanism-in-Bilateral-Trade}{GitHub repository}\begin{footnotesize}(\texttt{https:\\//github.com/BilateralTradeAnonymous/On-the-Optimal-Fixed-Price-Mechanism-in-Bilateral-Trade})\end{footnotesize}.

\subsection{General Instance}
\label{appendix:generalexperiment}

Now again we formally write down the optimization problem that characterizes the optimal order statistic mechanism with $N$ samples. Similarly, as we proved in Section~\ref{subsec:generalsmallsample}, suppose the following optimization problem $\mathsf{QO}$ achieves its maximum $\opt_{\mathsf{QO}}$ at $(P_1^{*}, P_2^{*}, \dots, P_n^{*})$. Let $P^*$ be the distribution over $[N]$ such that $\Pr_{x\sim P^*}[x = i] = P_i^*$. Then, $P^*$ is the optimal order statistic mechanism in the general setting with $N$ samples and its approximation ratio  is exactly $\opt_{\mathsf{QO}}$. Again notice that since the c.d.f. of the order statistic mechanism $Q(x)$ is differentiable, we use $q(t)\, dt$ instead of $dQ(t)$ where $q(x)$ is the p.d.f. of the order statistic mechanism.

\begin{center}
\begin{tabular}{|c|}
\hline 
The Optimization Problem $\mathsf{QO}$ \\
\hline
\parbox{13cm}{
\begin{align}
\max_{P_1,\dots, P_N} \quad &{\min_{x\in [0,1]}{\int_0^x q(t)\cdot t\mathop{dt} +  1-x}}\nonumber \\
\textsf{s.t.} \quad  &
q(x) = \sum_{i=1}^N P_i \cdot \pdf(x)\nonumber\\
& \pdf(x) = N\cdot \binom{N-1}{i-1}\cdot x^{i-1} (1-x)^{n-i} \quad \forall i\in [N] \nonumber\\
&\sum_{i=1}^n P_i = 1 \label{eq:APiSum1}\\
&P_i \geq 0 \quad \forall i\in [N] \label{eq:APiNonZero}
\end{align}
} \\
\hline 
\end{tabular}
\end{center}

Now we solve the optimization problem numerically for different fixed number $N$. Again the inner minimization problem could be solved efficiently by calculating the zero point of its derivatives and we search for the parameters in the outer maximization problem to get a good enough solution. The code could be found at our \href{https://github.com/BilateralTradeAnonymous/On-the-Optimal-Fixed-Price-Mechanism-in-Bilateral-Trade}{Github repository}.

\end{document}